\title{A Novel Mode Switching Scheme Utilizing Random Beamforming for Opportunistic Energy Harvesting
       \footnote{This paper has been presented in part at IEEE Wireless Communications and Networking Conference (WCNC), Shanghai, China, April 7-10, 2013.}
       \footnote{The authors are with the Department of Electrical and Computer Engineering, National University of Singapore (e-mail: elejhs@nus.edu.sg, elezhang@nus.edu.sg).}
       }
\author{Hyungsik Ju and Rui Zhang}
\begin{document}

\maketitle \thispagestyle{empty}

\begin{abstract}
Since radio signals carry both energy and information at the same time, a unified study on simultaneous wireless information and power transfer (SWIPT) has recently drawn a significant attention for achieving wireless powered communication networks. In this paper, we study a multiple-input single-output (MISO) multicast SWIPT network with one multi-antenna transmitter sending common information to multiple single-antenna receivers simultaneously along with opportunistic wireless energy harvesting at each receiver. From the practical consideration, we assume that the channel state information (CSI) is only known at each respective receiver but is unavailable at the transmitter. We propose a novel receiver mode switching scheme for SWIPT based on a new application of the conventional random beamforming technique at the multi-antenna transmitter, which generates artificial channel fading to enable more efficient energy harvesting at each receiver when the received power exceeds a certain threshold. For the proposed scheme, we investigate the achievable information rate, harvested average power and power outage probability, as well as their various trade-offs in quasi-static fading channels. Compared to a reference scheme of periodic receiver mode switching without random transmit beamforming, the proposed scheme is shown to be able to achieve better rate-energy trade-offs when the harvested power target is sufficiently large. Particularly, it is revealed that employing one single random beam for the proposed scheme is asymptotically optimal as the transmit power increases to infinity, and also performs the best with finite transmit power for the high harvested power regime of most practical interests, thus leading to an appealing low-complexity implementation. Finally, we compare the rate-energy performances of the proposed scheme with different random beam designs.
\end{abstract}

\begin{keywords}
Simultaneous wireless information and power transfer (SWIPT), multicast, wireless power, energy harvesting, time switching, multi-antenna system, random beamforming, rate-energy trade-off, power outage.
\end{keywords}

\setlength{\baselineskip}{1.3\baselineskip}
\newtheorem{definition}{\underline{Definition}}[section]
\newtheorem{fact}{Fact}
\newtheorem{assumption}{Assumption}
\newtheorem{theorem}{\underline{Theorem}}[section]
\newtheorem{lemma}{\underline{Lemma}}[section]
\newtheorem{corollary}{\underline{Corollary}}[section]
\newtheorem{proposition}{\underline{Proposition}}[section]
\newtheorem{example}{\underline{Example}}[section]
\newtheorem{remark}{\underline{Remark}}[section]
\newtheorem{algorithm}{\underline{Algorithm}}[section]
\newcommand{\mv}[1]{\mbox{\boldmath{$ #1 $}}}

\section{Introduction}\label{Intorduction}
Conventionally, fixed energy supplies (e.g. batteries) are employed to power energy-constrained wireless networks, such as sensor networks. The lifetime of the network is typically limited, and is thus one of the most important considerations for designing such networks. To prolong the network's operation time, energy harvesting has recently attracted a great deal of attention since it enables scavenging energy from the environment and potentially provides unlimited power supplies for wireless networks.

Among other commonly used energy sources (e.g. solar and wind), radio signals radiated by ambient transmitters have drawn an upsurge of interest as a viable new source for wireless energy harvesting. Harvesting energy from radio signals has already been successfully implemented in applications such as passive radio-frequency identification (RFID) systems and body sensor networks (BSNs) for medical implants. More interestingly, wireless energy harvesting opens an avenue for the joint investigation of simultaneous wireless information and power transfer (SWIPT) since radio signals carry energy and information at the same time. SWIPT has recently been investigated for various wireless channels, e.g., the point-to-point additive white Gaussian noise (AWGN) channel \cite{Zhou}, the fading AWGN channel \cite{Liu}-\cite{Caspers}, the multi-antenna channel \cite{Zhang}-\cite{Park}, the relay channel \cite{Gurakan}, \cite{Narir}, and the multi-carrier based broadcast channel \cite{Ng}-\cite{Zhou2}.

To achieve maximal wireless energy transfer (WET) and wireless information transfer (WIT) simultaneously, one key challenge is to develop efficient and pragmatic receiver architectures to enable information decoding (ID) and energy harvesting (EH) from the same received signal at the same time \cite{Zhou}, \cite{Caspers}. Practically, two suboptimal receiver designs for SWIPT have been proposed in \cite{Zhang} based on the principle of orthogonalizing ID and EH processes, namely \emph{power splitting} and \emph{time switching}. The power splitting scheme splits the received signal into two streams of different power for ID and EH separately, while the time switching scheme switches the receiver between an ID mode and an EH mode from time to time. The optimal switching rules between ID versus EH modes for a point-to-point single-antenna fading channel subject to the co-channel interference have been derived in \cite{Liu} to maximize/minimize the information transmission rate/outage probability given an average harvested energy target. It was shown in \cite{Liu} that the time-fluctuation or fading of wireless channels is indeed beneficial for receiver mode-switching (time-switching) based SWIPT systems, where an ``opportunistic'' energy harvesting scheme is proved to be optimal, i.e., the receiver should switch to the EH mode when the channel power is larger than a certain threshold, and to the ID mode otherwise. Intuitively, this phenomenon can be explained as follows. Note that the received energy (in Joule) and amount of information (in bits) both scale linearly with time, but linearly and sub-linearly (logarithmically) with power, respectively; as a result, given the same signal energy for EH at receiver, it is desirable to have more significant power fluctuations such that a given target energy can be harvested during shorter peak-power periods, thus resulting in more time for receiving a higher amount of information (with the same energy left for ID).

In this paper, we further investigate the time-switching based SWIPT system in a multicast scenario, where one multi-antenna transmitter (Tx) broadcasts both energy and common information to multiple single-antenna receivers (Rxs) simultaneously over quasi-static multiple-input single-output (MISO) flat-fading channels, as shown in Fig. \ref{Fig_MulticastSWIPT}. We assume that Tx has an unlimited energy supply that provides constant transmit power while all Rxs have only limited energy sources  (e.g., rechargeable batteries) and thus need to replenish energy from the signals broadcast by Tx. Each Rx harvests energy and decodes information from the received signal via time switching, i.e., it can either decode information \emph{or} harvest energy from the received signal at any time, but \emph{not both}. It is worth noting that the number of Rxs in the network can be arbitrarily large, and thus it may not be practically feasible for Tx to gather the instantaneous channel state information (CSI) from all Rxs via dedicated feedback since this will increase the system complexity and overhead drastically with the increasing number of Rxs. Therefore, in this paper we consider a practical setup where the MISO channels from Tx to different Rxs are only known at each respective Rx but unavailable at Tx.

\begin{figure}
   \centering
   \includegraphics[width=0.45\columnwidth]{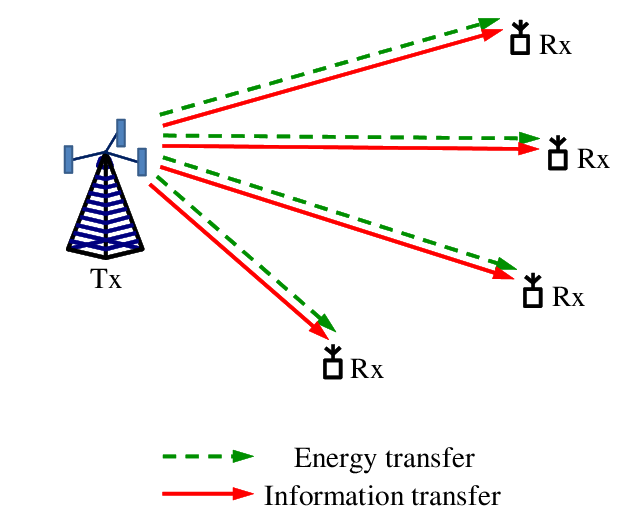}
   \caption{A MISO multicast network for SWIPT.}
   \label{Fig_MulticastSWIPT}
\end{figure}

In order to optimize the rate-energy (R-E) trade-offs achievable at each Rx, inspired by the result on the beneficial time-variation of fading channels for time-switching based SWIPT systems \cite{Liu}, in this paper we propose a new application of the celebrated ``random beamforming'' technique at the multi-antenna transmitter to generate artificial channel variations at each receiver to opportunistically harvest energy when the channel power exceeds a given threshold and decode information otherwise. This is realized by partitioning each transmission block with constant user channels into sub-blocks with equal duration in which independent random beams (RBs) are applied to generate artificial channel fading. Note that the use of random beamforming in this paper is motivated differently from that in the conventional setup for broadcasting with WIT only, which aims at achieving asymptotically interference-free independent information transmissions to multiple receivers in multi-antenna broadcast channels by exploiting multi-user diversity based partial channel feedback and transmission scheduling as the number of receivers increases to infinity \cite{Viswanath}, \cite{Sharif}. In contrast, for multicast SWIPT systems under our investigation, random beamforming is employed for generating artificial time-variation of channels to achieve better R-E trade-offs with time-switching receivers.

The main results of this paper are summarized as follows:

\begin{itemize}
   \item We propose a novel design with transmitter random beamforming and receiver time switching for MISO multicast SWIPT systems. We first characterize the performance trade-offs between WET and WIT by investigating the achievable rate and harvested power pair in a given transmission block with constant MISO AWGN channels, assuming Gaussian distributed random beams. Furthermore, we compare the R-E performance of our proposed scheme with that of a reference scheme with receiver periodic switching between ID and EH modes, but without random beamforming applied at Tx.

   \item We then extend our analysis for the MISO AWGN channel to MISO Rayleigh fading channel. We investigate the achievable average information rate and average harvested power at each Rx, and characterize their asymptotic trade-offs when the transmit power goes to infinity. It is shown that employing one single random beam for the proposed scheme achieves the best R-E trade-off asymptotically and also outperforms that of periodic switching.

   \item When Rx consumes significant amount of power at each block and/or the capacity of its energy storage device is limited, it may suffer from power shortage unless the amount of harvested power in each block is larger than a certain requirement. We thus study the ``power outage probability'' of the proposed scheme in fading MISO channels, which is also compared to that of the periodic switching in both asymptotic and finite transmit power regimes.

   \item In practice, transmit power is preferably to be constant for the maximal operation efficiency of transmitter amplifiers. However, the use of Gaussian distributed random beams for the proposed scheme can cause large transmit power fluctuations. We thus propose alternative random beam designs with constant transmit power, for which the R-E performance is characterized and compared with the case of Gaussian random beam.
\end{itemize}

The rest of this paper is organized as follows. Section \ref{Sec:SystemModel} introduces the proposed scheme as well as the reference scheme of periodic switching, and compare their harvested power and achievable information rate for one single block with the AWGN MISO channel. Section \ref{Sec:PerformanceAnalysis} investigates the R-E performances of the proposed and reference schemes in Rayleigh fading MISO channels. Section \ref{Sec:OtherBeams} compares the performances of the proposed scheme with different random beam designs. Finally, Section \ref{Sec:Conclusion} concludes the paper.

\emph{Notations:} In this paper, matrices and vectors are denoted by bold-face upper-case letters and lower-case letters, respectively. ${{\bf{I}}_N}$ denotes an $N \times N$ identity matrix and ${{\bf{0}}}$ represents a matrix with all zero entries. The distribution of a circularly symmetric complex Gaussian (CSCG) random vector with mean vector $\boldsymbol{\mu}$ and covariance matrix ${\boldsymbol{\Sigma }}$ is denoted by ${\mathcal{CN}}( {\boldsymbol{\mu} ,{\boldsymbol{\Sigma }}})$, and $ \sim $ stands for ``distributed as". ${{\mathbb{C}}^{a \times b}}$ and ${{\mathbb{R}}^{a \times b}}$ denote the spaces of $a \times b$ matrices with complex and real entries, respectively. $\left\| {\bf{z}} \right\|$ denotes the Euclidean norm of a complex vector $\bf{z}$. ${\mathbb{E}}\left[  \cdot  \right]$ represents the statistical expectation.

\section{System Model}\label{Sec:SystemModel}
As shown in Fig. \ref{Fig_MulticastSWIPT}, we consider a MISO multicast SWIPT system consisting of one Tx and multiple Rxs, e.g., sensors. Since Tx broadcasts a common signal to all Rxs, in this paper we focus on one particular Tx-Rx pair as shown in Fig. \ref{Fig_SystemModel} for the purpose of exposition, while the effect of multiuser channels on the performance of the considered system will be evaluated by simulation in Section \ref{Sec:PerformanceAnalysis}. We assume that Tx is equipped with $N_t > 1$ antennas and Rx is equipped with one single antenna. It is also assumed that the MISO channel from Tx to Rx follows quasi-static flat-fading, where the channel remains constant during each block transmission time, denoted by $T$, but varies from one block to another. It is further assumed that the channel in each block is perfectly known at Rx, but unknown at Tx.

\begin{figure}
   \centering
   \includegraphics[width=0.5\columnwidth]{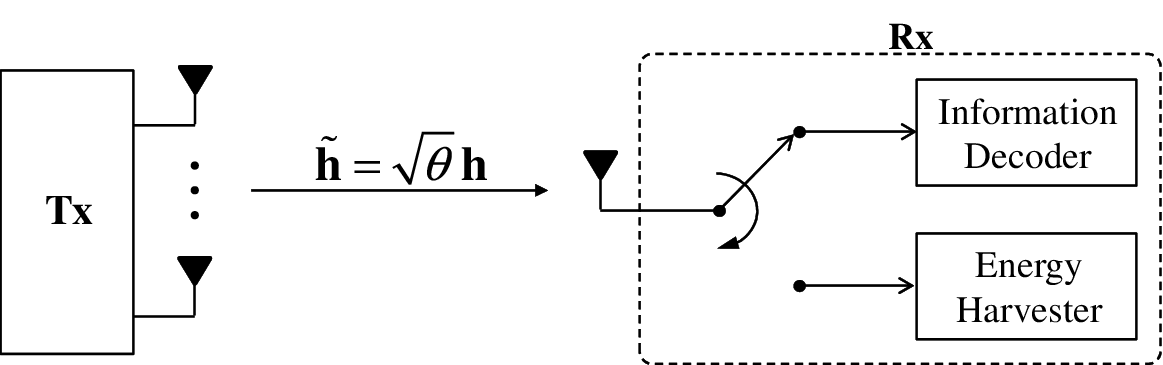}
   \caption{A MISO wireless system for SWIPT via receiver mode switching.}
   \label{Fig_SystemModel}
\end{figure}

The transmitted signal at the $i$th symbol interval in the $t\,$th transmission block is denoted by ${\bf{x}}_t \left( i \right) \in {{\mathbb{C}}^{{N_t} \times 1}}$. The covariance matrix of the transmitted signal is thus given by ${{\bf{S}}_{t,\,\bf{x}}} = \mathbb E [ {{\bf{x}}_t\left( i \right){\bf{x}}_t^{H}{{\left( i \right)}}} ] = \frac{P}{{{N_t}}}{{\bf{I}}_{{N_t}}}$, where $P$ denotes the constant transmit power, which is assumed to be equally allocated among $N_t$ transmit antennas. In addition, the MISO channel from Tx to Rx in the $t\,$th transmission block is denoted by ${\tilde{\bf{h}}}_t \in {{\mathbb{C}}^{{N_t} \times 1}}$, which is constant during each block. Without loss of generality, the MISO channel ${\bf{\tilde h}}_t$ can be modeled as ${\bf{\tilde h}}_t = \sqrt{\theta}\,  {\bf{h}}_t$, where $\theta$ and ${\bf{h}}_t \in \mathbb C^{N_t \times 1}$ denote the signal power due to distance-dependent attenuation and large-scale channel fading (assumed to be constant over all $t$'s for the time being) and the MISO channel due to small-scale channel fading in the $t\,$th block, respectively. The received signal at Rx is then expressed as
\begin{equation}\label{Eq_ReceivedSignal_General}
   {\begin{array}{l}
   {y_t}\left( i \right) = {\bf{\tilde h}}_t^T{{\bf{x}}_t}\left( i \right) + {z_t}\left( i \right) \\
   \,\,\,\,\,\,\,\,\,\,\,\,\,\,\, = \sqrt{\theta} {\bf{h}}_t^T{{\bf{x}}_t}\left( i \right) + {z_t}\left( i \right), \\
 \end{array}}
\end{equation}
where ${y}_t\left( i \right)$ and ${z}_t\left( i \right)$ denote the received signal and noise at Rx, respectively; it is assumed that ${z}_t\left( i \right) \sim {\mathcal{CN}}\left( {0,\sigma^2} \right)$, which is independent over both $t$ and $i$. In addition, since we can consider one block of interest without loss of generality, the block index $t$ will be omitted in the sequel for notational brevity.

In each block, Tx aims at achieving SWIPT to Rx. It is assumed that Rx is equipped with a rechargeable battery to store the energy harvested from the received signal, which is used to provide power to its operating circuits. Specifically, Rx harvests energy from the received signals when it is in the EH mode, while it decodes information in the ID mode. We assume that Rx switches between ID mode and EH mode as in \cite{Liu} and \cite{Zhang} since it is difficult yet to use the received signal for both ID and EH at the same time due to practical circuit limitations \cite{Zhou}. As in \cite{Liu}, ID mode and EH mode are represented by defining an indicator function as
\begin{equation}\label{Eq_ModeSelection}
   {\rho  = \left\{ {\begin{array}{*{20}{c}}
   {1,}  \\
   {0,}  \\
   \end{array}\begin{array}{*{20}{c}}
   {\,\,\,{\rm{ID}}\,\,{\rm{mode}}\,\,{\rm{is}}\,\,{\rm{active}}}  \\
   {\,\,\,{\rm{EH}}\,\,{\rm{mode}}\,\,{\rm{is}}\,\,{\rm{active.}}}  \\
   \end{array}} \right.}
\end{equation}

We consider two time switching schemes, namely ``\emph{periodic switching} (PS)'' and ``\emph{threshold switching} (TS)'' as elaborated next.

   \subsection{Reference Scheme: Periodic Switching}\label{Ref_OOS}

   \begin{figure}
      \centering
      \includegraphics[width=0.52\columnwidth]{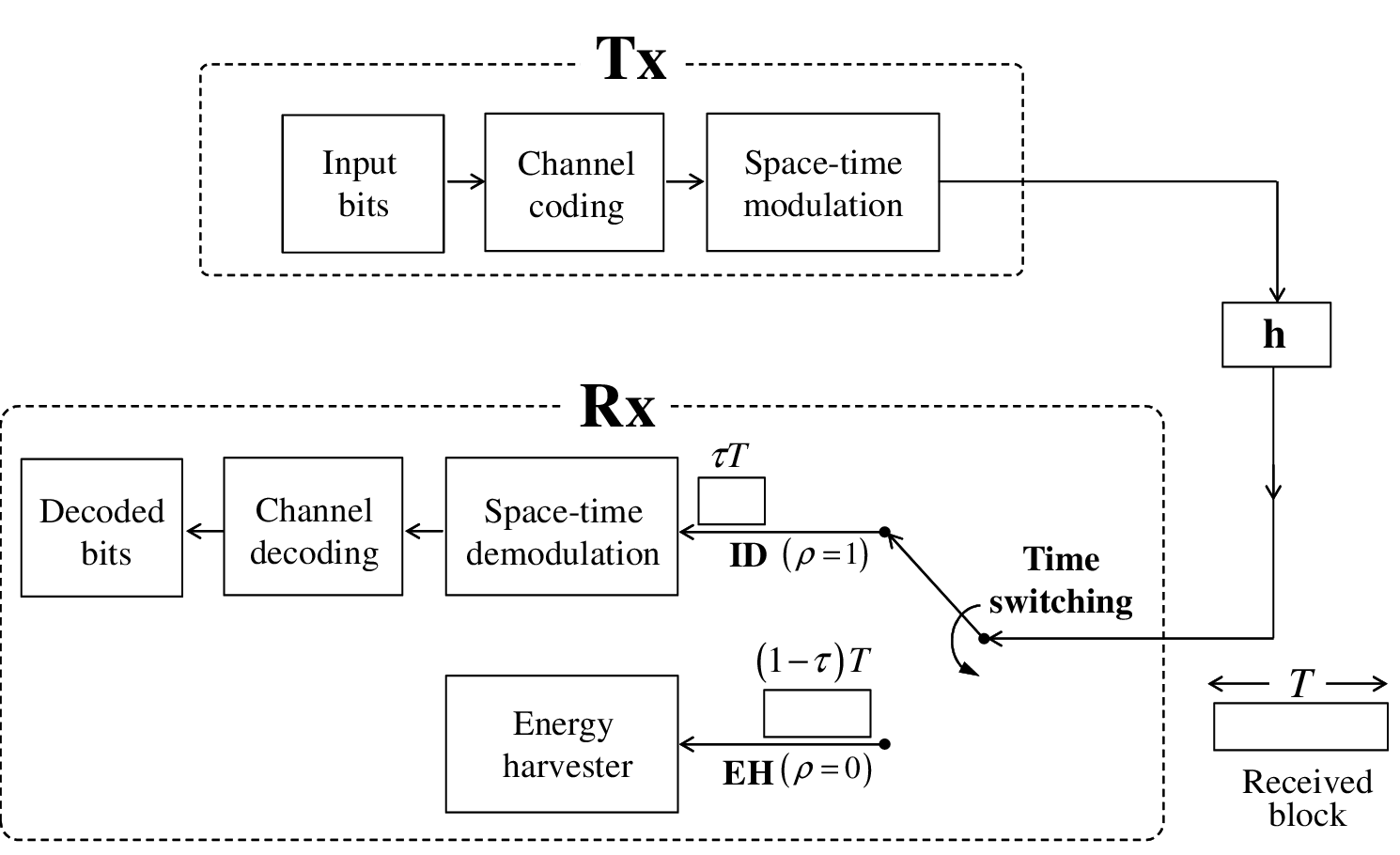}
      \caption{Transmitter and receiver structures for periodic switching (PS).}
      \label{Fig_OOS_SystemModel}
   \end{figure}

   As shown in Fig. \ref{Fig_OOS_SystemModel}, with PS, Rx sets $\rho  = 1$ during the first $\tau T$ amount of time in each transmission block, with $0 \le \tau \le 1$, and $\rho  = 0$ for the remaining block duration $(1-\tau)T$.\footnote{Ideally, with a given time allocation $\tau$, setting $\rho = 1$ or $0$ at the beginning of each block will not change the system performance; however, setting $\rho = 1$ initially is practically more favorable for Rx to implement block-wise time synchronization.} For given ${\bf{h}}$ and $\tau$, the amount of harvested energy normalized by $T$, i.e., \emph{average harvested power}, in a transmission block can be derived using ${{\bf{S}}_{\bf{x}}}$ as
   \begin{equation}\label{Eq_Energy_TimeSharing}
      {\begin{array}{l}
      {Q^{(\rm{P})}}\left( {H,\tau } \right) = \left( {1 - \tau } \right)\zeta  {\mathbb{E}}\left[ {{{\left\| {\sqrt{\theta} \, {{\bf{h}}^T}{\bf{x}}\left( i \right)} \right\|}^2}} \right] \\
      \,\,\,\,\,\,\,\,\,\,\,\,\,\,\,\,\,\,\,\,\,\,\,\,\,\,\,\,\,\,\, = \left( {1 - \tau } \right)\zeta \theta PH, \\
      \end{array}}
   \end{equation}
   where $H = \frac{1}{{{N_t}}}\left\| {\bf{h}} \right\|^2$ is the normalized average channel power, and $0 < \zeta \le 1$ is a constant reflecting the loss in the energy transducer when the harvested energy is converted to electrical energy to be stored. In (\ref{Eq_Energy_TimeSharing}), it has been assumed that the power harvested due to the receiver noise is negligible and thus is ignored. It is further assumed that $\zeta = 1$ in the sequel for notational brevity.

   The structure of Tx for PS is also shown in Fig. \ref{Fig_OOS_SystemModel}. Note that with PS, Rx can adjust $\tau$ based on its energy and rate requirements, as well as the channel condition. Since Tx keeps sending information symbols while Rx determines $\tau$ for switching between ID and EH modes based on its own channel quality, Rx observes an erasure AWGN channel and thus the erasure code \cite{Alon} should be employed at Tx for channel coding.\footnote{This is especially useful for the multicast network, where receivers can set different values of $\tau$ for decoding common information sent by the transmitter, based on their individual channel conditions and energy requirements.} The bit stream to be transmitted during a transmission block is thus first encoded by an erasure code. Space-time (ST) code is then applied to modulate the output bits from the erasure-code encoder, and the modulated symbols are transmitted by $N_t$ antennas. We consider a ST code of length $L$, denoted by matrix ${\bf{X}}^{(\rm{P})} \in {{\mathbb{C}}^{L \times {N_t}}}$. It is assumed that ${\bf{X}}^{(\rm{P})}$ is a capacity-achieving ST code.\footnote{Alamouti code \cite{Alamouti} is known as the capacity-achieving ST code when $N_t = 2$. For $N_t >2$, capacity-achieving ST code has not yet been found in general. In this paper, however, capacity-achieving ST code is assumed even when $N_t > 2$ to provide a performance upper bound for the system under consideration.} Tx transmits a sequence of ${\bf{X}}^{(\rm{P})}$'s in each transmission block. Considering ${\bf{X}}^{(\rm{P})}$ with $L$ consecutive transmitted symbols from each antenna, (\ref{Eq_ReceivedSignal_General}) is modified as
   \begin{equation}\label{Eq_ReceivedSignal_OOS}
      {{\bf{y}} = \sqrt{\theta}\,{{\bf{X}}^{({\rm{P}})}}{\bf{h}} + {\bf{z}},}
   \end{equation}
   where ${\bf{y}} \in {{\mathbb{C}}^{L \times 1}}$ and ${\bf{z}} \in {{\mathbb{C}}^{L \times 1}}$ denote the received signal vector and noise vector, respectively, and ${\bf{z}} \sim { \mathcal{CN}}\left( {{\bf{0}},\sigma^2 {{\bf{I}}_L}} \right)$. Since ${\bf{X}}^{(\rm{P})}$ is assumed to be a capacity-achieving ST code, the achievable rate of the channel in (\ref{Eq_ReceivedSignal_OOS}) can be shown equivalent to that of a MISO channel $\tilde{\bf{h}} = \sqrt{\theta}\,{\bf{h}}$ with input covariance matrix ${{\bf{S}}_{\bf{x}}} = \frac{P}{{{N_t}}}{{\bf{I}}_{{N_t}}}$. Assume that the number of ST coded blocks transmitted in each block is sufficiently large such that $\tau T$ is approximately an integer number of the ST block durations for any value of $\tau$. For given ${\bf{h}}$ and $\tau$, the information rate for PS can thus be expressed as
   \begin{equation}\label{Eq_Rate_TimeSharing}
      {{R^{(\rm{P})}}\left( H, \tau \right) = \tau{\log _2}\left( {1 + \frac{\theta P H}{\sigma^2}} \right),}
   \end{equation}
   Note that ${R^{(\rm{P})}}\left( H, \tau \right)$ is achievable when $N_t \le 2$, but is in general an upper bound on the achievable rate when $N_t > 2$ for given $\bf{h}$ and $\tau$.

   \subsection{Proposed Scheme: Threshold Switching}\label{Prof_TBS}
   \begin{figure}
      \centering
      \includegraphics[width=0.52\columnwidth]{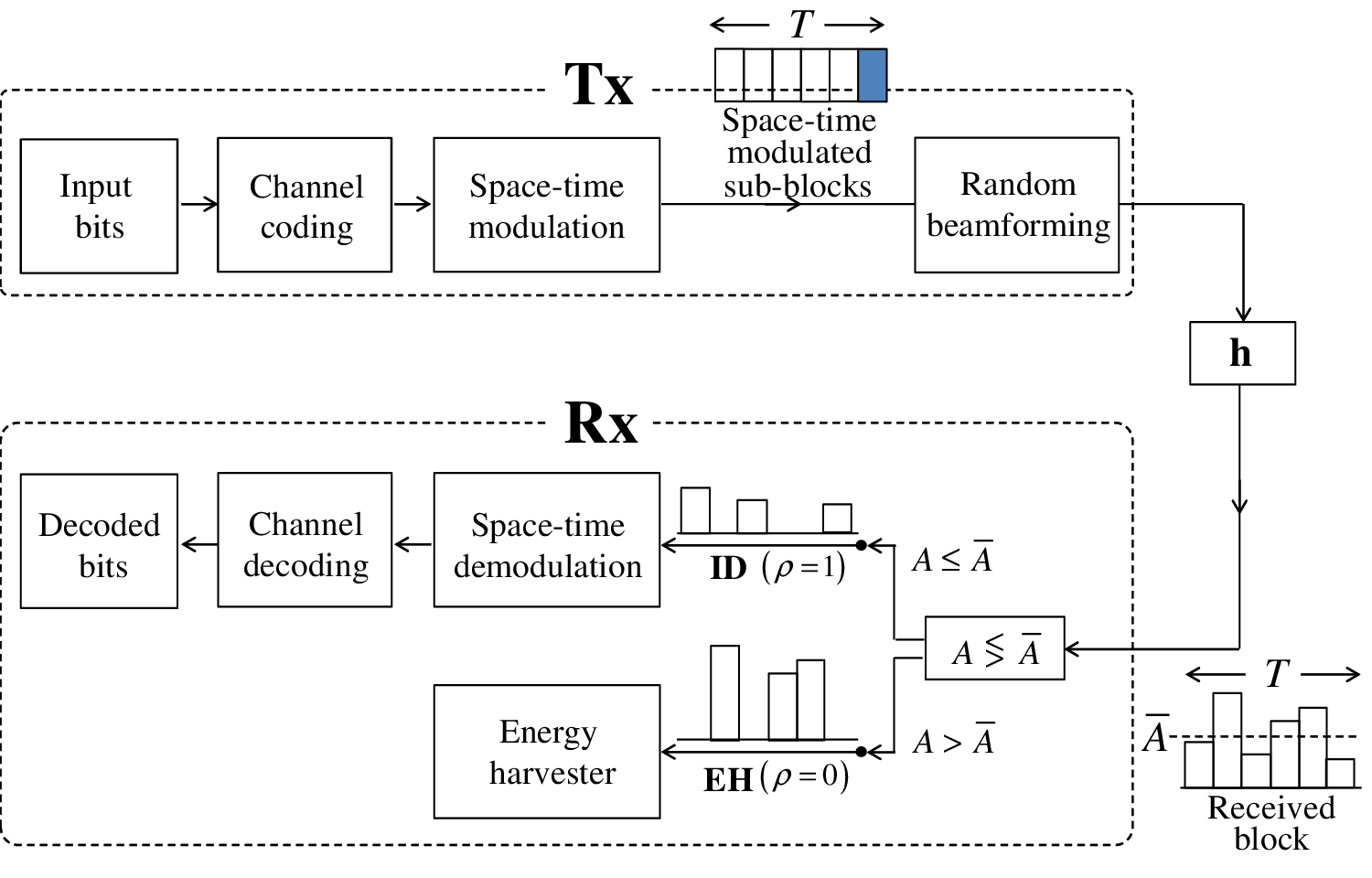}
      \caption{Transmitter and receiver structures for threshold switching (TS).}
      \label{Fig_TBS_SystemModel}
   \end{figure}

   As shown in Fig. \ref{Fig_TBS_SystemModel}, the TS scheme is designed to take advantage of the received signal power fluctuations induced by transmit random beamforming within each transmission block for opportunistic EH/ID mode switching, even with a constant MISO channel $\bf{h}$. For this purpose, each transmission block is further divided into $K$ sub-blocks each consisting of one or more ST codewords, and artificial channel fading over different sub-blocks is generated by multi-antenna random beamforming at Tx.

   Furthermore, at the $k$th sub-block, $k = 1, \cdots, K$, Rx determines whether to switch to ID mode or EH mode based on $A\left( k \right)$, which denotes the channel power at the $k$th sub-block normalized by $\theta$ and $P$ (to be specified later). According to \cite{Liu}, in the presence of received channel power fluctuations, the optimal mode switching rule that achieves the optimal trade-off between the maximum harvested energy and information rate in a transmission block is given by
   \begin{equation}\label{Eq_Opt_1_Solution}
      {\rho \left( k \right) = \left\{ {\begin{array}{*{20}{c}}
      {1,}  \\
      {0,}  \\
      \end{array}\begin{array}{*{20}{c}}
      {\,\,\,{\rm{if}}\,\,A\left( k \right) \le \bar A}  \\
      {\,\,\,{\rm{otherwise}},}  \\
      \end{array}} \right.}
   \end{equation}
   where $\bar A \ge 0$ is a pre-designed threshold on the normalized channel power $A(k)$. It is noted that choosing EH or ID mode at the $k$th sub-block is determined by the normalized channel power $A\left( k \right)$ as compared to the threshold $\bar A$, or equivalently the received signal power $\theta PA\left(k\right)$ as compared to the threshold $\theta P \bar A$; thus, ID mode is selected, i.e., $\rho \left( k \right) = 1$, if the received signal power is no greater than $\theta P \bar {A}$ and EH mode is selected, i.e., $\rho \left( k \right) = 0$, otherwise.

   Artificial channel fading over sub-blocks is generated at Tx by using $N$ RBs simultaneously, $1 \le N \le N_t$. Denote the $n$th RB at the $k$th sub-block as ${\boldsymbol{\phi} _n}\left( k \right) \in {{\mathbb C}^{{N_t} \times 1}}$, where ${\mathbb{E}} [ {{\boldsymbol{\phi} _n}\left( k \right){\boldsymbol{\phi} _{n}^{H}}{{\left( {k} \right)}}} ] = \frac{1}{N_t}{{\bf{I}}_{{N_t}}}$ and ${\mathbb{E}} [ {{\boldsymbol{\phi} _n}\left( k \right){\boldsymbol{\phi} _{m}^{H}}{{\left( {j} \right)}}} ] = {{\bf{0}}}$ if $k \ne j$ and/or $n \ne m$. Then it follows that $A\left( k \right) = \frac{1}{N}{\left\| {{{\bf{a}}}\left( k \right)} \right\|^2}$, where ${\bf{a}}(k) = {\bf{\Phi}}^T {\left( k \right)}{\bf{h}} \in \mathbb C^{N \times 1}$ is the equivalent MISO channel at the $k$th sub-block generated by ${\bf{\Phi}} \left( k \right) = [ {{\boldsymbol{\phi} _1}( k )\,\,{\boldsymbol{\phi} _2}( k )\,\, \cdots \,\,{\boldsymbol{\phi} _N}( k )} ]$, which is assumed to be a pre-designed pseudo random sequence and known to all Rxs.\footnote{Each Rx can estimate ${\bf{a}}\left( k \right)$'s without knowledge of ${\bf{\Phi}} \left( k \right)$'s by employing conventional channel estimation over all sub-blocks. However, such an implementation incurs high training overhead. When ${\bf{\Phi}} \left( k \right)$'s are assumed to be known at all Rxs, however, each Rx only needs to estimate ${\bf{h}}$ at the beginning of each block to obtain ${\bf{a}}(k)$'s and thus the overhead for channel estimation can be significantly reduced.}

   Similarly to PS, the erasure code should be employed in the case of TS for channel coding since the set of sub-blocks used for ID according to (\ref{Eq_Opt_1_Solution}) are in general randomly distributed within a transmission block with $\bar A > 0$, and thus the resulting channel from Tx to Rx in ID mode can be modeled by an erasure AWGN channel. In addition, the ST code is applied over $N$ RBs with TS instead of $N_t$ antennas with PS. This is because the use of $N$ RBs transforms the $N_t \times 1$ constant MISO channel ${{\bf{h}}}$ into an $N \times 1$ fading MISO channel specified by ${\bf{a}}\left( k \right)$'s in each transmission block. For all $K$ sub-blocks in TS, we consider the use of a ST code of length $L$ denoted by matrix ${{\bf{X}}^{({\rm{T}})}} \in {{\mathbb{C}}^{{L} \times N}}$. For convenience, we express ${{\bf{X}}^{({\rm{T}})}} = [ {{\bf{x}}_1^{({\rm{T}})}\,\,{\bf{x}}_2^{({\rm{T}})}\,\, \cdots \,\,{\bf{x}}_L^{({\rm{T}})}} ]^T$, where ${{\bf{x}}_l^{({\rm{T}})}} \in {{\mathbb{C}}^{{N} \times 1}}$, $1 \le l \le L$, denotes the $l$th transmitted signal vector in each ST coded block. The covariance matrix for ${\bf{x}}_l^{({\rm{T}})}$ is given by ${\bf{S}}_{{\bf{x}},l}^{(\rm{T})} = \mathbb E [ {{\bf{x}}_l^{({\rm{T}})}{{( {{\bf{x}}_l^{({\rm{T}})}} )}^H}} ] = \frac{P}{N}{{\bf{I}}_N}$, $\forall l$, to be consistent with ${{\bf{S}}_{\bf{x}}} = \frac{P}{{{N_t}}}{{\bf{I}}_{{N_t}}}$. Similar to ${\bf{X}}^{(\rm{P})}$ in the case of PS, ${{\bf{X}}^{(\rm{T})}}$ is assumed to be a capacity-achieving ST code for an equivalent MISO channel with $N$ transmitting antennas.

   The received signal at each sub-block is used for either energy harvesting or information decoding according to (\ref{Eq_Opt_1_Solution}). For the $k$th sub-block, the received signal can thus be expressed by modifying (\ref{Eq_ReceivedSignal_General}) as
   \begin{equation}\label{Eq_RCS_Received_Signal}
      {\begin{array}{l}
      {\bf{y}}\left( k \right) = {{\bf{X}}^{(\rm{T})}} {\Phi ^{T}} \left( k \right) {\tilde{\bf{h}}} + {\bf{z}}\left( k \right) \\
      \,\,\,\,\,\,\,\,\,\,\,\,\,\, = \sqrt{\theta}\,{{\bf{X}}^{(\rm{T})}} {\bf{a}}{\left( k \right)} + {\bf{z}}\left( k \right),\\
      \end{array}}
   \end{equation}
   where ${\bf{y}}\left( k \right) \in {{\mathbb C}^{L \times 1}}$ and ${\bf{z}}\left( k \right) \in {{\mathbb C}^{L \times 1}}$ denote the received signal and noise vectors, respectively, with ${\bf{z}}\left( k \right) \sim {\mathcal{CN}}\left( {{\bf{0}},\sigma^2 {{\bf{I}}_L}} \right)$. When $\rho \left( k \right) = 0$, the amount of harvested power (i.e., harvested energy normalized by sub-block duration $T/K$) at the $k$th sub-block is derived using ${\bf{S}}_{{\bf{x}},l}^{(\rm{T})}$ as
   \begin{equation}\label{Eq_Energy_TBS1}
      {{Q^{(\rm{T})}}\left( k \right) = \frac{1}{L}{\mathbb E}\left[ {{{\left\| \sqrt{\theta}\,{{\bf{X}}^{(\rm{T})}}{{\bf{a}}{{\left( k \right)}}} \right\|}^2}} \right] = \theta P{A\left( k \right)}.}
   \end{equation}
   Furthermore, by assuming a capacity-achieving ST code, the achievable rate with TS at the $k$th sub-block when $\rho \left( k \right) = 1$ can be expressed as
   \begin{equation}\label{Eq_Rate_TBS1}
      {{R^{(\rm{T})}}\left( k \right) = {\log _2}\left( {1 + \frac{\theta P A\left( k \right)}{\sigma^2}} \right).}
   \end{equation}

   \begin{figure}
      \centering
      \includegraphics[width=0.6\columnwidth]{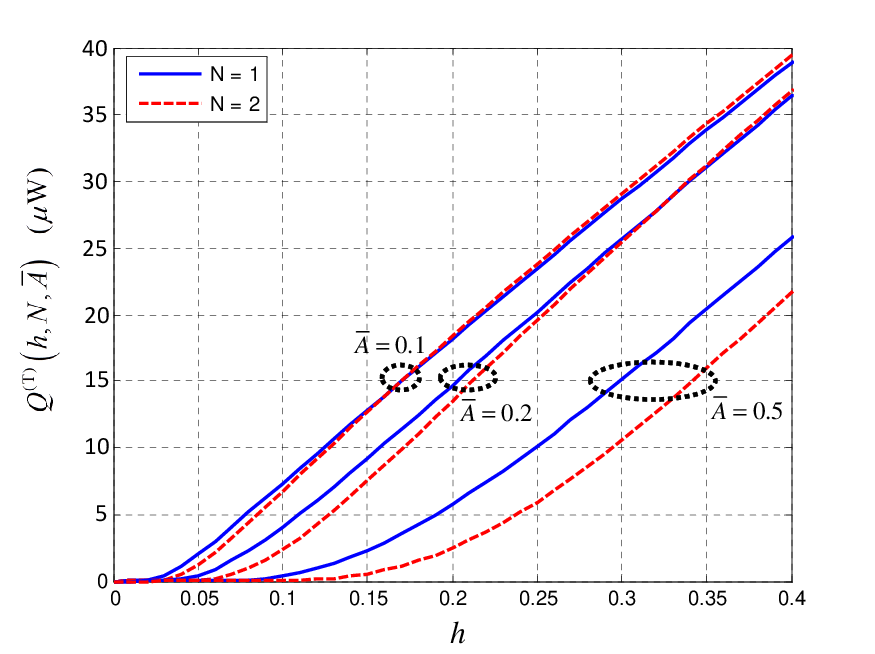}
      \caption{${Q^{(\rm{T})}}\left( {h,N,\bar A} \right)$ vs. $h$ with $P = 30$dBm, $N = 1, 2$, $\theta = 10^{-4}$, and $\bar A = 0.1$, $0.2$, $0.5$.}
      \label{Fig_HarvestedEnergy_vs_H}
   \end{figure}

   The amount of harvested energy in a transmission block is the sum of the energy harvested from all sub-blocks in the EH mode. Assuming that $K \to \infty$, the average harvested power in a transmission block for given $N$ RBs, threshold $\bar A$, and the realization of the normalized MISO channel $\bf{h}$ with $H = h$ can be obtained from (\ref{Eq_Energy_TBS1}) as
   \[
      {{Q^{(\rm{T})}}\left( {h,N,\bar A} \right) = \frac{1}{T}\mathop {\lim }\limits_{K \to \infty } \sum\limits_{k = 1}^K {\left( {1 - \rho \left( k \right)} \right)\frac{T \times {Q^{(\rm{T})}}\left( k \right)}{K}}}
   \]
   \begin{equation}\label{Eq_Energy_TBS_1_1}
      \,\,\,\,\,\, = \mathbb{E}\left[ {\left( {1 - \rho \left( k \right)} \right)\theta PA\left( k \right)} \right].\,
   \end{equation}
   In this section, Gaussian RBs\footnote{Alternative RB designs will be studied later in Section \ref{Sec:OtherBeams}.} are assumed to generate artificial channel fading, i.e., ${\boldsymbol{\phi} _n}\left( k \right) \sim {\mathcal{CN}} ( {{\bf{0}},\frac{1}{N_t}{{\bf{I}}_{{N_t}}}} )$. It can be easily verified that ${\bf{a}}\left( k \right) \sim {\mathcal{CN}}\left( {{\bf{0}},H{{\bf{I}}_N}} \right)$ for a given $H$, and $A \left( k \right)$ is thus a chi-square random variable with $2N$ degrees-of-freedom. With $N$ RBs and conditioned on a given normalized MISO channel realization ${\bf{h}}$ with $H = h$, the probability density function (PDF) of $A := A\left( k \right)$, $\forall k,$ and the cumulative distribution function (CDF) of $A$ are given, respectively, by \cite{Proakis}
   \begin{equation}\label{Eq_TBS_pdf}
      {{f_{A\left| H \right.}^{(N)}}\left( {a\left| h \right.} \right) = \frac{1}{{{{\left( {h/N} \right)}^N}\Gamma \left( N \right)}}{a^{N - 1}}{e^{ - \left( {N/h} \right)a}},}
   \end{equation}
   \begin{equation}\label{Eq_TBS_CDF}
      {{F_{A\left| H \right.}^{(N)}}\left( {a \left| h \right.} \right) = 1 - \frac{{\Gamma \left( {N,\frac{{Na}}{h}} \right)}}{{\Gamma \left( {N} \right)}} ,}
   \end{equation}
   where $\Gamma \left( x \right) = \int_0^\infty  {{t^{x - 1}}{e^{ - t}}dt}$ and $\Gamma \left( {\alpha ,x} \right) = \int_x^\infty  {{t^{\alpha  - 1}}{e^{ - t}}dt}$ represent the Gamma function and incomplete Gamma function, respectively. From (\ref{Eq_Energy_TBS_1_1}) and (\ref{Eq_TBS_pdf}), ${Q^{(\rm{T})}}\left( {h,N,\bar A} \right)$ with Gaussian RBs can thus be obtained as
   \begin{equation}\label{Eq_Energy_TBS2}
      {{Q^{(\rm{T})}}\left( {h,N,\bar A} \right) = \int_{\bar A}^\infty  {\theta Pa{f_{A\left| H \right.}^{(N)}}\left( {a\left| h \right.} \right)da}}
   \end{equation}
   \begin{equation}\label{Eq_Energy_TBS3}
      {\,\,\,\,\,\,\,\,\,\,\,\,\,\,\,\,\,\,\,\,\,\,\,\,\,\,\,\,\,\,\,\, = \theta Ph\frac{{\Gamma \left( {N + 1,\frac{{N\bar A}}{h}} \right)}}{{\Gamma \left( {N + 1} \right)}}, }
   \end{equation}
   where (\ref{Eq_Energy_TBS3}) can be obtained by applying (\ref{Eq_TBS_pdf}) and \cite[3.351-2]{TableOfIntegral} to (\ref{Eq_Energy_TBS2}). For an illustration, Fig. \ref{Fig_HarvestedEnergy_vs_H} shows ${Q^{(\rm{T})}}\left( {h,N,\bar A} \right)$ versus different values of $h$ when $N = 1$, $2$ and $\bar A = 0.1$, $0.2$, $0.5$, assuming 40dB signal power attenuation due to large-scale fading, i.e., $\theta = 10^{-4}$, with the carrier frequency and the distance between Tx and Rx given by $900$MHz and $5$ meters. The transmit power at Tx is set to be $P = 30$dBm. It is observed that ${Q^{(\rm{T})}}\left( {h,N,\bar A} \right)$ decreases with increasing $\bar A$ when $N$ and $h$ are both fixed, which is in accordance with (\ref{Eq_Energy_TBS3}). Moreover, when $N$ and $\bar A$ are both fixed, ${Q^{(\rm{T})}}\left( {h,N,\bar A} \right)$ is observed to increase monotonically with $h$. This is because ${F_{A\left| H \right.}^{(N)}}\left( {\bar A \left| h \right.} \right)$ in (\ref{Eq_TBS_CDF}) decreases with increasing $h$, and thus $1 - F_{A\left| H \right.}^{(N)}( {\bar A\left| h \right.} )$, which is the percentage of the received sub-blocks allocated to EH mode in each block, increases. Thus, the amount of harvested power in each block increases with $h$ thanks to the increased number of sub-blocks assigned to EH mode, as well as the increased average channel power $h$, as can be inferred from (\ref{Eq_Energy_TBS3}).

   Furthermore, when $h$ and $\bar A$ are both fixed, ${Q^{(\rm{T})}}\left( {h,N,\bar A} \right)$ is observed to decrease with increasing $N$ when $h$ is small, but increase with $N$ when $h$ is sufficiently large. This is because, as inferred from (\ref{Eq_TBS_pdf}) and (\ref{Eq_TBS_CDF}), the artificial channel fading is more substantial when smaller number of RBs, $N$, is used, although the same average channel power is given as $h$. Given $1 \le N \le N_t$, it can be shown that $F_{A\left| H \right.}^{(N)}( {\bar A\left| h \right.} )$ in (\ref{Eq_TBS_CDF}) increases with $N$ when $h$ is small, and thus larger power is harvested with smaller number of RBs. In contrast, it can also be shown that $F_{A\left| H \right.}^{(N)}( {\bar A\left| h \right.} )$ decreases with increasing $N$ when $h$ is larger than a certain threshold, and thus more power is harvested with larger number of RBs. Similarly, we can verify that ${Q^{(\rm{T})}}\left( {h,N,\bar A} \right)$ increases with $N$ when $\bar A$ is small, but decreases with increasing $N$ when $\bar A$ is sufficiently large.

   Next, the achievable rate in a block for given $N$, $\bar A$, and $h$ can be derived from (\ref{Eq_Rate_TBS1}) and (\ref{Eq_TBS_pdf}) as
   \[
      {R^{(\rm{T})}}\left( {h,N,\bar A} \right) = {\mathbb E}\left[ \rho \left( k \right) {{\log }_2}\left( 1 + \frac{\theta P A \left( k \right)}{\sigma^2}  \right) \right]
   \]
   \begin{equation}\label{Eq_Rate_TBS2}
      {\,\,\,\,\,\,\,\,\,\,\,\,\,\,\,\,\,\,\,\,\,\,\,\,\,\,\,\,\,\,\,\,\,\,\,\,\,\,\,\,\,\,\,\,\,\,\,\,\,\,\,\,\,\, = \int_0^{\bar A} {{{\log }_2}\left( 1 + \frac{\theta P}{\sigma^2} a \right){f_{A\left| H \right.}^{(N)}}\left( {a\left| h \right.} \right)da.}}
   \end{equation}
   With ${f_{A\left| H \right.}^{(N)}}\left( {a\left| h \right.} \right)$ given in (\ref{Eq_TBS_pdf}), it is in general difficult to obtain a unified closed-form expression of (\ref{Eq_Rate_TBS2}) for arbitrary values of $N$. However, it is possible to derive closed-form expressions for (\ref{Eq_Rate_TBS2}) for some special values of $N$. For example, ${R^{(\rm{T})}}\left( {h,1,\bar A} \right)$ and ${R^{(\rm{T})}}\left( {h,2,\bar A} \right)$ for $N = 1$ and $2$, respectively, can be derived in closed-form in Appendix \ref{App_Rate_N1_Derivation}. Fig. \ref{Fig_AchievableRate_vs_H} shows ${R^{({\rm{T}})}}\left( {h,N,\bar A} \right)$ versus different values of $h$ when $N = 1$, $2$ and $\bar A = 0.1$, $0.2$, $0.5$ with the same setup as for Fig. \ref{Fig_HarvestedEnergy_vs_H} with $\theta = 10^{-4}$ and $P = 30$dBm. It is further assumed that the bandwidth of the transmitted signal is $10$MHz, and receiver noise is white Gaussian with power spectral density $-110$dBm/Hz or $-40$dBm over the entire bandwidth of $10$MHz. It is observed that ${R^{(\rm{T})}}\left( {h,N,\bar A} \right)$ increases with $\bar A$ when $N$ and $h$ are both fixed, which is in accordance with (\ref{Eq_Rate_TBS2}). Moreover, by the opposite argument of the explanation for Fig. \ref{Fig_HarvestedEnergy_vs_H}, when $h$ and $\bar A$ are both fixed, ${R^{(\rm{T})}}\left( {h,N,\bar A} \right)$ is observed to increase with $N$ when $h$ is small or $\bar A$ is large, but decrease with increasing $N$ when $h$ or $\bar A$ is sufficiently large/small.

   \begin{figure}
      \centering
      \includegraphics[width=0.6\columnwidth]{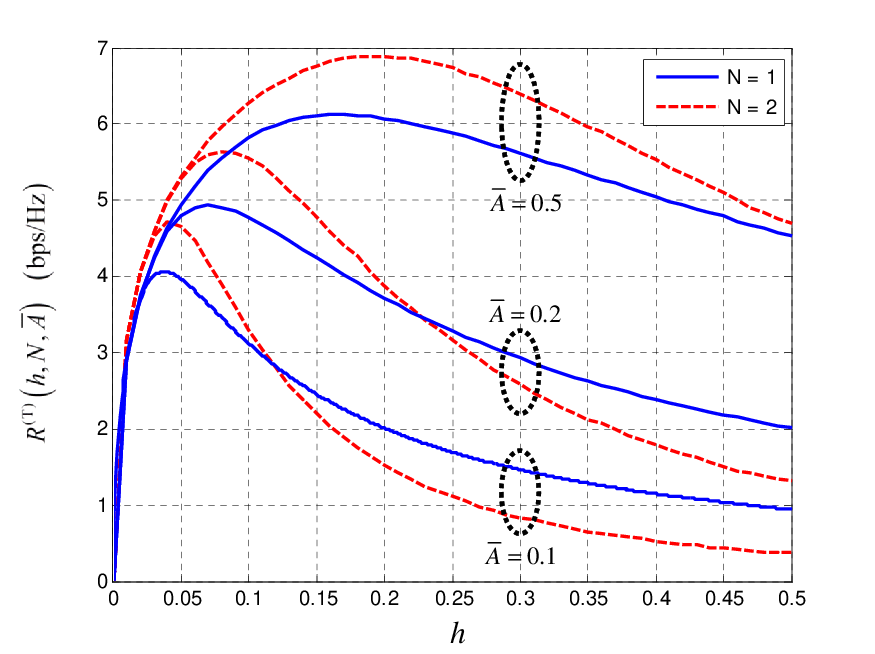}
      \caption{${R^{(\rm{T})}}\left( {h, N, \bar A} \right)$ vs. $h$ with $P = 30$dBm, $N = 1, 2$, $\theta = 10^{-4}$, and $\bar A = 0.1$, $0.2$, $0.5$.}
      \label{Fig_AchievableRate_vs_H}
   \end{figure}

   However, different from ${Q^{(\rm{T})}}\left( {h,N,\bar A} \right)$ in Fig. \ref{Fig_HarvestedEnergy_vs_H} which is a monotonically increasing function of $h$, it is observed in Fig. \ref{Fig_AchievableRate_vs_H} that ${R^{(\rm{T})}}\left( {h,N,\bar A} \right)$ in general first increases with $h$, and then decreases with increasing $h$ for given $N$ and $\bar A$. The reason is as follows. When $h \to 0$, from (\ref{Eq_Rate_TBS2}), we have ${R^{(\rm{T})}}\left( {h,N,\bar A} \right) \to {{{\log }_2}\left( {1 + \frac{\theta P}{\sigma^2} h} \right)}$; thus, ${R^{(\rm{T})}}\left( {h,N,\bar A} \right)$ increases with $h$. However, when $h \to \infty$, ${f_{A\left| H \right.}^{(N)}}\left( {\bar A\left| h \right.} \right) \to 0$ for any finite $0 \le a \le \bar A$, and thus ${R^{(\rm{T})}}\left( {h,N,\bar A} \right) \to 0$; therefore, ${R^{(\rm{T})}}\left( {h,N,\bar A} \right)$ should decrease with increasing $h$ when $h$ is sufficiently large.

   \subsection{Rate-Energy Performance Comparison}\label{NumericalExample1}
   As in \cite{Liu} and \cite{Zhang}, there exist rate-energy (R-E) trade-offs in both PS and TS schemes for information and energy transfer. R-E trade-offs in PS and TS can be characterized by setting different values of $\tau$ and $\bar A$, respectively. Fig. \ref{Fig_R_E_Tradeoff_Nt_2} shows R-E trade-offs in PS and TS for $N_t = 2$ and a constant MISO channel ${\bf{h}} = {\left[ {1.0\,\,\,\,0.56} \right]^T}$, with the same channel setup as for Figs. \ref{Fig_HarvestedEnergy_vs_H} and \ref{Fig_AchievableRate_vs_H}. For PS, ${\bf{X}}^{(\rm{P})}$ is generated by Alamouti code with $L = 2$ \cite{Alamouti}. For TS, a scalar code cascaded by one single RB is applied when $N = 1$, while the Alamouti code with two RBs is applied when $N=2$. The harvested power is denoted by $Q$. It is observed that TS yields the best R-E trade-off with $N=1$ when $Q_{th}^{\left( 1 \right)} \le Q \le h$, and with $N=2$ when $Q_{th}^{\left( 2 \right)} \le Q < Q_{th}^{\left( 1 \right)}$, while PS yields the best R-E trade-off when $0 \le Q < {Q_{th}^{(2)}}$, where $Q_{th}^{\left( 1 \right)}$ and $Q_{th}^{\left( 2 \right)}$ are shown in Fig. \ref{Fig_R_E_Tradeoff_Nt_2}. Note that at $Q = 0$, i.e., no EH is required as in the conventional MISO system with WIT only, PS achieves higher rate than TS since artificial channel fading by random beamforming degrades the AWGN channel capacity. However, when the harvested power exceeds certain thresholds, i.e., $Q_{th}^{(2)}$ and $Q_{th}^{(1)}$, TS with $N = 2$ RBs and $N = 1$ RB achieves the best rate performance for a given power harvesting target, respectively. This demonstrates the unique usefulness of random beamforming in a multi-antenna SWIPT system even with constant AWGN channels.

   \begin{figure}
      \centering
      \includegraphics[width=0.6\columnwidth]{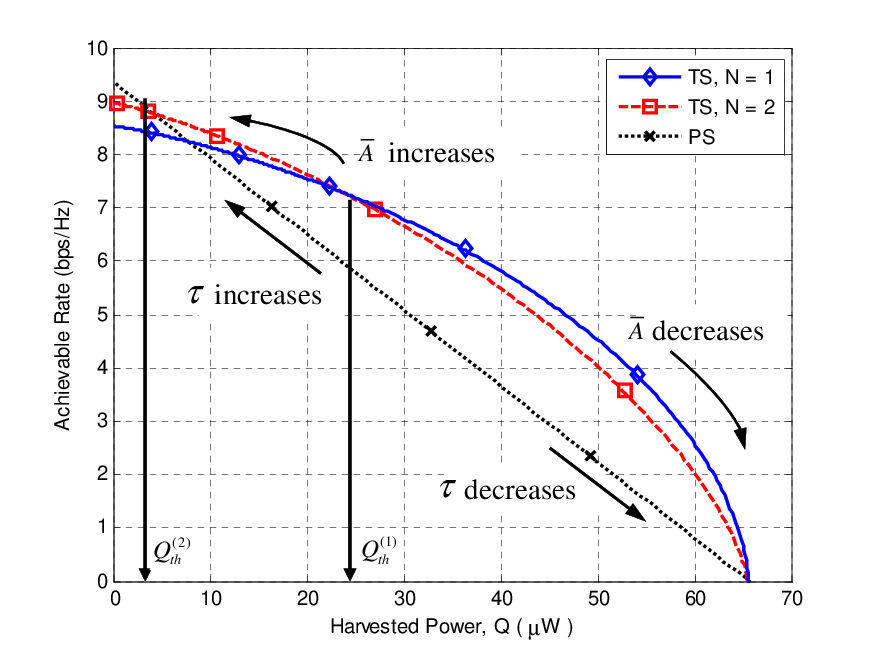}
      \caption{Trade-offs between achievable rate and harvested power when $P=30$dBm, $N_t = 2$, $\theta = 10^{-4}$, and ${\bf{h}} = {\left[ {1.0\,\,\,\,0.56} \right]^T}$.}
      \label{Fig_R_E_Tradeoff_Nt_2}
   \end{figure}

   It is worth noting that for TS larger information rate is achieved with $N = 1$ when $Q_{th}^{\left( 1 \right)} \le Q \le h$, but with $N=2$ otherwise. This can be explained as follows. For a given $h$, it can be shown from (\ref{Eq_Energy_TBS3}) that $\bar A \to 0$ when $Q \to h$. Thus, with sufficiently small $\bar A$, we have ${Q^{(\rm{T})}}\left( {h,1,\bar A} \right) \approx {Q^{(\rm{T})}}\left( {h,2,\bar A} \right)$ (note that ${Q^{(\rm{T})}}\left( {h,2,\bar A} \right)$ is sightly larger than ${Q^{(\rm{T})}}\left( {h,1,\bar A} \right)$ for small $\bar A$ as discussed for Fig. \ref{Fig_HarvestedEnergy_vs_H}; but the gap between them is negligible as shown in Fig. \ref{Fig_HarvestedEnergy_vs_H} with $\bar A = 0.1$). On the other hand, with small $\bar A$, it can be shown from (\ref{Eq_TBS_pdf}) that ${f_{A\left| H \right.}^{(1)}}\left( {a\left| h \right.} \right) > {f_{A\left| H \right.}^{(2)}}\left( {a\left| h \right.} \right)$, $0 \le a \le \bar A$, and thus ${R^{(\rm{T})}}\left( {h,1,\bar A} \right) > {R^{(\rm{T})}}\left( {h,2,\bar A} \right)$ from (\ref{Eq_Rate_TBS2}), as discussed for Fig. \ref{Fig_AchievableRate_vs_H}. Therefore, TS with $N = 1$ achieves larger information rate than $N = 2$ when $Q$ is sufficiently large. In contrast, as $Q \to 0$, we have $\bar A \to \infty$ from (\ref{Eq_Energy_TBS3}). Then, it can be shown that ${R^{(\rm{T})}}\left( {h,1,\infty} \right) < {R^{(\rm{T})}}\left( {h,2,\infty} \right)$ since the ergodic capacity of a fading MISO channel increases with the number of transmit antennas. Therefore, for TS larger information rate is achieved with $N = 2$ than $N = 1$ when $Q$ is smaller than a certain threshold.

\section{Performance Analysis in Fading MISO Channel}\label{Sec:PerformanceAnalysis}
In this section, the R-E performances of TS and PS schemes are further analyzed in fading MISO channels. It is assumed that the small-scale MISO channel from Tx to each Rx follows independent and identically distributed (i.i.d.) Rayleigh fading with ${\bf{h}} \sim \mathcal{CN}\left( {{\bf{0}},{{\bf{I}}_{{N_t}}}} \right)$, and thus $H = \frac{1}{{{N_t}}}\left\| {\bf{h}} \right\|^2$ is a chi-square random variable with $2N_t$ degrees-of-freedom, with the following PDF and CDF \cite{Proakis}:
\begin{equation}\label{Eq_pdf_H}
   {{f_H}\left( h \right) = \frac{{{N_t}^{{N_t}}}}{{\Gamma \left( {{N_t}} \right)}}{h^{{N_t} - 1}}{e^{ - {N_t}h}},}
\end{equation}
\begin{equation}\label{Eq_CDF_H}
   {{F_H}\left( h \right) = 1 - \frac{{\Gamma \left( {{N_t},{N_t}h} \right)}}{{\Gamma \left( {{N_t}} \right)}}.}
\end{equation}

In practice, it is possible for Rxs to change $\bar A$ for TS or $\tau$ for PS with the fading MISO channel $\bf{h}$ for different transmission blocks; however, this incurs additional complexity at Rx. For simplicity, it is assumed in this paper that $\bar A$ and $\tau$ are set to be fixed values for all Rxs over different realizations of $\bf{h}$ for a given $\theta$.

   \subsection{Achievable Average Information Rate}
   We consider that the performance of information transfer is measured by the achievable average rate over fading channels. Given $N$ and $\bar A$, the achievable average rate of TS is denoted by ${\bar R^{\left( {\rm{T}} \right)}}\left( {N,\bar A} \right) = {{\mathbb E_H}\left[ {{R^{\left( {\rm{T}} \right)}}\left( {h,N,\bar A} \right)} \right]}$, where ${{R^{\left( {\rm{T}} \right)}}\left( {h,N,\bar A} \right)}$ is given by (\ref{Eq_Rate_TBS2}) for a given $h$. However, it is difficult to obtain the closed-form expressions for ${\bar R^{\left( {\rm{T}} \right)}}\left( {N,\bar A} \right)$'s using (\ref{Eq_Rate_TBS2}) and (\ref{Eq_pdf_H}) for any given $N$, $1 \le N \le N_t$.

   \begin{figure}
      \centering
      \includegraphics[width=0.6\columnwidth]{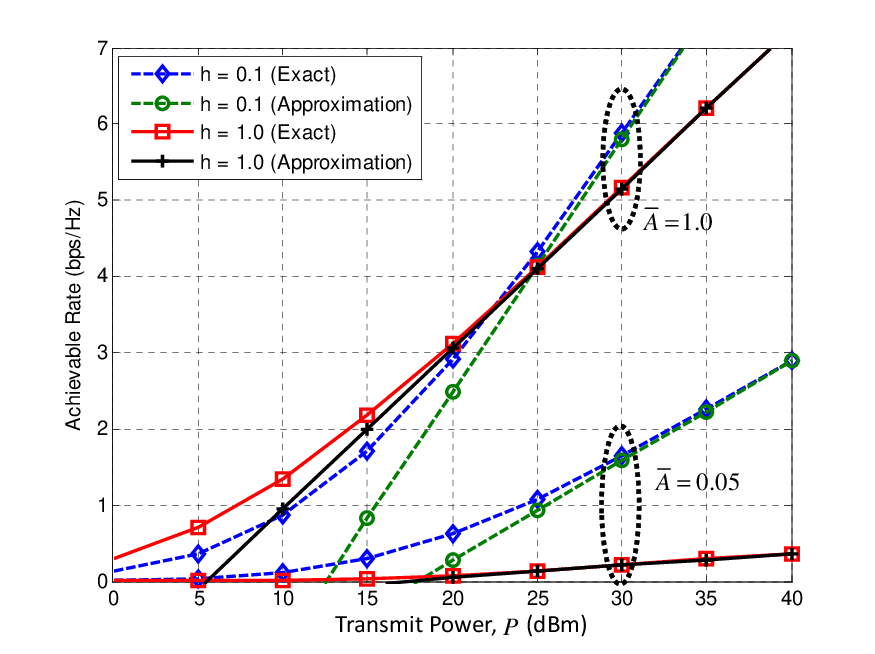}
      \caption{Plot of ${R^{(\rm{T})}}\left( {h,N,\bar A} \right)$ with $N = 1$, $h = 0.1$ and $1.0$, $\bar A = 0.05$ and $1.0$.}
      \label{Fig_AchievableRate_Exact_Approx}
   \end{figure}

   Note that in practice, SWIPT systems usually operate with large transmit power $P$ due to the requirement of energy transfer, resulting in large $\frac{\theta P}{\sigma^2}$, (e.g., $\frac{\theta P}{\sigma^2} = 30$dB with the setup for Fig. \ref{Fig_AchievableRate_vs_H}). It is also worth noting that as $P \to \infty$, ${\log _2}\left( {1 + \frac{\theta P}{\sigma^2} a} \right) = {\log _2}\left( {\frac{\theta P a}{\sigma^2}} \right) + o\left( {{{\log }_2} P } \right)$ for given $a > 0$,\footnote{$f\left( x \right) = o\left( {g\left( x \right)} \right)$ as $x \to x_0$ represents that $\mathop {\lim }\limits_{x \to {x_0}} \frac{{f\left( x \right)}}{{g\left( x \right)}} = 0$, meaning intuitively that $f\left( x \right) \ll g\left( x \right)$ as $x \to x_0$.} resulting in $\mathop {\lim }\limits_{P  \to \infty } {\log _2}(1 + \frac{\theta P a}{\sigma^2}) = {\log _2}(\frac{\theta P a}{\sigma^2})$. Therefore, $\mathop {\lim }\limits_{P \to \infty } {R^{(T)}}\left( {h,N,\bar A} \right) = \mathop {\lim }\limits_{P  \to \infty } \int_0^{\bar A} {{{\log }_2}\left( {\frac{\theta P a}{\sigma^2}} \right)f_{A\left| H \right.}^{(N)}\left( {a\left| h \right.} \right)da}$, and as $P$ is sufficiently large, ${R^{(\rm{T})}}\left( {h,N,\bar A} \right)$ in (\ref{Eq_Rate_TBS2}) with $\bar A > 0$ can be approximated as
   \begin{equation}\label{Eq_Rate_TBS3}
      {{R^{(\rm{T})}}\left( {h,N,\bar A} \right) \approx {F_{A\left| H \right.}^{(N)}}\left( {\bar A \left| h \right.} \right){\log _2}\left( \frac{\theta P}{\sigma^2}  \right) + C_0 \left( {h,N,\bar A} \right),}
   \end{equation}
   where ${F_{A\left| H \right.}}\left( {\bar A \left| h \right.} \right) = \int_0^{\bar A} {{f_{A\left| H \right.}}\left( {a \left| h \right.} \right)da}$ and $C_0 \left( {h,N,\bar A} \right) = \int_0^{\bar A} {{{\log }_2}\left( a \right){f_{A\left| H \right.}}\left( {a\left| h \right.} \right)da}$, which is a constant not related to $P$. Please refer to Appendix \ref{App_C0_Derivation} for detailed derivation of $C_0 \left( {h,N,\bar A} \right)$. Note that the right-hand side of (\ref{Eq_Rate_TBS3}) is a lower bound on ${R^{(\rm{T})}}\left( {h,N,\bar A} \right)$, but approximates ${R^{(\rm{T})}}\left( {h,N,\bar A} \right)$ tightly with sufficiently large $P$. Fig. \ref{Fig_AchievableRate_Exact_Approx} shows ${R^{(\rm{T})}}\left( {h,N,\bar A} \right)$ and its approximation by (\ref{Eq_Rate_TBS3}) versus $P$ for different values of $h$ and $\bar A$ with the same setup as for Fig. \ref{Fig_AchievableRate_vs_H} and $N = 1$. It is observed that the approximation in (\ref{Eq_Rate_TBS3}) is more accurate as $h$ and/or $\bar A$ increases. It is also observed that the gap between the achievable rate and its approximation becomes negligible when $P \ge 30$dBm even with moderate values of $h = 0.1$ and $\bar A = 0.05$.

   With the approximation of ${R^{(\rm{T})}}\left( {h,N,\bar A} \right)$ by (\ref{Eq_Rate_TBS3}), we can characterize the asymptotic behavior of ${\bar R^{\left( {\rm{T}} \right)}}\left( {N,\bar A} \right)$ as $P$ becomes large by investigating its pre-log scaling factor, which is given by the following proposition.

   \begin{proposition}\label{Proposition_Avg_Scaling}
      Given $1 \le N \le N_t$ and $\bar A \ge 0$, the achievable average rate for TS over the i.i.d. Rayleigh fading MISO channel is obtained as ${\bar R^{\left( {\rm{T}} \right)}}\left( {N,\bar A} \right) = {\Delta^{\left( {\rm{T}} \right)}}\left( {N,\bar A} \right){\log _2}\left( P \right) + o\left( {\log _2} \, P \right)$ with $P \rightarrow \infty$, where
      \begin{equation}\label{Eq_TBS_AvgRate_Scaling}
         {{\Delta^{\left( {\rm{T}} \right)}}\left( {N,\bar A} \right) \buildrel \Delta \over = \mathop {\lim }\limits_{P \to \infty } \frac{{\bar R^{\left( {\rm{T}} \right)}}\left( {N,\bar A} \right)}{{{{\log }_2}\, P}} = {F_A^{(N)}}\left( \bar A \right),}
      \end{equation}
      with ${F_A^{(N)}}\left( a \right) = \mathbb E_H \left[{F_{A\left| H \right.}^{(N)}}\left( {a\left| h \right.} \right)\right]$ denoting the unconditional CDF of $A$ after averaging over the fading distribution, which can be further expressed as
      \begin{equation}\label{Eq_Marginal_CDF}
         {{F_A^{(N)}}\left( a \right) = 1 - \frac{2}{{\Gamma \left( {{N_t}} \right)}}\sum\limits_{k = 0}^{N - 1} {\frac{{{ \left( \beta \left( a \right) \right)^{{N_t} + k}}}}{{k!}}{K_{{N_t} - k}}\left( {2\beta \left( a \right) } \right),}}
      \end{equation}
      where $\beta \left( a \right)  \buildrel \Delta \over =  \sqrt{{N_t}Na}$, and ${K_\delta }\left( x \right)$ denotes the second-kind modified Bessel function
      \[ {{K_\delta }\left( x \right) = \frac{\pi }{2}\frac{{{I_{ - \delta }}\left( x \right) - {I_\delta }\left( x \right)}}{{\sin \left( {\delta x} \right)}},} \]
      with ${I_\delta }\left( x \right)$ denoting the first-kind modified Bessel function
      \[ {{I_\delta }\left( x \right) = \sum\limits_{m = 0}^\infty  {\frac{1}{{m!\Gamma \left( {m + \delta  + 1} \right)}}{{\left( {\frac{x}{2}} \right)}^{2m + \delta }}} .} \]
   \end{proposition}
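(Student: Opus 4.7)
The plan is to start from the high-SNR approximation of the per-realization rate already derived in the excerpt, namely equation (\ref{Eq_Rate_TBS3}), and then pass to the fading average over $H$ to read off the pre-log coefficient. Concretely, I would first rewrite
\[
   R^{(\mathrm{T})}(h,N,\bar A) = F_{A|H}^{(N)}(\bar A|h)\log_2\!\left(\tfrac{\theta P}{\sigma^2}\right) + \int_0^{\bar A}\log_2\!\left(a + \tfrac{\sigma^2}{\theta P}\right) f_{A|H}^{(N)}(a|h)\,da,
\]
and observe that the remaining integral converges to the constant $C_0(h,N,\bar A)$ as $P\to\infty$ (by dominated convergence, using that $f_{A|H}^{(N)}(a|h)$ vanishes like $a^{N-1}$ at the origin so $\log a$ is integrable). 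This gives $R^{(\mathrm{T})}(h,N,\bar A) = F_{A|H}^{(N)}(\bar A|h)\log_2 P + O(1)$ for each fixed $h$.

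Next I would take expectation over $H\sim$ chi-square (density (\ref{Eq_pdf_H})) and apply Fubini to swap the expectation and the $a$-integral. Because $E_H[F_{A|H}^{(N)}(\bar A|h)]$ is by definition the unconditional CDF $F_A^{(N)}(\bar A)$ and $E_H[C_0(h,N,\bar A)]$ is a finite constant, we get $\bar R^{(\mathrm{T})}(N,\bar A) = F_A^{(N)}(\bar A)\log_2 P + O(1)$. Dividing by $\log_2 P$ and taking $P\to\infty$ yields the claimed pre-log scaling factor $\Delta^{(\mathrm{T})}(N,\bar A) = F_A^{(N)}(\bar A)$.

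The remaining task is to express $F_A^{(N)}(\bar A)$ in closed form as in (\ref{Eq_Marginal_CDF}). Starting from (\ref{Eq_TBS_CDF}) and using the finite-series representation of the upper incomplete Gamma function for integer order, $\Gamma(N,x) = (N-1)!\,e^{-x}\sum_{k=0}^{N-1}x^k/k!$, one obtains
\[
   F_{A|H}^{(N)}(a|h) = 1 - e^{-Na/h}\sum_{k=0}^{N-1}\frac{1}{k!}\left(\frac{Na}{h}\right)^{\!k}.
\]
Multiplying by the chi-square density (\ref{Eq_pdf_H}) and integrating over $h$ term-by-term leaves integrals of the form $\int_0^\infty h^{N_t-k-1} e^{-N_t h - Na/h}\,dh$. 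I would then invoke the classical Bessel-function integral identity (Gradshteyn--Ryzhik 3.471.9),
\[
   \int_0^\infty x^{\nu-1} e^{-\alpha x - \beta/x}\,dx = 2\left(\frac{\beta}{\alpha}\right)^{\!\nu/2} K_\nu\!\left(2\sqrt{\alpha\beta}\right),
\]
with $\nu = N_t-k$, $\alpha=N_t$, $\beta=Na$. Substituting back, collecting the power $N_t^{N_t}(Na/N_t)^{(N_t-k)/2}(Na)^k = (N_tNa)^{(N_t+k)/2}$, and recalling $\beta(a)=\sqrt{N_tNa}$ yields exactly (\ref{Eq_Marginal_CDF}).

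The subtle step is the passage from the pointwise high-SNR approximation to the averaged one: one must argue that the implicit $o(\log_2 P)$ error in (\ref{Eq_Rate_TBS3}) remains $o(\log_2 P)$ after averaging over $H$, i.e., that $E_H[C_0(h,N,\bar A)]$ is finite and that the dominated-convergence bound holds uniformly enough. The integrability near $h\to 0$ and $h\to\infty$ follows from the chi-square tail decay and from $|C_0(h,N,\bar A)|$ being bounded by $|\log_2\bar A|+\text{const}$ plus a term that vanishes as $h\to 0$ or $\infty$; the routine algebraic manipulations to reach the Bessel-function form are otherwise mechanical once the integer-$N$ incomplete-Gamma expansion and the Gradshteyn--Ryzhik identity are in hand.
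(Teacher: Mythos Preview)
Your proposal is correct and follows essentially the same route as the paper's proof: take the high-SNR approximation (\ref{Eq_Rate_TBS3}), average over $H$ so that the pre-log coefficient becomes $F_A^{(N)}(\bar A)=\mathbb E_H[F_{A|H}^{(N)}(\bar A|h)]$, then compute this unconditional CDF by expanding $\Gamma(N,\cdot)$ via the finite series (\ref{App_IncompleteGamma}) and evaluating the resulting $h$-integral with Gradshteyn--Ryzhik 3.471.9. The only difference is that you add the dominated-convergence/integrability justification for swapping the limit and the expectation, which the paper simply asserts by declaring the averaged remainder a $P$-independent constant.
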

   \begin{proof}
      Please refer to Appendix \ref{App_Proof_Proposition_Avg_Scaling}.
   \end{proof}

   \begin{remark}\label{Remark_RateScaling_vs_CDF}
      In the fading MISO channel, ${F_A^{(N)}}\left( \bar A \right)$ denotes the percentage of sub-blocks allocated to ID mode for TS. From Proposition \ref{Proposition_Avg_Scaling}, it is inferred that ${F_A^{(N)}}\left( \bar A \right)$ is also the pre-log rate scaling factor of the asymptotic achievable average information rate over the MISO fading channel for TS with given $\bar A$ and $N$.
   \end{remark}

   Fig. \ref{Fig_Marginal_CDF} shows ${F_A^{(N)}}\left( \bar A \right)$ versus $\bar A$ for TS with $N_t = 4$ when $\bf{h}$ follows i.i.d. Rayleigh fading. From Fig. \ref{Fig_Marginal_CDF}, it is observed that the rate scaling factor ${F_A^{(N)}}\left( \bar A \right)$ for TS increases with decreasing $N$ when $\bar A$ is small, but decreases with $N$ when $\bar A$ is sufficiently large. As a result, ${\bar R^{\left( {\rm{T}} \right)}}\left( {N,\bar A} \right)$ scales faster with increasing $P$ for smaller value of $N$ when $\bar A$ is small, but scales slower with $P$ when $\bar A$ becomes large.

   \begin{figure}
      \centering
      \includegraphics[width=0.65\columnwidth]{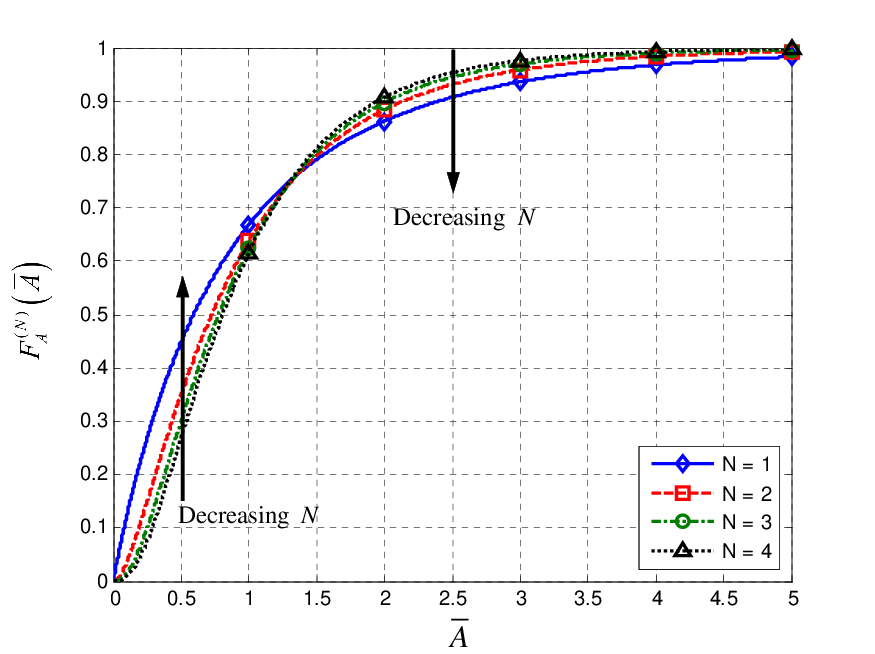}
      \caption{$F_A^{(N)}\left( \bar A \right)$ vs. $\bar A$ when $N_t = 4$.}
      \label{Fig_Marginal_CDF}
   \end{figure}

   On the other hand, the rate scaling factor for PS in the i.i.d. Rayleigh fading MISO channel can be determined from (\ref{Eq_Rate_TimeSharing}) and (\ref{Eq_pdf_H}) as
   \begin{equation}\label{Eq_PreLog_OOS}
      {{\Delta^{({\rm{P}})}}\left( \tau  \right) = \mathop {\lim }\limits_{P \to \infty } \frac{{\mathbb E_H\left[ {{R^{({\rm{P}})}}\left( {h,\tau } \right)} \right]}}{{{{\log }_2} P }} = \tau .}
   \end{equation}

   \subsection{Average Harvested Power}
   In this subsection, we study the average harvested power over the i.i.d. Rayleigh fading MISO channel by TS, defined as $\bar Q^{(\rm{T})}\left( {N,\bar A} \right)  = \mathbb E_{H}\left[ {{Q^{\left( {\rm{T}} \right)}}\left( {h,N,\bar A} \right)} \right]$, where ${{Q^{\left( {\rm{T}} \right)}}\left( {h,N,\bar A} \right)}$ is given by (\ref{Eq_Energy_TBS3}).

   \begin{proposition}\label{Proposition_AvgEnergy}
      In the i.i.d. Rayleigh fading MISO channel, for given $\bar A$ and $N$, the average harvested power for TS is given by
      \begin{equation}\label{Eq_AvgEnergy_Slope}
         {\bar Q^{(\rm{T})}\left( {N,\bar A} \right)  =  \theta P\frac{{2}}{{\Gamma \left( {{N_t}} \right)}}\sum\limits_{k = 0}^N {\frac{{\left( \beta \left( \bar A\right) \right)^{{{N_t} + k}}}}{{k!}}\sqrt{{ {\frac{{N\bar A}}{{{N_t}}}} }}{K_{{N_t} - k + 1}}\left( {2 \beta \left( \bar A \right)  } \right),}}
      \end{equation}
      where $\beta \left( a \right)$ and ${K_\delta }\left( x \right)$ are defined in Proposition  \ref{Proposition_Avg_Scaling}.
   \end{proposition}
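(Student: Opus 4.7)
The plan is to compute $\bar Q^{(\mathrm T)}(N,\bar A)=\mathbb E_H[Q^{(\mathrm T)}(h,N,\bar A)]$ directly from its integral definition, using (\ref{Eq_Energy_TBS3}) for the integrand and (\ref{Eq_pdf_H}) for the density of $H$. Thus I would start from
\[
\bar Q^{(\mathrm T)}(N,\bar A)=\theta P\,\frac{N_t^{N_t}}{\Gamma(N_t)\,\Gamma(N+1)}\int_0^{\infty}h^{N_t}e^{-N_th}\,\Gamma\!\left(N+1,\tfrac{N\bar A}{h}\right)dh,
\]
and reduce the incomplete Gamma factor to an elementary form. Since $N+1$ is a positive integer, I would invoke the finite-series identity $\Gamma(N+1,x)=N!\,e^{-x}\sum_{k=0}^{N}x^k/k!$, which replaces $\Gamma(N+1,N\bar A/h)$ by a sum over $k$ involving $(N\bar A/h)^k e^{-N\bar A/h}$.

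Next I would interchange the finite sum and the integral, leaving, for each $k\in\{0,\dots,N\}$, an integral of the type
\[
I_k=\int_0^{\infty}h^{N_t-k}\,e^{-N_th\,-\,N\bar A/h}\,dh,
\]
which is the standard Bessel integral representation. Specifically, I would apply the identity $\int_0^{\infty}x^{\nu-1}e^{-\alpha x-\beta/x}dx=2(\beta/\alpha)^{\nu/2}K_{\nu}(2\sqrt{\alpha\beta})$ (Gradshteyn--Ryzhik 3.471.9) with $\nu=N_t-k+1$, $\alpha=N_t$, $\beta=N\bar A$, obtaining
\[
I_k=2\left(\tfrac{N\bar A}{N_t}\right)^{(N_t-k+1)/2}K_{N_t-k+1}\!\left(2\sqrt{N_tN\bar A}\right).
\]
Noting $2\sqrt{N_tN\bar A}=2\beta(\bar A)$, the Bessel argument in the claim appears automatically.

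The remaining step is purely algebraic: combining the prefactor $N_t^{N_t}(N\bar A)^k$ with $(N\bar A/N_t)^{(N_t-k+1)/2}$ and reshuffling to produce the factorization $\sqrt{N\bar A/N_t}\,(\beta(\bar A))^{N_t+k}$ that appears in (\ref{Eq_AvgEnergy_Slope}). Concretely, I would rewrite
\[
N_t^{N_t}(N\bar A)^k\Bigl(\tfrac{N\bar A}{N_t}\Bigr)^{(N_t-k+1)/2}
=\sqrt{\tfrac{N\bar A}{N_t}}\,\bigl(N_tN\bar A\bigr)^{(N_t+k)/2}
=\sqrt{\tfrac{N\bar A}{N_t}}\,\bigl(\beta(\bar A)\bigr)^{N_t+k}.
\]
The $\Gamma(N+1)=N!$ in the denominator cancels the $N!$ from the incomplete-Gamma expansion, leaving only $1/k!$ inside the sum, which yields exactly (\ref{Eq_AvgEnergy_Slope}).

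The main obstacle is keeping the exponent bookkeeping error-free in the last step: the exponents of $N_t$ and $N\bar A$ are fractional and must be reassembled into the compact $\beta(\bar A)^{N_t+k}$ form together with the lone $\sqrt{N\bar A/N_t}$ factor. Everything else (the Gamma-function series expansion and the Bessel integral identity) is a direct appeal to standard tables, so once the algebra is arranged carefully the proposition follows.
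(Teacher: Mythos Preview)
Your proposal is correct and follows essentially the same route as the paper's own proof: expand $\Gamma(N+1,N\bar A/h)$ via the finite exponential series, exchange the sum and the $h$-integral, and evaluate each resulting integral $\int_0^\infty h^{N_t-k}e^{-N_th-N\bar A/h}\,dh$ with Gradshteyn--Ryzhik 3.471.9 to produce the Bessel $K_{N_t-k+1}(2\beta(\bar A))$ factors. The only cosmetic difference is that you spell out the final exponent bookkeeping that collapses $N_t^{N_t}(N\bar A)^k(N\bar A/N_t)^{(N_t-k+1)/2}$ into $\sqrt{N\bar A/N_t}\,\beta(\bar A)^{N_t+k}$, whereas the paper leaves that step implicit.
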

   \begin{proof}
      Please refer to Appendix \ref{Proof_Proposition_AvgEnergy}.
   \end{proof}

   For convenience, we term ${\Pi ^{({\rm{T}})}}\left( {N,\bar A} \right) = {Q^{({\rm{T}})}}\left( {N,\bar A} \right)/({\theta P})$ as the \emph{power scaling factor} for TS with increasing $P$. Notice that $0 \le {\Pi ^{({\rm{T}})}}\left( {N,\bar A} \right) \le 1$. Fig. \ref{Fig_AvgEnergy_vs_A} shows ${\Pi ^{({\rm{T}})}}\left( {N,\bar A} \right)$ versus different values of $\bar A$ with $N_t = 4$. It is observed that the power scaling factor ${\Pi ^{({\rm{T}})}}\left( {N,\bar A} \right)$ for TS behaves in the opposite way of the rate scaling factor ${F_A^{(N)}}\left( \bar A \right)$, as compared to Fig. \ref{Fig_Marginal_CDF}, i.e., ${\Pi ^{({\rm{T}})}}\left( {N,\bar A} \right)$ decreases with $N$ when $\bar A$ is small, but increases with decreasing $N$ when $\bar A$ is sufficiently large. As a result, for given $\theta$ and $P$, ${\bar Q^{\left( {\rm{T}} \right)}}\left( {N,\bar A} \right)$ behaves the same as ${\Pi ^{({\rm{T}})}}\left( {N,\bar A} \right)$.

   \begin{figure}
      \centering
      \includegraphics[width=0.6\columnwidth]{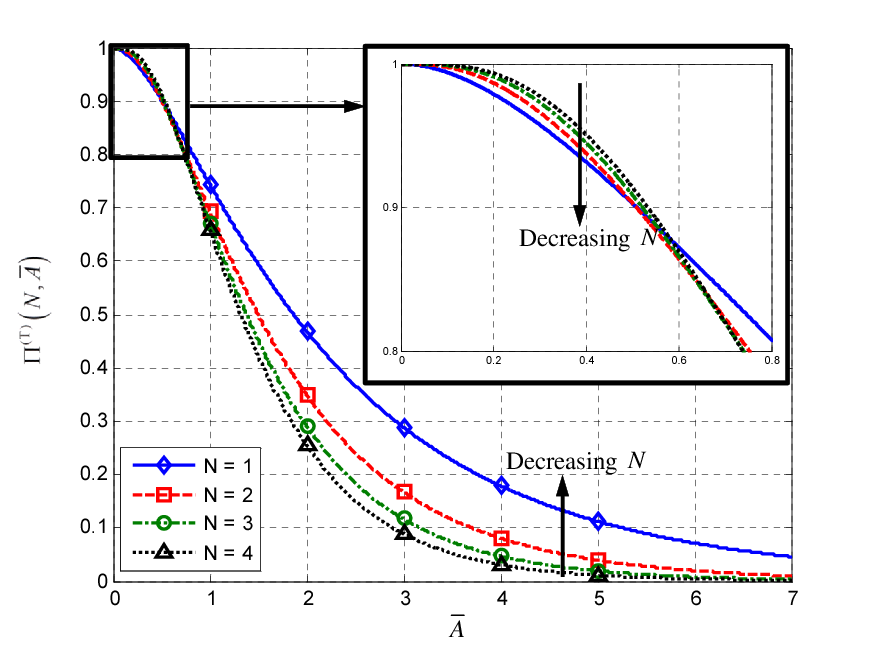}
      \caption{$\Pi^{(\rm{T})}\left( {N,\bar A} \right)$ vs. $\bar A$ when $N_t = 4$.}
      \label{Fig_AvgEnergy_vs_A}
   \end{figure}

   On the other hand, the power scaling factor for PS in the i.i.d. Rayleigh fading MISO channel can be easily obtained from (\ref{Eq_Energy_TimeSharing}) and (\ref{Eq_pdf_H}) as
   \begin{equation}\label{AvgEnergy_OOS}
      {\Pi^{(\rm{P})}\left( {\tau} \right) = {\mathbb{E}_H\left[ {{Q}^{(\rm{P})}\left( {h,\tau} \right)} \right] \mathord{\left/  {\vphantom {1 P}} \right. \kern-\nulldelimiterspace} \left( \theta P \right)}  = \left( {1 - \tau } \right), \,\,\, 0 \le \tau \le 1. }
   \end{equation}

   The rate and power scaling factors characterize the asymptotic rate-energy trade-off as $P \to \infty$. Given $1 \le N \le N_t$, for TS it is easily shown from (\ref{Eq_Marginal_CDF}) and (\ref{Eq_AvgEnergy_Slope}) that the rate scaling factor ${\Delta^{\left( {\rm{T}} \right)}}\left( {N, 0} \right) = 0$ and the power scaling factor ${\Pi^{({\rm{T}})}}( {N, 0} ) = 1$ at $\bar A = 0$, while ${\Delta^{\left( {\rm{T}} \right)}}\left( {N,\infty} \right) = 1$ and ${\Pi^{({\rm{T}})}}( {N,\infty} ) = 0$ at $\bar A \to \infty$. Note that for TS the distribution of the received channel power $A \left( k \right)$ at each sub-block becomes different according to $N$, and as a result different asymptotic rate-energy trade-off is achieved when $0 < {\Delta^{\left( {\rm{T}} \right)}}\left( {N,\bar A} \right) < 1$ and $0 < {\Pi^{\left( {\rm{T}} \right)}}\left( {N,\bar A} \right) < 1$. To characterize this trade-off, we have the following theorem.

   \begin{theorem}\label{Theorem_Optimality_N}
      In the i.i.d. Rayleigh fading MISO channel, given $1 \le N \le N_t$ and $0 < \bar A < \infty$ for TS scheme and $0 < \tau < 1$ for PS scheme, ${\Delta^{\left( {\rm{T}} \right)}}\left( {N,\bar A} \right) > {\Delta^{({\rm{P}})}}\left( \tau  \right)$ for a given power scaling factor $0 < {\Pi^{\left( {\rm{T}} \right)}}\left( {N,\bar A} \right) = {\Pi^{({\rm{P}})}}\left( \tau  \right) < 1$; furthermore, given $1 \le N < M \le N_t$ and $0 < \bar A_N, \bar A_M < \infty$ for TS schemes, ${\Delta^{\left( {\rm{T}} \right)}}\left( {N,\bar A_N} \right) > {\Delta^{\left( {\rm{T}} \right)}}\left( {M,\bar A_M} \right)$ for a given power scaling factor $0 < {\Pi^{\left( {\rm{T}} \right)}}\left( {N,\bar A_N} \right) = {\Pi^{\left( {\rm{T}} \right)}}\left( {M,\bar A_M} \right) < 1$.
   \end{theorem}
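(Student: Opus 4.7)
My plan is to rephrase both claims in terms of the marginal random variable $A_N$, the normalized per-sub-block channel power when $N$ random beams are used, whose PDF is obtained by integrating \eqref{Eq_TBS_pdf} against \eqref{Eq_pdf_H}. The key elementary fact that drives everything is $\mathbb{E}[A_N] = \mathbb{E}_H[H] = 1$ for every $N$. In this notation $\Delta^{(\mathrm{T})}(N,\bar A) = F_A^{(N)}(\bar A) = \Pr(A_N \le \bar A)$ and $\Pi^{(\mathrm{T})}(N,\bar A) = \mathbb{E}[A_N\mathbf{1}_{A_N > \bar A}]$, while for the PS scheme one trivially has $\Delta^{(\mathrm{P})}(\tau) + \Pi^{(\mathrm{P})}(\tau) = 1$.

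For the first claim (TS versus PS), I would start from
\begin{equation*}
\Delta^{(\mathrm{T})}(N,\bar A) + \Pi^{(\mathrm{T})}(N,\bar A) - 1 \;=\; \mathbb{E}\bigl[(1-A_N)\mathbf{1}_{A_N\le\bar A}\bigr] \;=\; \int_0^{\bar A}(1-a)f_A^{(N)}(a)\,da \;=:\; G(\bar A),
\end{equation*}
which uses $\mathbb{E}[A_N]=1$. Since $G(0)=0$, $G(\infty)=\mathbb{E}[1-A_N]=0$, and $G'(\bar A)=(1-\bar A)f_A^{(N)}(\bar A)$ with $f_A^{(N)}>0$ on $(0,\infty)$, $G$ is strictly increasing on $(0,1)$ and strictly decreasing on $(1,\infty)$, which forces $G(\bar A)>0$ for every $\bar A\in(0,\infty)$. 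Hence $\Delta^{(\mathrm{T})}+\Pi^{(\mathrm{T})}>1=\Delta^{(\mathrm{P})}+\Pi^{(\mathrm{P})}$, which under the matching constraint $\Pi^{(\mathrm{T})}(N,\bar A)=\Pi^{(\mathrm{P})}(\tau)$ rearranges to $\Delta^{(\mathrm{T})}(N,\bar A)>\Delta^{(\mathrm{P})}(\tau)$, as desired.

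For the second claim (TS across different $N$), the plan is to combine convex ordering of the $A_N$'s with a tangent-line/Legendre argument. Write $A_N = H\,G_N$ with $G_N\sim\mathrm{Gamma}(N,1/N)$ independent of $H$, so $G_N$ is the empirical mean of $N$ i.i.d.\ $\mathrm{Exp}(1)$ variables; the standard conditional-Jensen argument for sample means gives $G_M\le_{\mathrm{cx}} G_N$ whenever $N<M$, and multiplying by the independent non-negative $H$ preserves this to $A_M\le_{\mathrm{cx}} A_N$. Applying the resulting inequality to the convex function $\phi_{\bar A}(a)=(\bar A-a)_+$ and using $\mathbb{E}[A_N]=1$ yields
\begin{equation*}
v_N(\bar A)\;:=\;\mathbb{E}\bigl[(\bar A-A_N)_+\bigr]\;=\;\bar A\,\Delta^{(\mathrm{T})}(N,\bar A)+\Pi^{(\mathrm{T})}(N,\bar A)-1,
\end{equation*}
with $v_N(\bar A)\ge v_M(\bar A)$, strictly on $(0,\infty)$ (see the final paragraph). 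I would then observe that the rate-energy curve $\gamma_N:\bar A\mapsto(\Delta^{(\mathrm{T})}(N,\bar A),\Pi^{(\mathrm{T})}(N,\bar A))$ is concave in the $(\Delta,\Pi)$ plane, because its slope $d\Pi/d\Delta=-\bar A$ is strictly decreasing in $\bar A$, and consequently equals the lower envelope of its tangent lines $\Pi+\bar A\Delta=v_N(\bar A)+1$. Fixing $\Pi^\ast\in(0,1)$ and letting $\bar A_N^\ast$ be the parameter for which $\gamma_N$ achieves this envelope at height $\Pi^\ast$, the strict tangent inequality $v_N(\bar A_N^\ast)>v_M(\bar A_N^\ast)$ combined with $\gamma_M$ lying below its own tangent at $\bar A_N^\ast$ forces $\gamma_N(\Delta)>\gamma_M(\Delta)$ pointwise, which by monotonicity of the curves translates into $\Delta^{(\mathrm{T})}(N,\bar A_N)>\Delta^{(\mathrm{T})}(M,\bar A_M)$ whenever $\Pi^{(\mathrm{T})}(N,\bar A_N)=\Pi^{(\mathrm{T})}(M,\bar A_M)\in(0,1)$.

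The main obstacle I anticipate is upgrading the convex-order comparison from weak to strict throughout $(0,\infty)$, since $\phi_{\bar A}$ is only piecewise linear. I would handle this by the disintegration $v_N(\bar A)=\mathbb{E}_H\bigl[H\,\tilde v_N(\bar A/H)\bigr]$ with $\tilde v_N(x)=\mathbb{E}[(x-G_N)_+]$, and verifying directly for the pure Gamma pair $G_N,G_M$ that $\log(f_{G_N}/f_{G_M})$ is strictly convex with a single interior minimum, which forces the CDF difference $F_{G_N}-F_{G_M}$ to have exactly one sign change (from positive to negative) and hence its antiderivative $\tilde v_N(x)-\tilde v_M(x)$ to be strictly positive for every $x\in(0,\infty)$. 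Mixing over $H$ with a density positive on $(0,\infty)$ then preserves strictness and delivers $v_N(\bar A)>v_M(\bar A)$ on the full range, completing the argument.
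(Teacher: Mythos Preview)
Your argument for the first part is essentially identical to the paper's: the paper also shows that $\Delta^{(\mathrm T)}+\Pi^{(\mathrm T)}>1$ by differentiating the same quantity in $\bar A$, obtaining derivative $(1-\bar A)f_A^{(N)}(\bar A)$ (their Lemma with $b=1$), and concluding strict positivity on $(0,\infty)$ from $G(0)=G(\infty)=0$ and the single sign change of $G'$.

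Your second part takes a genuinely different route. The paper isolates a lemma (their ``Lemma~\ref{Lemma_Optimality}'') for two distributions whose CDFs cross exactly once, and proves by a three-case analysis around the crossing point $\hat A$ that equal rate scaling forces strictly larger power scaling for the more dispersed distribution; it then asserts (essentially by inspection of the closed form \eqref{Eq_Marginal_CDF} and a figure) that $F_A^{(N)}$ and $F_A^{(M)}$ satisfy this single-crossing condition. You instead recognize $A_N=H\,G_N$ with $G_N$ the mean of $N$ i.i.d.\ exponentials, invoke the convex ordering $A_M\le_{\mathrm{cx}}A_N$, translate it into the inequality $v_N(\bar A)\ge v_M(\bar A)$ for the stop-loss transform, and read off the conclusion via concavity of the parametric curve $\bar A\mapsto(\Delta,\Pi)$ and its tangent envelope $\Pi+\bar A\,\Delta=v_N(\bar A)+1$. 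This is correct and more conceptual: the tangent-line argument replaces the paper's three-case computation, and your strictness paragraph (log-likelihood ratio of $G_N$ vs.\ $G_M$ strictly convex $\Rightarrow$ one sign change of $F_{G_N}-F_{G_M}$ $\Rightarrow$ $\tilde v_N>\tilde v_M$ on $(0,\infty)$, then mix over $H$) actually supplies what the paper only sketches by pointing to the plot of $F_A^{(N)}$. The trade-off is that the paper's approach stays closer to the explicit Bessel-function formulas and needs no stochastic-order vocabulary, while yours is shorter once one accepts the convex-order machinery and makes the role of ``more dispersion helps'' transparent.
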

   \begin{proof}
      Please refer to Appendix \ref{App_Proof_Theorem_Optimality_N}.
   \end{proof}

   \begin{figure}
      \centering
      \includegraphics[width=0.6\columnwidth]{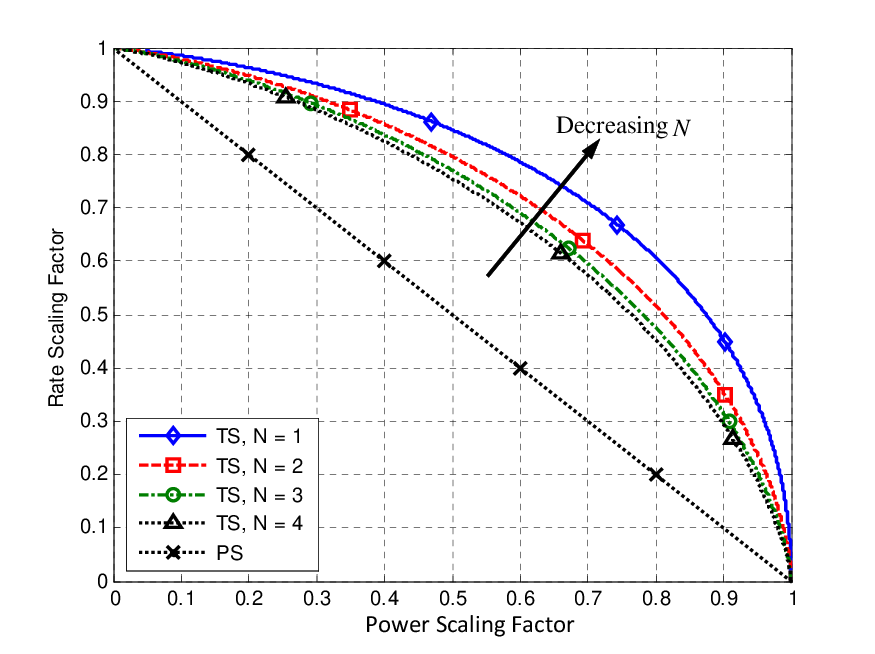}
      \caption{Rate vs. power scaling factors with $N_t = 4$.}
      \label{Fig_AvgEnergy_vs_PreLogFactor}
   \end{figure}

   Fig. \ref{Fig_AvgEnergy_vs_PreLogFactor} shows the rate scaling factor (${\Delta^{\left( {\rm{T}} \right)}}\left( {N,\bar A} \right)$ for TS and ${\Delta^{({\rm{P}})}}\left( \tau  \right)$ for PS) versus power scaling factor ($\Pi^{(\rm{T})}\left( {N,\bar A} \right)$ for TS and $\Pi^{(\rm{P})}\left( {\tau} \right)$ for PS) with $N_t = 4$. For a given $0 < \Pi^{(\rm{T})}\left( {N,\bar A} \right)  = {\Pi^{({\rm{P}})}}\left( \tau  \right) < 1$, the rate scaling factor of TS with $N = 1$, i.e., one single random beam, is the largest among all values of $N$. In addition, ${\Delta^{\left( {\rm{T}} \right)}}\left( {N,\bar A} \right)$ for TS decreases with increasing $N$, but is always larger than ${\Delta^{({\rm{P}})}}\left( \tau  \right)$ for PS. The above observations are in accordance with Theorem \ref{Theorem_Optimality_N}.

   \subsection{Power Outage Probability}
   In this subsection, we study the power outage probability with a given harvested power target $\hat Q$ at Rx, which is defined as $p_{Q,\,out} \mathop  = \limits^\Delta {\Pr \left( {Q < \hat Q} \right)}$ with $Q$ denoting the harvested power in one block. In particular, we are interested in characterizing the asymptotic behavior of $p_{Q,\,out}$ as $P \to \infty$, namely \emph{power diversity order}, which is defined as
   \begin{equation}\label{Eq_EnergyDiversity}
      {{d_Q} \buildrel \Delta \over = - \mathop {\lim }\limits_{P \to \infty } \frac{{\log {p_{Q,\,out}}}}{{{{\log }} P}}.}
   \end{equation}

   \begin{proposition}\label{Proposition_EnergyDiversity_TBS}
      In the i.i.d. Rayleigh fading MISO channel, for TS the power outage probability $p_{Q,out}^{({\rm{T}})}$ with $P \to \infty$ is approximated by
      \begin{equation}\label{Eq_EnergyDiversity_TBS}
         {p_{Q,out}^{({\rm{T}})} = \left\{ {\begin{array}{*{20}{c}}
         {{{\left( {{{\hat Q}}/ \left({\theta P}\right)} \right)}^{{N_t}}}}  \\
         {{{\left( {N\bar A{{\left( {\ln \left(\theta P\right)} \right)}^{ - 1}}} \right)}^{{N_t}}}}  \\
         \end{array}\begin{array}{*{20}{c}}
         {,\,\,\,\bar A = 0}  \\
         {,\,\,\,\bar A > 0.}  \\
         \end{array}} \right.}
      \end{equation}
   \end{proposition}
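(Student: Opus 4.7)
The plan is to reduce the power outage probability to the CDF of $H$ evaluated at a critical channel realization $h^{\star}(P)$, and then extract its asymptotic scaling in $P$ by combining the small-argument expansion of $F_H$ with the large-argument asymptotics of the incomplete Gamma function. From (13), $Q^{(\rm{T})}(h, N, \bar A) = \theta P h\,\Gamma(N+1, N\bar A/h)/\Gamma(N+1)$ is continuous and strictly increasing in $h$ on $(0, \infty)$, so for each fixed $\hat Q > 0$ there is a unique $h^{\star}(P)$ solving $Q^{(\rm{T})}(h^{\star}, N, \bar A) = \hat Q$; the outage event $\{Q < \hat Q\}$ coincides with $\{H < h^{\star}(P)\}$, and therefore $p_{Q,\,out}^{(\rm{T})} = F_H(h^{\star}(P))$. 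Since $h^{\star}(P) \to 0$ as $P \to \infty$ in both sub-cases, the PDF in (14) yields the small-argument expansion $F_H(h) = \frac{N_t^{\,N_t - 1}}{(N_t - 1)!}\,h^{N_t} + o(h^{N_t})$, so only the $h^{N_t}$ dependence survives at the order stated in (17).

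For $\bar A = 0$, (13) collapses to $Q^{(\rm{T})}(h, N, 0) = \theta P h$, hence $h^{\star}(P) = \hat Q/(\theta P)$ exactly, and substituting into $F_H$ recovers the first line of (17) up to the absorbed multiplicative constant. For $\bar A > 0$, $h^{\star}(P)$ is only implicitly defined, and pinning down its asymptotics is the main step. Because $h^{\star} \to 0$ forces $N\bar A/h^{\star} \to \infty$, I would invoke the expansion $\Gamma(N+1, x) = x^N e^{-x}\,(1 + O(1/x))$ inside (13) to obtain
\begin{equation*}
\hat Q = \frac{\theta P (N\bar A)^N}{N!}\,(h^{\star})^{1 - N}\,e^{-N\bar A/h^{\star}}\,\bigl(1 + o(1)\bigr),
\end{equation*}
and taking logarithms rearranges this into $N\bar A/h^{\star} = \ln(\theta P) + (1-N)\ln h^{\star} + O(1)$.

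A one-step bootstrap then closes the argument: substituting the coarse estimate $h^{\star} \sim 1/\ln(\theta P)$ into the right-hand side shows that the $(1-N)\ln h^{\star}$ correction is only $O(\ln\ln(\theta P))$ and hence negligible against $\ln(\theta P)$, giving $h^{\star}(P) = N\bar A/\ln(\theta P)\cdot(1 + o(1))$; inserting this into $F_H$ produces the second line of (17). The chief technical obstacle is precisely this bootstrap in the $\bar A > 0$ case --- one must verify that the polynomial prefactor $(h^{\star})^{1 - N}$ does not promote the sub-leading logarithmic term into something comparable to $\ln(\theta P)$, so that $h^{\star}$ genuinely scales as $1/\ln(\theta P)$ to leading order. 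Once this is granted, both sub-cases of (17) follow from routine substitution into the small-$h$ expansion of $F_H$, and the corresponding power diversity orders in (16) are read off as $N_t$ for $\bar A = 0$ and $0$ for $\bar A > 0$.
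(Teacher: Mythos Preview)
Your proposal is correct and follows essentially the same route as the paper: both reduce $p_{Q,\,out}^{(\rm{T})}$ to $F_H$ evaluated at the critical channel level $h^\star(P)$ defined by $Q^{(\rm{T})}(h^\star,N,\bar A)=\hat Q$, use $F_H(h)\sim h^{N_t}$ for small $h$, and for $\bar A>0$ extract $h^\star\sim N\bar A/\ln(\theta P)$ from the large-argument behavior $\Gamma(N+1,x)\sim x^N e^{-x}$ after taking logarithms. Your explicit bootstrap showing that the $(1-N)\ln h^\star$ term contributes only $O(\ln\ln(\theta P))$ is a slightly more careful justification of what the paper handles by simply dropping $(N-1)\ln x$ against $-N\bar A x$ as $x\to\infty$, but the argument is otherwise identical.
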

   \begin{proof}
      Please refer to Appendix \ref{App_Proof_Proposition_EnergyDiversity_TBS}.
   \end{proof}

   From (\ref{Eq_EnergyDiversity}) and (\ref{Eq_EnergyDiversity_TBS}), it can be verified that the power diversity order of TS is $d_Q^{({\rm{T}})} = N_t$ when $\bar A = 0$, i.e., no WIT is required, while $d_Q^{({\rm{T}})} = 0$ with a fixed $\bar A > 0$ when both WIT and WET are implemented, which means that although $p_{Q,out}^{({\rm{T}})}$ decreases with increasing $P$, the decrease of $p_{Q,out}^{({\rm{T}})}$ is much slower than increase of $P$ as $P \to \infty$.

   \begin{figure}
      \centering
      \includegraphics[width=0.58\columnwidth]{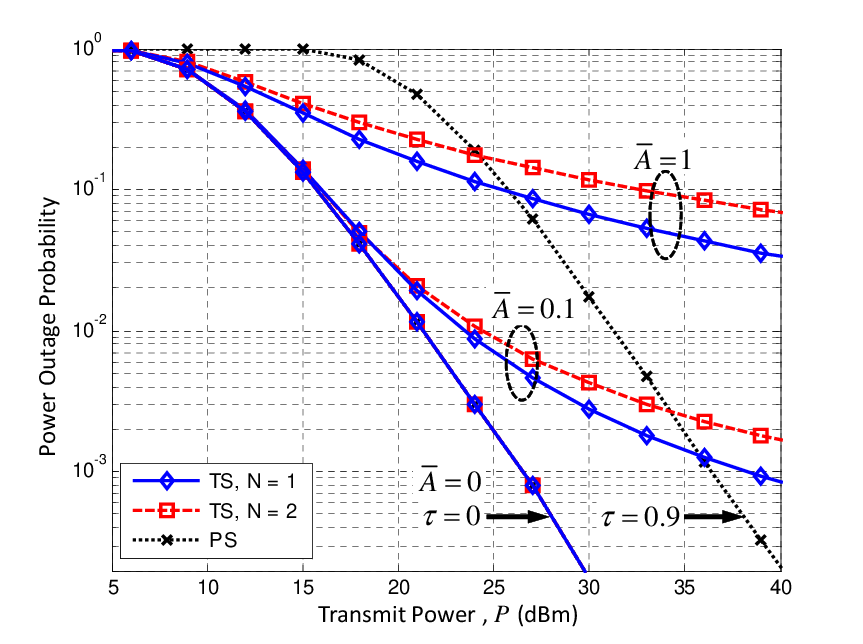}
      \caption{Power outage probability with $N_t = 2$ and $\hat Q = 1\mu$W.}
      \label{Fig_EnergyOutage_vs_P_EnergyDiversity}
   \end{figure}

   On the other hand, in the i.i.d. Rayleigh fading MISO channels, the power outage probability of PS with $P \to \infty$ can be obtained as $p_{Q,out}^{({\rm{P}})} = {\left( {\frac{\hat Q}{{(1 - \tau) \theta P }}} \right)^{ N_t}}$, $0 \le \tau < 1$; thus, from (\ref{Eq_Energy_TimeSharing}) and the fact that ${F_H}\left( h \right) \approx {h^{ N_t}}$ as $h \to 0$, we obtain the power diversity order as $d_Q^{({\rm{P}})} = N_t$, $0 \le \tau < 1$.

   Fig. \ref{Fig_EnergyOutage_vs_P_EnergyDiversity} shows the power outage probabilities of TS and PS versus the transmit power $P$ in dBm when $N_t = 2$ and $\hat Q = 1\mu$W with the same setup as for Fig. \ref{Fig_HarvestedEnergy_vs_H}, i.e., $\theta = 10^{-4}$. It is observed that the smallest power outage probabilities are achieved by TS with $\bar A = 0$ or equivalently PS with $\tau = 0$. When $\bar A >0$, $p_{Q,out}^{({\rm{T}})}$ for TS is observed to decrease slower with increasing $P$ than $p_{Q,out}^{({\rm{P}})}$ for PS, since $d_Q^{({\rm{T}})} = 0$, $\bar A > 0$ for TS while $d_Q^{({\rm{P}})} = N_t$, $0 \le \tau < 1$, for PS. Furthermore, it is also observed that $p_{Q,out}^{({\rm{T}})}$ decreases slower with increasing $P$ as $\bar A$ and/or $N$ increases, which is consistent with (\ref{Eq_EnergyDiversity_TBS}).

   \subsection{Numerical Results}\label{NumericalExample1}
   In this subsection, we compare the rate-energy performance of TS and PS for a practical SWIPT system setup and $N_t = 2$ with the same channel setup as for Figs. \ref{Fig_HarvestedEnergy_vs_H} and \ref{Fig_AchievableRate_vs_H}. It is further assumed that energy conversion efficiency is set to be $\zeta = 0.5$ to reflect practical power harvesting efficiency.

   \begin{figure}
      \centering
      \includegraphics[width=0.6\columnwidth]{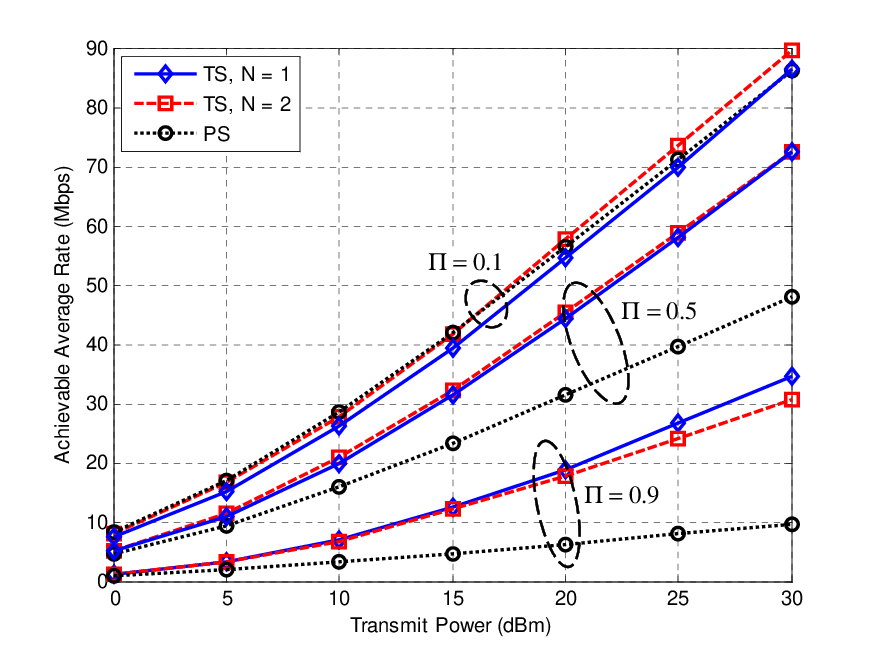}
      \caption{Comparison of the achievable average rate with $N_t = 2$ and $\Pi = 0.1$, $0.5$, and $0.9$.}
      \label{Fig_AvgRate_vs_P_Different_N}
   \end{figure}

   As inferred from Theorem \ref{Theorem_Optimality_N}, the asymptotic rate scaling factor of TS with $N = 1$ is the largest among all values of $N$ and is also larger than that for PS for a given power scaling factor as $P \to \infty$. However, it does not imply that the largest achievable average rate is always attained for a given average harvested power when $P$ is finite. Therefore, it is necessary to compare the achievable average rates for PS and TS with finite values of $P$. Fig. \ref{Fig_AvgRate_vs_P_Different_N} shows the achievable average rates for PS and TS versus transmit power in dBm under the same power scaling factor $\Pi = {\Pi ^{({\rm{T}})}}\left( {N,\bar A} \right) = {\Pi ^{({\rm{P}})}}\left( {\tau} \right)$, i.e., the same average power harvesting requirement $\bar Q = \zeta \, \theta P \, \Pi$ (e.g., $\bar Q = 45 \, \mu{\rm{W}}$ with $\zeta = 0.5$, $\theta = -40$dB, $\Pi = 0.9$ and $P$ = 30dBm). When $\Pi = 0.9$, the benefit from a larger rate scaling factor is clearly observed for TS with $N = 1$, since it achieves the largest average information rate. When $\Pi = 0.5$, the achievable average rates for TS are similar with $N = 1$ and $2$, but still grow faster with the transmit power than that for PS. When $\Pi = 0.1$, the gaps between rate scaling factors of different schemes are small (cf. Fig. \ref{Fig_AvgEnergy_vs_PreLogFactor}) and as a result their achievable average rates become similar.

   It is worth noting that one typical application scenario of the SWIPT is wireless sensor network, for which the power consumption at each sensor node is in general limited to $5$-$20\,\mu{\rm{W}}$. As observed in Fig. \ref{Fig_AvgRate_vs_P_Different_N}, with 30dBm (or 1W) transmit power, the amount of average harvested power at each receiver is $5$-$45\,\mu{\rm{W}}$ with a practical energy harvesting efficiency of $50\%$, which satisfies the power requirement of practical sensors. Furthermore, the received power can always be increased if transmit power is increased and/or the transmission distance is decreased, to meet higher power requirement of other wireless applications.

   Next, Fig. \ref{Fig_AvgRate_EnergyOutage_Region} shows the trade-offs between the achievable average rate and power non-outage probabilities, i.e., $1 - p_{Q,\,out}$, of TS and PS schemes under the same per-block harvested power requirements $\hat Q = 25\,\mu\rm{W}$ or $45\,\mu \rm{W}$ when the transmit power is set to be $30$dBm. It is observed that the minimum power outage probability of TS is attained by $N = 1$ when the achievable average rate is small, but by $N = 2$ when the achievable average rate is larger, while TS with both $N = 1$ and $2$ outperforms PS.

   \begin{figure}
      \centering
      \includegraphics[width=0.6\columnwidth]{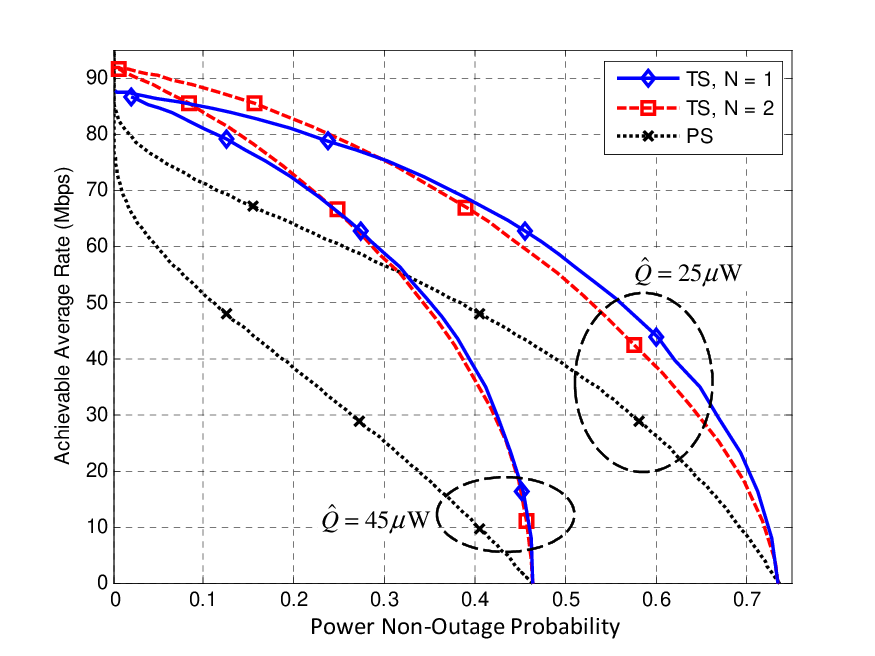}
      \caption{Trade-off between achievable average rate and power non-outage probability with $N_t = 2$, $30$dBm transmit power, and per-block harvested power requirement $\hat Q = 25$ or $45$ $\mu \rm{W}$.}
      \label{Fig_AvgRate_EnergyOutage_Region}
   \end{figure}

   \begin{remark}\label{Remark_Rate_Performance}
      Employing random beamforming at the transmitter requires additional complexity. However, from the above results, it is inferred that the achievable average rate is maximized by using only one single RB, i.e., $N = 1$, when the transmit power is asymptotically large or at finite transmit power when more harvested power is required (which is of more practical interest). In addition, TS with one single RB also optimizes power outage performance when transmit power is finite and large harvested power is required in each transmission block. Therefore, TS with one single RB in general can achieve the optimal WET efficiency and/or reliability with a given WIT rate requirement, thus yielding an appealing low-complexity implementation for practical systems.
   \end{remark}

   Finally, we investigate the overall network throughput in the multicast SWIPT system with the proposed TS scheme, which is defined as
   \begin{equation}\label{Eq_NW_Throughput}
      {{C} \buildrel \Delta \over =  \sum\limits_{i = 1}^K {\left( {1 - {p_{R,out}}\left( i \right)} \right) \bar R} ,}
   \end{equation}
   with $K$, ${{p_{R,out}}\left( i \right)}$, and $\bar R$ denoting the number of users in the network, the rate outage probability of the $i$th Rx, and the common information rate, respectively. It is worth noting that each Rx can adjust its threshold $\bar A_i$, $i = 1, \cdots, K$, according to the individual channel condition and rate requirement assuming that Rxs move slowly with a sufficiently large channel coherence time; therefore, rate outage of the $i$th user occurs when its average achievable rate cannot meet the rate target $\bar R$ even with $\bar A_i = \infty$, i.e., when all the received sub-blocks are allocated to ID mode for a given $\theta$. Accordingly, ${p_{R,out}}\left( i \right)$ is given by
   \begin{equation}\label{Eq_RateOutage}
      {{p_{R,out}}\left( i \right) = \Pr \left( {\bar R_i^{\left( {\rm{T}} \right)}}\left( {N,\infty} \right) < \bar R \right),}
   \end{equation}
   where ${\bar R_i^{\left( {\rm{T}} \right)}}\left( {N,\bar A} \right) = {{\mathbb E_H}[ {{R_i^{\left( {\rm{T}} \right)}}\left( {h,N,\bar A} \right)} ]}$ with ${{R_i^{\left( {\rm{T}} \right)}}\left( {h,N,\bar A} \right)}$ denoting the achievable rate of the $i$th Rx for given $N$ and $\bar A_i$ in a block with the normalized channel power $h$, which is given by (\ref{Eq_Rate_TBS2}). Note that for each Rx, $\theta_i$, $i = 1, \,\, \cdots, \,\, K$, can be modeled as $\theta_i = \theta_{L,i}\theta_{S,i}$ where $\theta_{L,i}$ and $\theta_{S,i}$ denote signal power attenuation due to distance-dependent pathloss and shadowing, respectively. Assuming fixed Rx locations, therefore, ${p_{R,out}}\left( i \right)$ should be measured according to the variation of $\theta_{S,i}$ in this case. Fig. \ref{Fig_NW_Throughput} shows the trade-offs between the network throughput $C$ defined in (\ref{Eq_NW_Throughput}) versus the average sum harvested power by all Rxs, denoted by $\bar Q$, under the same channel setup as for Figs. \ref{Fig_HarvestedEnergy_vs_H} and \ref{Fig_AchievableRate_vs_H}, with $K = 10$, $N = 1$, and $P = 30$dBm. The distance between the Tx and the $i$th Rx, denoted by $D_i$, is assumed to be uniformly distributed within $3{\rm{m}} \le D_i \le 10{\rm{m}}$, $i = 1, \,\, \cdots, \,\, K$. It is also assumed that $\theta_{L,i} = C_0 D_i^{-\alpha}$ with $C_0 = -20$dB denoting the pathloss at the reference distance $1$m and $\alpha = 3$ denoting the pathloss exponent. Assuming indoor shadowing, $\theta_{S,i}$ is drawn from lognormal distribution with standard deviation given by $3.72\,{\rm{dB}}$ \cite{Liberti}. Furthermore, each Rx is assumed to set $\bar A$ such that ${\bar R_i^{\left( {\rm{T}} \right)}}\left( {N,\bar A} \right) = \bar R$ if ${\bar R_i^{\left( {\rm{T}} \right)}}\left( {N,\infty} \right) \ge \bar R$, but set $\bar A = 0$, i.e. all the received power is used for power harvesting, otherwise. It is observed that the maximum throughput in the network is $C^* = 46.8$Mpbs with average harvested sum power $\bar Q^* = 424\mu$W.

   \begin{figure}
      \centering
      \includegraphics[width=0.6\columnwidth]{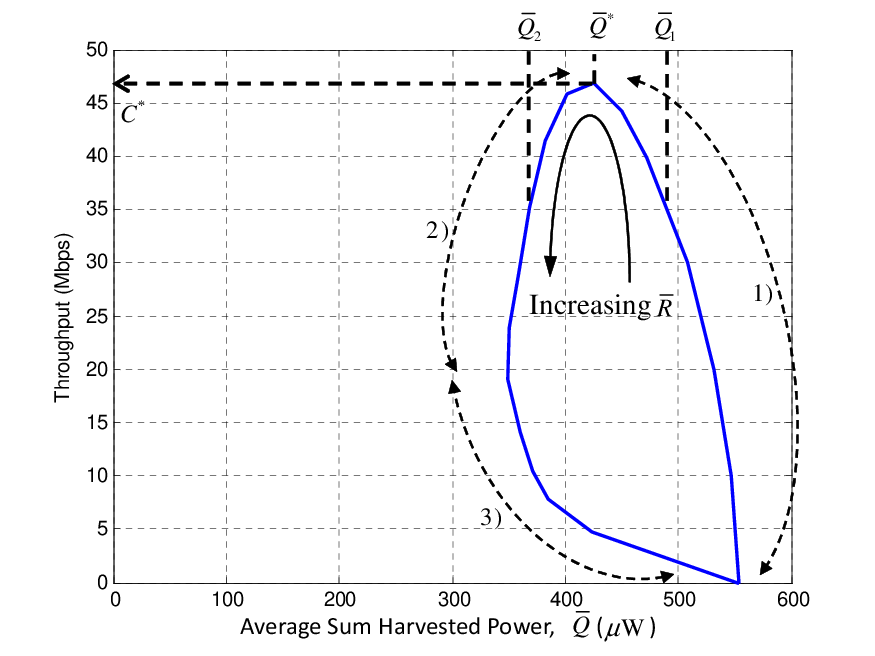}
      \caption{Trade-off between network throughput and average sum harvested power with $N_t = 2$, $N = 1$, and $30$dBm transmit power.}
      \label{Fig_NW_Throughput}
   \end{figure}

   In addition, the trade-offs shown in Fig. \ref{Fig_NW_Throughput} can be categorized into three regimes, as denoted by $1)$, $2)$, and $3)$ in the figure. When $\bar R$ is small, i.e., in the regime denoted by $1)$, $C$ increases with $\bar R$ since ${p_{R,out}}\left( i \right)$ is small. In this regime, each Rx sets larger $\bar A_i$ with increasing $\bar R$ to meet the rate target, and thus the harvested sum power decreases accordingly. When $\bar R$ is larger than a certain threshold, i.e., in the regime denoted by $2)$, $C$ decreases with increasing $\bar R$ since the number of Rx with rate outage increases. In this regime, $\bar Q$ also decreases with increasing $\bar R$ since Rxs with large $\theta_i$'s still set larger $\bar A_i$ with increasing $\bar R$ and their harvested power decreases. Finally, when $\bar R$ further increases, i.e., in the regime denoted by $3)$, $C$ decreases with increasing $\bar R$ whereas $\bar Q$ increases with $\bar R$. This is because most of Rxs in the network experience rate outage and thus only harvest power. When $\bar R \to \infty$, therefore, $C \to 0$ and $\bar Q$ becomes equivalent to that without WIT and with WET only, i.e., $\bar R = 0$ with $\bar A = 0$. Therefore, for a given throughput $C < C^*$, there are two possible values of average sum harvested power (e.g., $\bar Q_1 = 490\mu$W and $\bar Q_2 = 368\mu$W with $C = 20$Mbps), and thereby we can choose larger value of average sum harvested power for a given throughput (e.g., $\bar Q_1$ for the aforementioned example).

\section{Alternative Random Beam Designs}\label{Sec:OtherBeams}
It is worth noting that TS with Gaussian random beams (referred to as TS-G), as considered in the preceding sections, may not be practically favorable due to the fact that Gaussian random beams (GRBs) will cause large transmit power at certain sub-blocks. Instead, artificial channel fading within each transmission block of each Rx can be generated by employing non-Gaussian random beams with constant transmit power for TS. In this section, we investigate the performance of TS with two alternative RBs other than GRB, such that the average transmit power remains constant within each transmission block, which are given next.

   \subsection{Unitary Random Beams (URBs)}
   In this case, $N$ unitary random vectors obtained from the isotropic distribution \cite{Hassibi} are independently employed for the $N$ random beams at the $k$th sub-block, i.e., $\boldsymbol{\phi}_n\left( k \right)$, $1 \le n \le N$, $\forall k$.

   With URBs, it is in general difficult to obtain the closed-form expressions for the PDF and CDF of the received channel power $A \left( k \right)$ at each sub-block conditioned on $H = h$. However, if we consider the special case of $N_t = 2$ and $N = 1$, it is known that with URBs $A \left( k \right)$ is uniformly distributed within $[0, 2h]$. Thus, given a threshold $\bar A \ge 0$, the amount of harvested power in each block with URBs can be obtained using (\ref{Eq_Energy_TBS2}) as
   \begin{equation}\label{Eq_URB_Energy}
      {Q^{(\rm{U})}\left( {h,\bar A} \right) = \left\{ {\begin{array}{*{20}{c}}
      {\theta P(h - \frac{{\bar A}^2}{4h})}  \\
      0  \\
      \end{array}} \right.\begin{array}{*{20}{c}}
      {,\,\,\,\, 0 \le \bar A \le 2h\,\,\,}  \\
      {,\,\,\,\,\,\,\,\, \bar A > 2h. \,\,\,\,\,\,\,\,}  \\
      \end{array}}
   \end{equation}

   In the i.i.d. fading MISO channel, the average harvested power for TS with URBs (referred to as TS-U) given a fixed threshold $\bar A \ge 0$ is obtained as ${\bar Q^{({\rm{U}})}}\left( {\bar A} \right) = \int_0^\infty  {{Q^{({\rm{U}})}}\left( {h,\bar A} \right){f_H}\left( h \right)dh}$, where ${f_H}\left( h \right) = 4h{e^{ - 2h}}$ is given by (\ref{Eq_pdf_H}) for $N_t = 2$. It is worth noting that in the special case of $N_t = 2$ and $N = 1$, the unconditional distribution of $A \left( k \right)$ with URBs after averaging over the fading channels can be shown to be the exponential distribution, where the unconditional PDF is given by ${q_A^{(\rm{U})}}\left( a \right) = {e^{ - a}}$. Therefore, $\bar Q^{(\rm{U})} \left( {\bar A} \right)$ can be alternatively obtained as
   \[
      {{\bar Q}^{({\rm{U}})}}\left( {\bar A} \right) = \int_{\bar A}^\infty  {\theta P a{q_A^{(\rm{U})}}\left( a \right)da}
   \]
   \begin{equation}\label{Eq_AvgEnergy_TS_U}
      {\,\,\,\,\,\, = \theta P \Gamma \left( {2,\bar A} \right), }
   \end{equation}
   which is equivalent to ${Q^{(\rm{T})}}\left( {1,1,\bar A} \right)$ for TS-G given by (\ref{Eq_Energy_TBS3}) with $N = 1$ and $h = 1$. Similarly, given $\bar A \ge 0$, the achievable average transmission rate for TS-U is obtained as
   \[
      {{\bar R}^{({\rm{U}})}}\left( {\bar A} \right) = \int_0^{\bar A} {{{\log }_2}\left( {1 + \frac{\theta P a}{\sigma^2}} \right){q_A^{(\rm{U})}}\left( a \right)da} \,\,\,\,\,\,\,\,\,\,\,\,\,\,\,\,\,\,\,\,\,\,\,\,\,\,\,
   \]
   \begin{equation}\label{Eq_AvgRate_TS_U}
      {\,\,\,\,\,\,\,\,\,\,\,\,\,\,\,\,\,\,\,\,\,\,\,\,\,\,\,\,\,\,\,\,\,\,\,\,\,\,\,\,\,\,\,\,\,\,\,\,\,\,\,\,\,\,\,\,\,\,\,\,\,\,\,\,\,\,\,\,\,\,\,\,\,  = \frac{{{e^{ - \frac{1}{{\eta}}}}}}{{\ln 2}}\left( {{E_1}\left( {\frac{\sigma^2}{\theta P}} \right) - {E_1}\left( \bar A + {\frac{\sigma^2}{{\theta P}}} \right)} \right) - {e^{ - \bar A}}{\log _2}\left( {1 + \frac{\theta P \bar A}{\sigma^2}} \right),}
   \end{equation}
   with ${E_n}\left( z \right) = \int_1^\infty  {{e^{ - zt}}{t^{ - n}}dt}$ denoting the exponential integral function for integer $n \ge 0$, which is also equivalent to ${R^{({\rm{T}})}}\left( {1,1,\bar A} \right) $ for TS-G given by (\ref{Eq_Rate_TBS2}) with $h = 1$ and $N = 1$. In addition, given fixed $\bar A \ge 0$ and per-block power harvesting requirement $\hat Q > 0$, the power outage probability of TS-U in the i.i.d. Rayleigh fading MISO channel with $N_t = 2$ and $N = 1$ can be obtained from (\ref{Eq_URB_Energy}) as
   \begin{equation}\label{Eq_URB_EnergyOutage}
      {p_{Q,out}^{({\rm{U}})} = {F_H}\left( {\frac{{{{\hat Q}^2} + \sqrt {{{\hat Q}^2} + {{\theta^2}{P^2}{\bar A}^2}} }}{{2\theta P}}} \right),}
   \end{equation}
   where ${F_H}\left( h \right)$ is given by (\ref{Eq_CDF_H}) for $N_t = 2$. From (\ref{Eq_EnergyDiversity}) and (\ref{Eq_URB_EnergyOutage}), it is easily verified that TS-U in the case of $N_t = 2$ and $N = 1$ has the power diversity order of $0$, the same as TS-G.

   \subsection{Binary Random Beams (BRBs)}
   In this case, a random subset of $N$ out of $N_t$ transmit antennas at Tx, $1 \le N \le N_t$, are selected to transmit at each sub-block, which is equivalent to selecting ${\bf{\Phi}} \left( k \right) = [ {{\boldsymbol{\phi} _1}( k )\,\,{\boldsymbol{\phi} _2}( k )\,\, \cdots \,\,{\boldsymbol{\phi} _N}( k )} ] \in {{\mathbb R}^{{N_t} \times N}}$, $\forall k$, where
   ${\boldsymbol{\phi} _n}( k ) = [{\phi _{n,1}}\left( k \right) \,\,\, {\phi _{n,2}}\left( k \right) \,\,\, \cdots {\phi _{n,N_t}}\left( k \right)]^T$, $1 \le n \le N$, with ${\phi _{n,i}}\left( k \right) \in \left\{ {0,1} \right\}$, $1 \le i \le N_t$, such that ${\left\| {\boldsymbol{\phi} _n}( k ) \right\|^2} = 1$ and ${{\boldsymbol{\phi} _{m}^{T}}{{\left( {k} \right)}}{\boldsymbol{\phi} _n}\left( k \right)} = 0$, $n \ne m$. We assume that all the subsets of the selected antennas are equally probable.

   Consider the special case of $N_t = 2$ and $N = 1$. Denote ${\bf{h}} = {[{h_1}\,\,\,{h_2}]^T}$, $V = \max ( \, {{{\left| {{h_1}} \right|}^2},{{\left| {{h_2}} \right|}^2}} )$, and $W = \min ( \, {{{\left| {{h_1}} \right|}^2},{{\left| {{h_2}} \right|}^2}} )$. Note that in this case the received channel power at each sub-block is either $A \left( k \right) = V$ or $A \left( k \right) = W$, each of which occurs with a probability of $1/2$. Thus, given $V = v$, $W = w$, and a fixed threshold $\bar A \ge 0$, the amount of harvested power in each block with BRBs is obtained using (\ref{Eq_Energy_TBS2}) as
   \begin{equation}\label{Eq_AS_Energy}
      {Q^{(\rm{B})}\left( {v,w,\bar A} \right) = \left\{ {\begin{array}{*{20}{c}}
      \theta P\left( {v+w} \right)/2  \\
      {\theta Pv / 2}  \\
      0  \\
      \end{array}\begin{array}{*{20}{c}}
      {, \,\,\,\,\,\,\, \bar A < w \,\,\,\,\,\,\,}  \\
      {, \,\,\, w \le \bar A \le v}  \\
      {, \,\,\,\,\,\,\, \bar A > v. \,\,\,\,\,\,\,}  \\
      \end{array}} \right.}
   \end{equation}

   Similar to TS-U, in the i.i.d. fading MISO channel, it can be shown that with $N_t = 2$ and $N = 1$ the unconditional distribution of $A \left( k \right)$ with BRBs after averaging over the fading channels is the exponential distribution, where the unconditional PDF is also given by ${q_A^{(\rm{B})}}\left( a \right) = {e^{ - a}}$. Therefore, given a fixed threshold $\bar A \ge 0$, $\bar Q^{(\rm{B})} \left( {\bar A} \right) = \bar Q^{(\rm{U})} \left( {\bar A} \right)$ and $\bar R^{(\rm{B})} \left( {\bar A} \right) = \bar R^{(\rm{U})} \left( {\bar A} \right)$, where $\bar Q^{(\rm{B})} \left( {\bar A} \right)$ and $\bar R^{(\rm{B})} \left( {\bar A} \right)$ denote the average harvested power and achievable average information rate  for TS with BRBs (referred to as TS-B), respectively, and $\bar Q^{(\rm{U})} \left( {\bar A} \right)$ and $\bar R^{(\rm{U})} \left( {\bar A} \right)$ for TS-U are given by (\ref{Eq_AvgEnergy_TS_U}) and (\ref{Eq_AvgRate_TS_U}), respectively. In addition, given fixed $\bar A \ge 0$ and per-block power harvesting requirement $\hat Q > 0$, the power outage probability of TS-B in the i.i.d. Rayleigh fading MISO channel with $N_t = 2$ and $N = 1$ can be obtained from (\ref{Eq_AS_Energy}) as (see Appendix \ref{App_Derivation_P_E_out_B} for the detailed derivation)
   \[
      p_{Q,out}^{({\rm{B}})} = {\left( {1 - {e^{ - \bar A}}} \right)^2} + {\bf{1}}\left( {\bar A < 2D} \right) \cdot 2{e^{ - 2\left( {\bar A + D} \right)}}\left( { - 1 + {e^{\bar A}}} \right)\left( { - {e^{\bar A}} + {e^{2D}}} \right) \,\,\,\,\,\,\,\,\,\,\,\,\,\,\,\,\,
   \]
   \begin{equation}\label{Eq_BRB_EnergyOutage}
      {\,\,\,\,\,\,\,\,\,\,\,\,\,\,\,\,\,\,\,\,\,\,\,\,\,\,\,\,\,\,\,\,\,\,\,\,\,\,\,\,\,\, + {\bf{1}}\left( {\bar A < D} \right)\left( {{e^{ - 2\left( {\bar A + D} \right)}}{{\left( {{e^{\bar A}} - {e^D}} \right)}^2} + {e^{ - \bar A - 2D}}\left( {\left( { - 1 + \bar A - D} \right){e^{\bar A}} + {e^D}} \right)} \right),}
   \end{equation}
   where $D = {\hat Q}/({\theta P})$, and ${\bf{1}}\left( {x < y} \right)$ denotes the indicator function given by
   \[{\bf{1}}\left( {x < y} \right) = \left\{ {\begin{array}{*{20}{c}}
     1  \\
     0  \\
     \end{array}\begin{array}{*{20}{c}}
     {,\,\,\,\, {\rm{if}} \,\, x < y \,\,\,\,\,}  \\
     {,\,\,\,{\rm{otherwise}}.}  \\
   \end{array}} \right.\]
   From (\ref{Eq_BRB_EnergyOutage}), it is shown that both ${\bf{1}}\left( {\bar A < 2D} \right) = 0$ and ${\bf{1}}\left( {\bar A < D} \right) = 0$ if $P > 2\hat Q / (\theta \bar A)$, and thus $p_{Q,out}^{({\rm{B}})} = { ( {1 - {e^{ - \bar A}}} )^2}$. Therefore, TS-B in the case of $N_t = 2$ and $N = 1$ also has the power diversity order of $0$, the same as TS-G and TS-U.

\begin{figure}
   \centering
   \includegraphics[width=0.6\columnwidth]{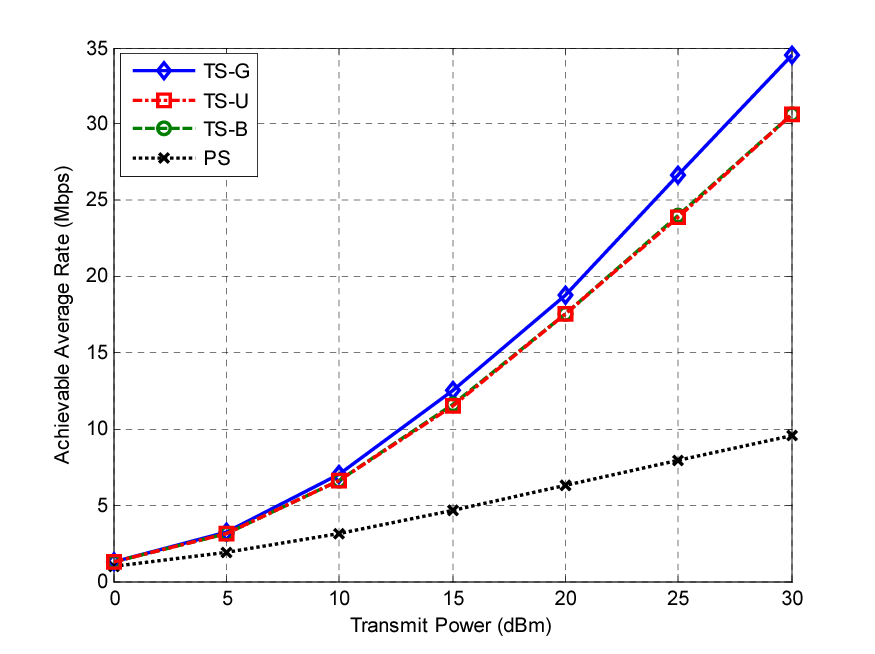}
   \caption{Comparison of the achievable average rate for different RB designs with $N_t = 2$, $N = 1$, and $\Pi = 0.9$.}
   \label{Fig_AvgRate_vs_P_DifferentBeams}
\end{figure}

Fig. \ref{Fig_AvgRate_vs_P_DifferentBeams} shows the achievable average rates of TS-G, TS-U, TS-B, and PS versus transmit power in dBm for the same setup as for Fig. \ref{Fig_AvgRate_vs_P_Different_N}, under the same average power harvesting requirement with $\Pi = 0.9$. It is observed that the achievable average information rates of TS-U and TS-B are the same, which is as expected for the considered case here of $N_t = 2$ and $N = 1$. It is also observed that the achievable average rates of TS-U and TS-B are larger than that of PS, but smaller than that of TS-G. This result is originated from the fact that the artificial channel fading generated by URBs or BRBs in this case is less substantial over time than that by GRBs, due to the limitation of constant average transmit power over sub-blocks with URBs or BRBs.

\begin{figure}
   \centering
   \includegraphics[width=0.6\columnwidth]{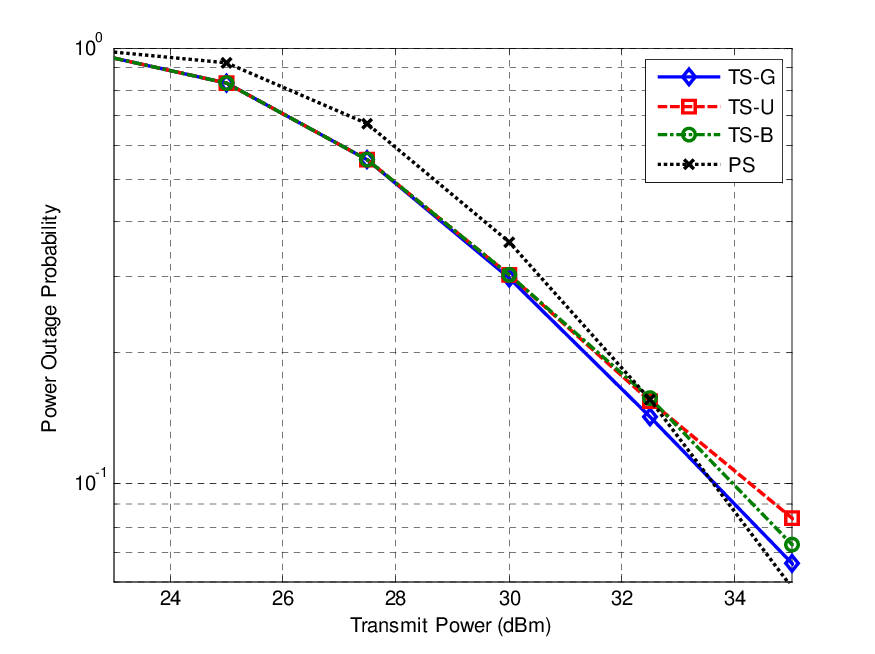}
   \caption{Comparison of power outage probability for different RB designs with $N_t = 2$, $N = 1$, $\bar R = 2$ bps/Hz, and $\hat Q = 25\mu \rm{W}$.}
   \label{Fig_EnergyOutage_vs_P_FixedRbar}
\end{figure}

Fig. \ref{Fig_EnergyOutage_vs_P_FixedRbar} shows the power outage probabilities of TS-U and TS-B versus transmit power in dBm for the same setup as for Fig. \ref{Fig_AchievableRate_vs_H}, when $\bar R = 2$ bps/Hz and $\hat Q = 25$ $\mu \rm{W}$, as compared to that of PS and TS-G. Among TS schemes, it is observed that the power outage probability of TS-G is the smallest. The power outage probability of TS-U is observed to be similar to that of TS-G when transmit power is small, but becomes larger than that of TS-G when transmit power is larger than $30$dBm. The power outage probability of TS-B is observed to be between those of TS-U and of TS-G. It is also observed that the power outage probability of PS is larger than those of all TS schemes when transmit power is small, but is the smallest when transmit power is larger than $33$dBm.

\section{Conclusion} \label{Sec:Conclusion}
This paper has studied a novel receiver mode switching scheme for the MISO multicast SWIPT system when the channel is only known at the receiver, but unknown at the transmitter. The proposed scheme exploits the benefit of opportunistic energy harvesting over artificial channel fading induced by employing multi-antenna random beamforming at the transmitter. By investigating the achievable average information rate, average harvested power/power outage probability, and their various trade-offs, it is revealed that the proposed scheme yields better power and information transfer performance than the reference scheme of periodic switching without transmit random beamforming when the harvested power requirement is sufficiently large. Particularly, employing one single random beam for the proposed scheme is proved to achieve the asymptotically optimal trade-off between the average information rate and average harvested power when transmit power goes to infinity. Moreover, it is shown by simulations that the best trade-offs between average information rate and average harvested power/power outage probability are also achieved by the proposed scheme employing one single random beam for large power harvesting targets of most practical interests, even with finite transmit power.

\appendices

   \section{Derivations of ${R^{({\rm{T}})}}\left( {h,1,\bar A} \right)$ and ${R^{({\rm{T}})}}\left( {h,2,\bar A} \right)$}\label{App_Rate_N1_Derivation}
   From (\ref{Eq_TBS_pdf}) with $N = 1$ and (\ref{Eq_Rate_TBS2}), ${R^{({\rm{T}})}}\left( {h,1,\bar A} \right)$ can be expressed as
   \begin{equation}\label{App_Rate_N1_1}
      {{R^{({\rm{T}})}}\left( {h,1,\bar A} \right) = \frac{1}{{\ln 2}}\int_0^{\bar A} {\ln \left( {1 + \tilde P a} \right)\frac{1}{h}{e^{ - \frac{a}{h}}}} da}
   \end{equation}
   \begin{equation}\label{App_Rate_N1_2}
      {\,\,\,\,\,\,\,\,\,\,\,\,\,\,\,\,\,\,\,\,\,\,\,\,\,\,\,\,\,\,\,\,\,\,\,\,\,\,\,\,\,\,\,\,\,\,\,\,\,\,\,\,\,\,\,\,\,\,\,\,\,\,\,\,\,\,\,\,\,\,\,\,\,\,\,\,\,\,\,\,\,\, = \frac{1}{{\ln 2}}\left( {{e^{ - \frac{{\bar A}}{h}}}\ln \left( {1 + \tilde P \bar A} \right) + \int_0^{\bar A} {\frac{\tilde P}{{1 + \tilde P a}}{e^{ - \frac{a}{h}}}} da} \right),}
   \end{equation}
   where $\tilde P = \frac{\theta P}{\sigma^2}$ and (\ref{App_Rate_N1_2}) is obtained from integrating (\ref{App_Rate_N1_1}) by part. By changing a variable as $x = 1 + \tilde P a$, the integral term in (\ref{App_Rate_N1_2}) can be obtained as
   \[
      \int_0^{\bar A} {\frac{\tilde P}{{1 + \tilde P a}}{e^{ - \frac{a}{h}}}} da = \int_1^{1 + \tilde P\bar A} {\frac{1}{x}{e^{ - \frac{{x - 1}}{{\tilde Ph}}}}} dx \,\,\,\,\,\,\,\,\,\,\,\,\,\,\,\,\,\,\,\,\,\,\,\,\,\,\,\,\,\,\,\,\,\,\,\,\,\,\,\,
   \]
   \begin{equation}\label{App_Rate_N1_3}
      {\,\,\,\,\,\,\,\,\,\,\,\,\,\,\,\,\,\,\,\,\,\,\,\,\,\,\,\,\,\,\,\,\,\,\,\,\,\,\,\,\,\,\,\,\,\,\,\,\,\,\,\,\,\,\,\,\,\,\,\,\,\,\,\,\,\,\, = {e^{ \frac{1}{{\tilde P h}}}}\left( {\int_1^\infty  {\frac{1}{x}{e^{ - \frac{x}{{\tilde P h}}}}} dx - \int_{1 + \tilde P \bar A}^\infty  {\frac{1}{x}{e^{ - \frac{x}{{\tilde P h}}}}} dx} \right)}
   \end{equation}
   \begin{equation}\label{App_Rate_N1_4}
      { \,\,\,\,\,\,\,\,\,\,\,\,\,\,\,\,\,\,\,\,\,\,\,\,\,\,\,\,\,\,\,\,\,\,\,\,\,\,\,\,\,\,\,\,\,\,\,\,\,\,\,\,\,\,\,\,\, = {e^{ \frac{1}{{\tilde P h}}}}\left( {{E_1}\left( {\frac{1}{{\tilde P h}}} \right) - {E_1}\left( {\frac{{1 + \tilde P \bar A}}{{\tilde P h}}} \right)} \right),}
   \end{equation}
   where, for integer $n \ge 0$, ${E_n}\left( z \right) = \int_1^\infty  {{e^{ - zt}}{t^{ - n}}dt}$ denotes the exponential integral function; (\ref{App_Rate_N1_4}) is obtained by a change of variable as $y = \left( {1 + \tilde P \bar A} \right)x$ for the second integral term in (\ref{App_Rate_N1_3}). From (\ref{App_Rate_N1_2}), (\ref{App_Rate_N1_4}), and $\tilde P = \frac{\theta P}{\sigma^2}$, ${R^{({\rm{T}})}}\left( {h,1,\bar A} \right)$ is obtained as
   \begin{equation}\label{Eq_Rate_TBS_N_1}
      {{R^{({\rm{T}})}}\left( {h,1,\bar A} \right) = \frac{{{e^{  \frac{\sigma^2}{{\theta P h}}}}}}{{\ln 2}}\left( {{E_1}\left( {\frac{\sigma^2}{{\theta P h}}} \right) - {E_1}\left( {\frac{{\bar A}}{{h}} + \frac{\sigma^2}{\theta Ph}} \right)} \right) - {e^{ - \frac{{\bar A}}{h}}}{\log _2}\left( {1 + \frac{\theta P\bar A}{\sigma^2}} \right).}
   \end{equation}

   When $N = 2$, ${R^{({\rm{T}})}}\left( {h,2,\bar A} \right)$ can be derived similarly by integrating (\ref{Eq_Rate_TBS2}) by part. In this operation, it is necessary to apply differentiation of the incomplete Gamma function given by \cite{Geddes}
   \begin{equation}\label{App_Meijer}
      {\frac{\partial }{{\partial N}}\Gamma \left( {N,x } \right) = \Gamma \left( {N,x } \right)\ln x  + x G_{2,3}^{3,0}\left( {\left. {\begin{array}{*{20}{c}}
      {0,0}  \\
      {N - 1, - 1, - 1}  \\
      \end{array}} \right|x } \right),}
   \end{equation}
   where $G_{p,q}^{m,n}\left( {\left. {\begin{array}{*{20}{c}} {{a_1},\,\, \cdots ,\,\,{a_p}}  \\ {{b_1},\,\, \cdots ,\,\,{b_q}}  \\ \end{array}} \right|z} \right)$ denotes the Meijer-G function, defined as \cite[9.301]{TableOfIntegral}
   \begin{equation}\label{Meijer}
      {G_{p,q}^{m,n}\left( {\left. {\begin{array}{*{20}{c}}
      {{a_1},\,\, \cdots ,\,\,{a_p}}  \\
      {{b_1},\,\, \cdots ,\,\,{b_q}}  \\
      \end{array}} \right|z} \right) = \frac{1}{{2\pi i}}\int_L {\frac{{\prod\limits_{j = 1}^m {\Gamma \left( {{b_j} - s} \right)} \prod\limits_{k = 1}^n {\Gamma \left( {1 - {a_k} + s} \right)} }}{{\prod\limits_{j = m + 1}^q {\Gamma \left( {1 - {b_j} + s} \right)} \prod\limits_{k = n + 1}^p {\Gamma \left( {{a_k} - s} \right)} }}} {z^s}ds,}
   \end{equation}
   with $\int_L {}$ denoting the Barres integral. By the definition of the Meijer-G function, the last term in (\ref{App_Meijer}) can be represented by
   \clearpage
   \begin{equation}\label{App_Meijer_Simplify1}
      {xG_{2,3}^{3,0}\left( {\left. {\begin{array}{*{20}{c}}
      {0,0}  \\
      {N - 1, - 1, - 1}  \\
      \end{array}} \right|x} \right) = \frac{1}{{2\pi i}}\int_L {\frac{{\Gamma \left( {N - 1 - s} \right)\Gamma \left( { - 1 - s} \right)\Gamma \left( { - 1 - s} \right)}}{{\Gamma \left( { - s} \right)\Gamma \left( { - s} \right)}}{x^{s + 1}}ds}}
   \end{equation}
   \begin{equation}\label{App_Meijer_Simplify3}
      { = G_{2,3}^{3,0}\left( {\left. {\begin{array}{*{20}{c}}
      {1,1}  \\
      {0,0,N}  \\
      \end{array}} \right|x} \right), \,\,\,\,}
   \end{equation}
   where (\ref{App_Meijer_Simplify3}) is achieved by a change of variable as $t = s+1$ in (\ref{App_Meijer_Simplify1}). By applying (\ref{App_Meijer})-(\ref{App_Meijer_Simplify3}) to the integration of (\ref{Eq_Rate_TBS2}) by part, ${R^{({\rm{T}})}}\left( {h,2,\bar A} \right)$ can be obtained as
   \[
      {R^{({\rm{T}})}}\left( {h,2,\bar A} \right) = \left( {\frac{2 \sigma^2}{{\theta P h}}{e^{ - \frac{{2\bar A}}{h}}} - {e^{\frac{2 \sigma^2}{{\theta P h}}}}\Gamma \left( {2,2 \left( \frac{{ {\bar A} }}{{h}} + \frac{\sigma^2}{\theta Ph} \right)} \right)} \right){\log _2}\left( {1 + \frac{\theta P\bar A}{\sigma^2}} \right) \,\,\,\,\,\,\,\,\,\,\,\,\,\,\,\,\,\,\,\,\,\,\,\,\,\,\,\,\,\,
   \]
   \[
      + \frac{2 \sigma^2}{{\theta Ph\ln 2}}{e^{\frac{2 \sigma^2}{{\theta Ph}}}}\left( {{E_1}\left( 2 \left( {\frac{{{\bar A}}}{{h}} + \frac{\sigma^2}{\theta Ph}} \right) \right) - {E_1}\left( {\frac{2 \sigma^2}{{\theta Ph}}} \right)} \right)
   \]
   \begin{equation}\label{Eq_Rate_TBS_N_2}
      {\,\,\,\,\,\,\,\,\,\,\,\,\,\,\,\,\,\,\,\,\,\,\,\,\,\,\,\,\,\,\,\,\,\,\,\,\,\,\,\,\,\,\,\,\,\,\,\,\,\,\,\,\,\,\,\,\,\,\,\,\,\,\,\,\, + \frac{1}{{\ln 2}}{e^{\frac{2 \sigma^2}{{\theta Ph}}}}\left( {G_{2,3}^{3,0}\left( {\left. {\begin{array}{*{20}{c}}
          {1,1}  \\
          {0,0,2}  \\
          \end{array}} \right|\frac{2\sigma^2}{{\theta Ph}}} \right) - G_{2,3}^{3,0}\left( {\left. {\begin{array}{*{20}{c}}
          {1,1}  \\
          {0,0,2}  \\
          \end{array}} \right|   2 \left( {\frac{{ {\bar A} }}{{h}} + \frac{\sigma^2}{\theta Ph} } \right)            }\right)} \right).}
   \end{equation}

   \section{Derivations of $C_0\left( h, N, \bar A \right)$ in (\ref{Eq_Rate_TBS3})}\label{App_C0_Derivation}
   From (\ref{Eq_TBS_pdf}) and (\ref{Eq_pdf_H}), $C_0\left( {h,\bar A,N} \right)$ in (\ref{Eq_Rate_TBS3}) can be expressed as
   \[ C_0 \left( {h,\bar A,N} \right) = \int_0^{\bar A} {{{\log }_2}\left( a \right){f_{A\left| H \right.}}\left( {a\left| h \right.} \right)da} \,\,\,\,\,\,\,\,\,\,\,\,\,\,\,\,\,\,\,\,\,\,\,\,\,\,\,\,\,\,\,\,\,\,\,\,\,\,\,\,\,\,\,\,\,\,\,\,\,\,\,\,\,\,\,\,\,\,\,\,\,\,\,\,\,\,\,\,\,\,\,\,\,\,\,\,\,\,\,\,\,\,\,\,\,\,\,\,\,\,\,\,\,\,\,\,\,\,\,\,\,\,\,\,\,\,\,\,\,\,\,\,\,\,\,\,\,\,\,\,\,\,\,\,\,\,\,\,\,\,\, \]
   \begin{equation}\label{App_TBS_Constant1}
      {\,\,\,\,\,\,\,\,\,\,\,\,\,\,\,\,\, = \underbrace {\frac{{{{\left( {N/h} \right)}^N}}}{{\Gamma \left( N \right)\ln 2}}\int_0^\infty  {{a^{N - 1}}{e^{ - \frac{N}{h}a}}\ln \left( a \right)da} }_{\alpha} - \underbrace {\frac{{{{\left( {N/h} \right)}^N}}}{{\Gamma \left( N \right)\ln 2}}\smallint _{\bar A}^\infty {a^{N - 1}}{e^{ - \frac{N}{h}a}}\ln \left( a \right)da}_{\beta}.}
   \end{equation}

   From \cite[4.352-1]{TableOfIntegral}, $\alpha$ can be derived as
   \begin{equation}\label{App_TBS_Constant2}
      {\alpha =  \frac{{\Psi \left( N \right)}}{{\ln 2}} + {\log _2}\frac{h}{N}.}
   \end{equation}

   Next, by changing variable as $a = \bar A x$, $\beta$ can be modified as
   \begin{equation}\label{App_TBS_Constant3}
      {\beta = \underbrace { - {{\left( {\frac{{N\bar A}}{h}} \right)}^N}\frac{{{{\log }_2}\left( {\bar A} \right)}}{{\Gamma \left( N \right)}}\int_1^\infty  {{x^{N - 1}}{e^{ - \frac{{N\bar A}}{h}x}}dx} }_{{\beta _1}} - \underbrace {{{\left( {\frac{N\bar A}{h}} \right)}^N}\frac{1}{{\Gamma \left( N \right)\ln 2}}\int_1^\infty  {{x^{N - 1}}{e^{ - \frac{{N\bar A}}{h}x}}\ln xdx} }_{{\beta _2}},}
   \end{equation}
   where, by the similar process to derive (\ref{Eq_Energy_TBS3}), $\beta_1$ is derived as
   \begin{equation}\label{App_TBS_Constant4}
      { \beta_1  =  - \frac{{\Gamma \left( {N,\frac{{N\bar A}}{{h}}} \right)}}{{\Gamma \left( N \right)}}{\log _2}\left( {\bar A} \right).}
   \end{equation}
   In addition, by changing variable as $\frac{{N\bar A}}{{h}} = \theta$, $\beta_2$ can be derived from \cite[4.358-1]{TableOfIntegral} as
   \begin{equation}\label{App_TBS_Constant5}
      {\beta_2  = \frac{{{\theta ^N}}}{{\Gamma \left( N \right)}\ln 2}\int_1^\infty  {{x^{N - 1}}{e^{ - \theta x}}\ln xdx} = \frac{\partial }{{\partial N}}\left( {{\theta ^{ - N}}\Gamma \left( {N,\theta } \right)} \right) .}
   \end{equation}
   Since $\frac{\partial }{{\partial N}}{\theta ^{ - N}} =  - {\theta ^{ - N}}\ln \theta$, $\beta_2$ in (\ref{App_TBS_Constant5}) can be obtained from (\ref{App_Meijer})-(\ref{App_Meijer_Simplify3}) as
   \[ {\beta_2 = \frac{1}{{\Gamma \left( N \right)\ln 2}}\theta G_{2,3}^{3,0}\left( {\left. {\begin{array}{*{20}{c}}
      {0,0}  \\
      {N - 1, - 1, - 1}  \\
      \end{array}} \right|\theta } \right)} \]
   \begin{equation}\label{App_TBS_Constant6}
      {= \frac{1}{{\Gamma \left( N \right)\ln 2}} G_{2,3}^{3,0}\left( {\left. {\begin{array}{*{20}{c}}
      {1,1}  \\
      {0,0,N}  \\
      \end{array}} \right|\theta } \right), \,\,\,\,\,\,\,\,\,\,\,\,\,\,\,}
   \end{equation}
   From (\ref{App_TBS_Constant1})-(\ref{App_TBS_Constant6}) and by changing variable as $\theta = {\frac{{N\bar A}}{{h}}}$ in (\ref{App_TBS_Constant6}), we arrive at (\ref{Eq_Rate_TBS3}). This completes the derivation of $C_0 \left( {h,\bar A,N} \right)$ in (\ref{Eq_Rate_TBS3}).

   \section{Proof of Proposition \ref{Proposition_Avg_Scaling}}\label{App_Proof_Proposition_Avg_Scaling}
   With $P \to \infty$, the achievable average information rate for TS is expressed from (\ref{Eq_Rate_TBS3}) as
   \[
      {\mathbb E_H}\left[ {{R^{({\rm{T}})}}\left( {h,N,\bar A} \right)} \right] = {\mathbb E_H}\left[ {F_{A\left| H \right.}^{(N)}\left( {\bar A\left| h \right.} \right)} {\log _2}\left( \frac{\theta P}{\sigma^2} \right) + {{C_0}\left( {h,N,\bar A} \right)} \right] \,\,\,\,\,\,\,\,\,\,\,\,\,\,\,\,\,\,\,\,\,\,\,\,\,\,\,\,\,\,\,\,\,\,\,\,\,\,\,\,\,\,\,
   \]
   \[
      \,\,\,\,\,\,\,\,\,\,\,\,\,\,\,\,\,\,\,\,\,\,\,\,\,\,\,\,\,\,\,\,\,\,\,\,\,\,\,\,\,\,\,\,\,\,\,\,\,\,\,\,\,\,\,\,\,\,\,\,\,\,\,\,\,\,\,\,\,\,\,\,\,\,\,\,\,\,\,\, = {\mathbb E_H}\left[ {F_{A\left| H \right.}^{(N)}\left( {\bar A\left| h \right.} \right)} \right] {\log _2}P + {\mathbb E_H}\left[ {F_{A\left| H \right.}^{(N)}\left( {\bar A\left| h \right.} \right)} {\log _2}\left( \frac{\theta}{\sigma^2} \right) + {{C_0}\left( {h,N,\bar A} \right)} \right],
   \]
   where ${\mathbb E_H}\left[ {F_{A\left| H \right.}^{(N)}\left( {\bar A\left| h \right.} \right)} {\log _2}\left( \frac{\theta}{\sigma^2} \right) + {{C_0}\left( {h,N,\bar A} \right)} \right]$ is a constant not related to $P$ and thus regarded as $o\left( {{{\log }_2}\,\eta} \right)$ since ${F_{A\left| H \right.}^{(N)}\left( {\bar A\left| h \right.} \right)} {\log _2}\left( \frac{\theta}{\sigma^2} \right) + {{C_0}\left( {h,N,\bar A} \right)}$ is a constant not related to $P$.

   For an integer $N \ge 1$, note that ${\Gamma \left( {N,x} \right)}$ is equivalently expressed as \cite{Proakis}
   \begin{equation}\label{App_IncompleteGamma}
      {{\Gamma \left( {N,x} \right)} = (N - 1)!\,{e^{ - x}}{\sum\limits_{m = 0}^{N-1} {\frac{x^m}{{m!}}} }.}
   \end{equation}
   From (\ref{Eq_TBS_CDF}), (\ref{Eq_pdf_H}), and (\ref{App_IncompleteGamma}), ${F_A^{(N)}}\left( a \right) =  {\mathbb E_H}\left[ {F_{A\left| H \right.}^{(N)}\left( {a\left| h \right.} \right)} \right]$ is obtained as
   \[ {F_A^{(N)}}\left( a \right) = \int_0^\infty  {{F_{A\left| H \right.}^{(N)}}\left( {a\left| h \right.} \right)} {f_H}\left( h \right)dh \,\,\,\,\,\,\,\,\,\,\,\,\,\,\,\,\,\,\,\,\,\,\,\,\,\,\,\,\,\,\,\,\,\,\,\,\,\,\,\,\,\,\,\,\,\,\,\,\,\,\,\,\,\,\,\,\,\,\,\,\,\,\,\,\,\,\,\,\,\,\,\,\,\,\,\,\,\,\,\,\,\,\,\,\,\,\, \]
   \[ \,\,\,\,\,\,\,\,\,\,\,\,\, = \int_0^\infty  {\left( {1 - {e^{ - \frac{{Na}}{h}}}\sum\limits_{m = 0}^{N - 1} {\frac{1}{{m!}}{{\left( {\frac{{Na}}{h}} \right)}^m}} } \right)\frac{{N_t^{{N_t}}}}{{\Gamma \left( {{N_t}} \right)}}{h^{{N_t} - 1}}{e^{ - {N_t}h}}dh} \]
   \begin{equation}\label{App_F_A_a1}
      {\,\,\,\,\,\, = 1 - \frac{{N_t^{{N_t}}}}{{\Gamma \left( {{N_t}} \right)}}\sum\limits_{m = 0}^{N - 1} {\frac{1}{{m!}}{{\left( {Na} \right)}^m}\underbrace {\int_0^\infty  {{h^{{N_t} - m - 1}}{e^{ - {N_t}h - \frac{{Na}}{h}}}dh} }_{\buildrel \Delta \over = \,\, \alpha} } .}
   \end{equation}
   By applying \cite[3.471-9]{TableOfIntegral} to $\alpha$, we can obtain (\ref{Eq_Marginal_CDF}). This completes the proof of Proposition \ref{Proposition_Avg_Scaling}.

   \section{Proof of Proposition \ref{Proposition_AvgEnergy}}\label{Proof_Proposition_AvgEnergy}
   From (\ref{Eq_Energy_TBS3}), (\ref{Eq_pdf_H}), and (\ref{App_IncompleteGamma}), ${\bar Q^{({\rm{T}})}}\left( {N,\bar A} \right) = {\mathbb E_H}\left[ {{Q^{\left( {\rm{T}} \right)}}\left( {h,N,\bar A} \right)} \right]$ is further obtained as
   \[
      {\bar Q^{({\rm{T}})}}\left( {N,\bar A} \right) = \theta P \mathbb E_H \left[ {{h}{e^{ - \frac{{N\bar A}}{{h}}}}\sum\limits_{k = 0}^N {\frac{1}{{k!}}{{\left( {N\bar A} \right)}^k}{{h}^{ - k}}} } \right] \,\,\,\,\,\,\,\,\,
   \]
   \[
      \,\,\,\,\,\,\,\,\,\,\,\,\,\,\,\,\,\,\,\,\,\,\,\, = \theta P \mathbb E_H \left[ {{e^{ - \frac{{N\bar A}}{{h}}}}\sum\limits_{k = 0}^N {\frac{1}{{k!}}{{\left( {N\bar A} \right)}^k}{{h}^{1 - k}}} } \right]
   \]
   \begin{equation}\label{App_AvgEnergy3}
      {\,\,\,\,\,\,\,\,\,\,\,\,\,\,\,\,\,\,\,\,\,\,\,\,\,\,\,\,\,\,\,\,\,\,\,\,\,\,\,\,\,\,\,\,\,\,\,\,\,\,\,\,\,\,\,\,\,\,\,\,\,\,\,\,\,\,\,\,\,\,\,\,\,\, = \theta P\frac{{N_t^{{N_t}}}}{{\Gamma \left( {{N_t}} \right)}}\sum\limits_{k = 0}^N {\frac{1}{{k!}}{{\left( {N\bar A} \right)}^k}\underbrace {\int_0^\infty  {{h^{{N_t} - k}}{e^{ - {N_t}h - \frac{{N\bar A}}{h}}}dh} }_{\buildrel \Delta \over = \,\, \beta} } . \,\,\,\,\,\,\,\,\,\,\,\,\,}
   \end{equation}
   By applying \cite[3.471-9]{TableOfIntegral} to $\beta$ in (\ref{App_AvgEnergy3}), we obtain (\ref{Eq_AvgEnergy_Slope}). This completes the proof of Proposition \ref{Proposition_AvgEnergy}.

   \section{Proof of Theorem \ref{Theorem_Optimality_N}}\label{App_Proof_Theorem_Optimality_N}
   First, the former part of Theorem \ref{Theorem_Optimality_N} can be proved using the following lemma by considering an arbitrary distribution of $A$ with PDF and CDF denoted by ${g_A}\left( a \right)$ and ${G_A}\left( a \right)$, respectively, where ${G_A}\left( a \right) > 0$ for $a > 0$.
   \begin{lemma}\label{Lemma_OOS_vs_TBS}
      Given ${G_A}\left( {\bar A} \right) = \tau$ with $0 < \tau < 1$ and $0 < \bar A < \infty$,
      \begin{equation}\label{Eq_Lemma_OOS_vs_TBS}
         {\int_{\bar A}^\infty  {a{g_A}\left( a \right)da} > \left( {1 - \tau } \right)b.}
      \end{equation}
   \end{lemma}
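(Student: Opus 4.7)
The plan is to establish the inequality via a simple pointwise comparison of integrands on the integration region $(\bar{A}, \infty)$, followed by integration. (I read the ``$b$'' in the statement as a typo for $\bar{A}$, so the target inequality is $\int_{\bar{A}}^{\infty} a\, g_A(a)\, da > (1-\tau)\bar{A}$, which is the natural elementary lower bound on the partial expectation.)

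First I would observe that for every $a \in (\bar{A}, \infty)$, we have $a > \bar{A}$, and hence $a\, g_A(a) \geq \bar{A}\, g_A(a)$ pointwise, with strict inequality at every $a$ where $g_A(a) > 0$. Integrating this pointwise bound over $(\bar{A}, \infty)$ and using the definition of the CDF yields
\begin{equation}
\int_{\bar{A}}^{\infty} a\, g_A(a)\, da \;\geq\; \bar{A} \int_{\bar{A}}^{\infty} g_A(a)\, da \;=\; \bar{A}\bigl(1 - G_A(\bar{A})\bigr) \;=\; (1-\tau)\bar{A}. \nonumber
\end{equation}

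Next I would promote the $\geq$ to a strict inequality. Since the hypotheses give $1 - \tau > 0$, i.e., $\int_{\bar{A}}^{\infty} g_A(a)\, da > 0$, the PDF $g_A$ must be strictly positive on a subset of $(\bar{A}, \infty)$ of positive Lebesgue measure. On that subset the pointwise comparison $a\, g_A(a) > \bar{A}\, g_A(a)$ holds strictly, so the integrated inequality is also strict, which finishes the proof.

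The argument is essentially one line, so there is no real obstacle; the only point requiring a moment of care is justifying strictness, which is handled by invoking $1-\tau > 0$ together with $\bar{A} < \infty$ so that the interval $(\bar{A}, \infty)$ is nonempty and carries positive probability mass. No further properties of $g_A$ (such as a specific form, independence of other parameters, or the value of $\mathbb{E}[A]$) are needed, which is what makes this lemma a convenient building block for the comparison between the threshold switching and periodic switching schemes in Theorem~\ref{Theorem_Optimality_N}.
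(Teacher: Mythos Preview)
Your reading of $b$ as a typo for $\bar A$ is the crux of the problem, and it is incorrect. In the context where the lemma is stated (the proof of Theorem~\ref{Theorem_Optimality_N} in Appendix~\ref{App_Proof_Theorem_Optimality_N}), $b$ denotes the mean of the distribution, i.e., $b = \int_0^\infty a\,g_A(a)\,da$; this is used explicitly in the first displayed chain in Appendix~\ref{App_Proof_Lemma_TBS_vs_OOS}, and the lemma is later applied with $b=1$ and $b=h$ corresponding to the unconditional and conditional channel-power means. So the target inequality is $\int_{\bar A}^\infty a\,g_A(a)\,da > (1-\tau)\,\mathbb{E}[A]$, not $(1-\tau)\bar A$.

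Your pointwise comparison $a\,g_A(a) \ge \bar A\,g_A(a)$ on $(\bar A,\infty)$ yields only $\int_{\bar A}^\infty a\,g_A(a)\,da > (1-\tau)\bar A$. This is strictly weaker than the lemma whenever $\bar A < b$, which is precisely the regime that matters for the application (small thresholds, large harvested-power targets). In that regime your bound says nothing about whether the partial expectation exceeds $(1-\tau)b$. Your closing remark that ``the value of $\mathbb{E}[A]$ is not needed'' is therefore exactly backwards: the mean is the content of the lemma.

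The paper's proof does use the mean. It writes $\tilde E = \int_{\bar A}^\infty a\,g_A(a)\,da - (1-\tau)b$, substitutes $b=\int_0^\infty a\,g_A(a)\,da$ and integrates by parts to get $\tilde E = (b-\bar A)G_A(\bar A) + \int_0^{\bar A} G_A(a)\,da$, then shows $\tilde E>0$ by noting it is positive for $\bar A\le b$ directly, and for $\bar A>b$ by a monotonicity argument ($d\tilde E/d\bar A=(b-\bar A)g_A(\bar A)<0$) combined with $\tilde E\to 0$ as $\bar A\to\infty$. If you want a one-line version of the correct statement, observe that $\tilde E = \int_0^{\bar A}\bigl(G_A(\bar A)-G_A(a)\bigr)\,da + G_A(\bar A)\int_{\bar A}^\infty \bigl(1-G_A(a)\bigr)\,da$ after rewriting $b=\int_0^\infty(1-G_A(a))\,da$, and both terms are nonnegative with the first strictly positive under the hypothesis $G_A(a)>0$ for $a>0$.
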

   \begin{proof}
      Please refer to Appendix \ref{App_Proof_Lemma_TBS_vs_OOS}.
   \end{proof}

   For TS, we have the rate scaling factor ${{\Delta^{\left( {\rm{T}} \right)}}\left( {N,\bar A} \right) = {F_A^{(N)}}\left( \bar A \right)}$ from Proposition \ref{Proposition_Avg_Scaling}, and it can be shown from (\ref{Eq_Marginal_CDF}) that ${F_A^{(N)}}\left( a \right) > 0$ for $a > 0$. In addition, the energy scaling factor for TS can be alternatively expressed as ${{\Pi^{\left( {\rm{T}} \right)}}\left( {N,\bar A} \right) = \int_{\bar A}^{\infty} {a{{f_A^{(N)}}\left( a \right)}da}}$ with ${f_A^{(N)}}\left( a \right) = \mathbb E_H \left[{f_{A\left| H \right.}^{(N)}}\left( {a\left| h \right.} \right)\right]$ denoting the unconditional PDF of $A$ after averaging over the fading distribution. Furthermore, it can be easily verified that $\int_{0}^{\infty} {a{{f_A^{(N)}}\left( a \right)}da} = 1$. Given ${F_A^{(N)}}\left( \bar A \right) = \tau$ with $0 < \bar A < \infty$ and $0 < \tau < 1$, it can thus be verified from Lemma \ref{Lemma_OOS_vs_TBS} that ${{\Delta^{\left( {\rm{T}} \right)}}\left( {N,\bar A} \right)} + {\Pi^{\left( {\rm{T}} \right)}}\left( {N,\bar A} \right) > 1$, by substituting $b$, ${{g_A}\left( a \right)}$, and ${{G_A}\left( a \right)}$ in Lemma \ref{Lemma_OOS_vs_TBS} by $1$, ${f_A^{(N)}}\left( a \right)$, and ${F_A^{(N)}}\left( a \right)$, respectively. Since we have ${\Delta^{({\rm{P}})}}\left( \tau  \right) + {\Pi^{({\rm{P}})}}\left( \tau  \right) = 1$ for PS from (\ref{Eq_PreLog_OOS}) and (\ref{AvgEnergy_OOS}), it follows that ${{\Delta^{\left( {\rm{T}} \right)}}\left( {N,\bar A} \right)} + {\Pi^{\left( {\rm{T}} \right)}}\left( {N,\bar A} \right) > {\Delta^{({\rm{P}})}}\left( \tau  \right) + {\Pi^{({\rm{P}})}}\left( \tau  \right)$. Therefore, we have ${{\Delta^{\left( {\rm{T}} \right)}}\left( {N,\bar A} \right)} > {\Delta^{({\rm{P}})}}\left( \tau  \right)$ for given $0 < {\Pi^{\left( {\rm{T}} \right)}}\left( {N,\bar A} \right) = {\Pi^{({\rm{P}})}}\left( \tau  \right) < 1$. This proves the former part of Theorem \ref{Theorem_Optimality_N}.

   It is worth remarking that Lemma \ref{Lemma_OOS_vs_TBS} implies that TS in general yields better trade-off between the rate and energy scaling factors than PS provided that the average received channel power for TS is the same as that for PS, based on which the former part of Theorem \ref{Theorem_Optimality_N} for the i.i.d. Rayleigh fading MISO channel with a fixed threshold $\bar A$ is proved. As another example, even for a transmission block with $H = h$, TS with $N$ RBs, $1 \le N \le N_t$, yields better trade-off between the rate and energy scaling factors than PS. This can be proved by substituting $b$, ${{g_A}\left( a \right)}$, and ${{G_A}\left( a \right)}$ in Lemma \ref{Lemma_OOS_vs_TBS} by $h$, $f_{A\left| H \right.}^{(N)}( {\bar A\left| h \right.} )$ in (\ref{Eq_TBS_pdf}), and $F_{A\left| H \right.}^{(N)}( {\bar A\left| h \right.} )$ in (\ref{Eq_TBS_CDF}), respectively. This is originated from the fact that for both TS and PS schemes the rate scaling factor is determined by the percentage of sub-blocks allocated to ID mode, whereas the energy scaling factor is determined by the percentage of sub-blocks assigned to EH mode as well as their channel power values. Note that TS scheme assigns a subset of sub-blocks with the largest channel power to EH mode, as inferred from (\ref{Eq_Opt_1_Solution}). Therefore, given a percentage of sub-blocks allocated to EH mode, i.e., $1 - {G_{A}}\left( {\bar A} \right) = 1 - \tau$ for TS and PS schemes, respectively, the energy scaling factor of TS is larger than that of PS while rate scaling factors are the same for both schemes, i.e., ${G_{A}}\left( {\bar A} \right) = \tau$.

   Next, to prove the latter part of Theorem \ref{Theorem_Optimality_N}, we consider two arbitrary distributions of $A$ with PDFs denoted by ${g_A}\left( a \right)$ and ${u_A}\left( a \right)$, and the corresponding CDFs denoted by ${G_A}\left( a \right)$ and ${U_A}\left( a \right)$, respectively. It is assumed that $\int_{0}^\infty  {a{g_A}\left( a \right)da} = \int_{0}^\infty  {a{u_A}\left( a \right)da} = b > 0$. It is further assumed that ${G_A}\left( a \right) > 0$ and ${U_A}\left( a \right) > 0$ for $a > 0$, and ${G_A}\left( a \right)$ and ${U_A}\left( a \right)$ intersect at $a = \hat A$, satisfying
   \begin{equation}\label{Eq_Condition2}
      {\left\{ {\begin{array}{*{20}{c}}
      {{G_{A}}\left( {a} \right) > {U_{A}}\left( {a} \right),\,\,\,{\rm{if}}\,\,0 < a < \hat A}  \\
      {{G_{A}}\left( {a} \right) = {U_{A}}\left( {a} \right),\,\,\,{\rm{if}}\,\,a = \hat A\,\,\,\,\,\,\,\,\,\,\,}  \\
      {{G_{A}}\left( {a} \right) < {U_{A}}\left( {a} \right),\,\,\,{\rm{if}}\,\, a > \hat A. \,\,\,\,\,\,\,\,\,}  \\
      \end{array}} \right.}
   \end{equation}

   \begin{lemma}\label{Lemma_Optimality}
      Given $0 < G_A\left( {\bar A_g} \right) = U_A\left( {\bar A_u} \right) < 1$ with $0 < \bar A_g, \bar A_u < \infty$,
      \begin{equation}\label{Eq_Lemma_TBS_Optimality}
         {\int_{\bar A_g}^\infty  {a{g_A}\left( a \right)da} > \int_{\bar A_u}^\infty  {a{u_A}\left( a \right)da}.}
      \end{equation}
   \end{lemma}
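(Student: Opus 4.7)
My plan is to change variables from $a$ to the CDF level $y$ in each of the two integrals, reducing the claim to comparing the two inverse CDFs $G_A^{-1}$ and $U_A^{-1}$ on a common interval $[\tau,1]$. Concretely, substituting $y = G_A(a)$ in the first integral and $y = U_A(a)$ in the second gives
\begin{equation}\label{Eq_PlanReduction}
   \int_{\bar A_g}^\infty a g_A(a)\,da - \int_{\bar A_u}^\infty a u_A(a)\,da = \int_\tau^1 \left[ G_A^{-1}(y) - U_A^{-1}(y) \right] dy,
\end{equation}
where $\tau = G_A(\bar A_g) = U_A(\bar A_u)$. The problem is therefore reduced to showing that the right-hand side of \eqref{Eq_PlanReduction} is strictly positive.

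The next step is to translate the single-crossing hypothesis \eqref{Eq_Condition2} on the CDFs at $\hat A$ into the corresponding single-crossing on the inverse CDFs at the level $\tau^\ast := G_A(\hat A) = U_A(\hat A)$. Because $G_A > U_A$ on $(0,\hat A)$ forces $U_A^{-1}(y) > G_A^{-1}(y)$ for $y \in (0,\tau^\ast)$, and $G_A < U_A$ on $(\hat A,\infty)$ forces $G_A^{-1}(y) > U_A^{-1}(y)$ for $y \in (\tau^\ast,1)$, the integrand $G_A^{-1}(y) - U_A^{-1}(y)$ is negative on $(0,\tau^\ast)$ and positive on $(\tau^\ast,1)$. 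The equal-mean assumption $\int_0^\infty a g_A(a)\,da = \int_0^\infty a u_A(a)\,da = b$ can be rewritten, via the same substitution applied to the full line, as $\int_0^1 [G_A^{-1}(y) - U_A^{-1}(y)]\,dy = 0$, which says that the (negative) area on $(0,\tau^\ast)$ exactly cancels the (positive) area on $(\tau^\ast,1)$.

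A clean case split then finishes the argument. If $\tau \ge \tau^\ast$, the interval of integration $(\tau,1)$ lies entirely in the region where the integrand of \eqref{Eq_PlanReduction} is positive, so the right-hand side is strictly positive. If $\tau < \tau^\ast$, I would instead invoke the zero-total-area identity to rewrite the right-hand side of \eqref{Eq_PlanReduction} as
\[
   \int_\tau^1 \left[ G_A^{-1}(y) - U_A^{-1}(y) \right] dy = \int_0^\tau \left[ U_A^{-1}(y) - G_A^{-1}(y) \right] dy,
\]
and the right-hand side here is positive since the integrand is positive on $(0,\tau) \subseteq (0,\tau^\ast)$. Strictness in both cases comes from the strict inequalities in \eqref{Eq_Condition2} on a set of positive measure.

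The main obstacle I anticipate is bookkeeping around the crossing point $\tau^\ast$: specifically, justifying the single-crossing of the inverse CDFs from the single-crossing of the CDFs requires that $G_A$ and $U_A$ be strictly increasing where they cross (so that the inverses are unambiguously defined), and care is needed to ensure the substitution yielding \eqref{Eq_PlanReduction} is valid when the supports of $g_A$ and $u_A$ differ. These points follow from the standing assumption $G_A(a),U_A(a) > 0$ for $a > 0$ together with absolute continuity of the two laws, but they should be explicitly flagged before invoking the change of variables.
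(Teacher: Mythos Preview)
Your argument is correct and takes a genuinely different route from the paper. The paper works directly in the $a$-domain: it rewrites $\tilde E = \int_{\bar A_g}^\infty a g_A(a)\,da - \int_{\bar A_u}^\infty a u_A(a)\,da$ via integration by parts as $\Delta(\bar A_u - \bar A_g) + \int_0^{\bar A_g} G_A(a)\,da - \int_0^{\bar A_u} U_A(a)\,da$, and then runs a three-case analysis depending on whether $\bar A_g,\bar A_u$ coincide with, lie below, or lie above the crossing point $\hat A$; the third case in particular is handled by a monotonicity argument in $\Delta$ (differentiating $\varepsilon$ and checking boundary values). Your approach instead passes to the quantile domain by the substitution $y = G_A(a)$ (resp.\ $y = U_A(a)$), which collapses the problem to the sign of $\int_\tau^1 [G_A^{-1}(y) - U_A^{-1}(y)]\,dy$ together with the single-crossing of the inverse CDFs and the equal-mean identity $\int_0^1 [G_A^{-1} - U_A^{-1}] = 0$. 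This reduces the case split from three to two, the two cases are symmetric, and the equal-mean hypothesis enters in a single transparent step rather than being hidden inside the integration-by-parts manipulations. The paper's approach has the minor advantage of never invoking invertibility of the CDFs, so it sidesteps the regularity caveats you flag at the end; but under the standing absolute-continuity and positivity assumptions in the surrounding text, your change of variables is justified and the overall argument is both shorter and more conceptual.
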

   \begin{proof}
      Please refer to Appendix \ref{App_Proof_Lemma_Optimality}.
   \end{proof}

   The latter part of Theorem \ref{Theorem_Optimality_N} can be proved using Lemma \ref{Lemma_Optimality} as follows. Given $1 \le N < M \le N_t$ for TS, it can be verified that $\int_0^\infty  {af_A^{(N)}\left( a \right)da}  = \int_0^\infty  {af_A^{(M)}\left( a \right)da}  = 1$. Furthermore, it can be shown from (\ref{Eq_Marginal_CDF}) that ${{F_A^{(N)}}\left( a \right)}$ and ${{F_A^{(M)}}\left( a \right)}$ correspond to ${G_A}\left( a \right)$ and ${U_A}\left( a \right)$ in (\ref{Eq_Condition2}), respectively (c.f. Fig. \ref{Fig_Marginal_CDF}). By substituting ${{f_A^{(N)}}\left( a \right)}$, ${{f_A^{(M)}}\left( a \right)}$, ${{F_A^{(N)}}\left( a \right)}$, and ${{F_A^{(M)}}\left( a \right)}$ for ${{g_A}\left( a \right)}$, ${{u_A}\left( a \right)}$, ${{G_A}\left( a \right)}$, and ${{U_A}\left( a \right)}$ in Lemma \ref{Lemma_OOS_vs_TBS}, respectively, it can be verified that ${\Pi^{\left( {\rm{T}} \right)}}\left( {N,\bar A_N} \right) > {\Pi^{\left( {\rm{T}} \right)}}\left( {M,\bar A_M} \right)$ when ${{\Delta^{\left( {\rm{T}} \right)}}\left( {N,\bar A_N} \right)} = {{\Delta^{\left( {\rm{T}} \right)}}\left( {M,\bar A_M} \right)}$, $0 < \bar A_N, \bar A_M < \infty$ since ${{\Delta^{\left( {\rm{T}} \right)}}\left( {N,\bar A_N} \right)} = {{F_A^{(N)}}\left( a \right)}$ from Proposition \ref{Proposition_Avg_Scaling} and ${{\Pi^{\left( {\rm{T}} \right)}}\left( {N,\bar A_N} \right) = \int_{\bar A_N}^{\infty} {a{{f_A^{(N)}}\left( a \right)}da}}$. This guarantees that ${{\Delta^{\left( {\rm{T}} \right)}}\left( {N,\bar A_N} \right)} > {{\Delta^{\left( {\rm{T}} \right)}}\left( {M,\bar A_M} \right)}$ for given $0 < {\Pi^{\left( {\rm{T}} \right)}}\left( {N,\bar A_N} \right) = {\Pi^{\left( {\rm{T}} \right)}}\left( {M,\bar A_M} \right) < 1$, since both ${{\Delta^{\left( {\rm{T}} \right)}}\left( {N,\bar A_N} \right)}$ and ${{\Delta^{\left( {\rm{T}} \right)}}\left( {M,\bar A_M} \right)}$ decrease monotonically with increasing ${\Pi^{\left( {\rm{T}} \right)}}\left( {N,\bar A_N} \right)$ and ${\Pi^{\left( {\rm{T}} \right)}}\left( {M,\bar A_M} \right)$, respectively. This proves the latter part of Theorem \ref{Theorem_Optimality_N}.

   As a remark, Lemma \ref{Lemma_Optimality} compares the trade-offs between the rate and energy scaling factors in TS schemes with two different distributions of channel power induced by different values of $N$ provided that both distributions have the same average channel power and satisfy the condition in (\ref{Eq_Condition2}). The latter part of Theorem \ref{Theorem_Optimality_N} for the i.i.d Rayleigh fading MISO channel with a fixed $\bar A$ is one application of Lemma \ref{Lemma_Optimality}. As another example, even for a transmission block with $H = h$, better trade-off between the rate and energy scaling factors is attained with $N$ than $M$ RBs, $1 \le N < M \le N_t$, since $F_{A\left| H \right.}^{(N)}\left( {a\left| h \right.} \right)$ and $F_{A\left| H \right.}^{(M)}\left( {a\left| h \right.} \right)$ correspond to ${G_A}\left( a \right)$ and ${U_A}\left( a \right)$ in (\ref{Eq_Condition2}), respectively, as shown from (\ref{Eq_TBS_CDF}). This is due to the fact that the artificial channel fading is more substantial when smaller number of RBs is employed, and the argument similarly as for Lemma \ref{Lemma_OOS_vs_TBS}.

   Combining the proofs for both the above two parts, Theorem \ref{Theorem_Optimality_N} is proved.

   \section{Proof of Lemma \ref{Lemma_OOS_vs_TBS}}\label{App_Proof_Lemma_TBS_vs_OOS}
   Integrating by part, $\int_0^{\bar A} {a{g_A}\left( a \right)da}$ can be evaluated as
   \begin{equation}\label{App_Integration_by_Part}
      {\int_0^{\bar A} {a{g_A}\left( a \right)da}  = \bar A{G_A}\left( {\bar A} \right) - \int_0^{\bar A} {{G_A}\left( a \right)da}.}
   \end{equation}

   Assume that $\bar A$ is given such that ${G_{A}}\left( {\bar A} \right) = \tau$, $0 < \tau < 1$. From (\ref{Eq_Lemma_OOS_vs_TBS}) and (\ref{App_Integration_by_Part}), we have
   \[ \tilde E = \int_{\bar A}^\infty  {a{g_A}\left( a \right)da}  - \left( {1 - \tau } \right)b \,\,\,\,\,\,\,\,\,\,\,\,\,\,\,\,\,\,\,\,\,\,\,\,\,\,\,\,\,\,\,\,\,\,\,\,\,\, \]
   \[ \,\,\,\,\,\,\,\,\,\,\,\,\,\,\,\,\,\,\,\,\,\,\,\,\,\,\,\,\, = \int_0^\infty  {a{g_A}\left( a \right)da}  - \int_0^{\bar A} {a{g_A}\left( a \right)da}  - b\left( {1 - {G_A}\left( {\bar A} \right)} \right)\]
   \begin{equation}\label{App_EnergyGap_OOS}
      { = \left( {b - \bar A} \right){G_A}\left( {\bar A} \right) + \int_0^{\bar A} {{G_A}\left( {\bar A} \right)da,} \,\,\,\,\,\,\,\,\,\,\,\,\, }
   \end{equation}
   and the resulting derivative
   \begin{equation}\label{App_Differentiation_OOS}
      {\mu = \frac{{d\tilde E}}{{d\bar A}} = \left( {b - \bar A} \right){g_A}\left( \bar A \right),}
   \end{equation}
   where $\mu = 0$ is achieved at $\bar A = b$.

   From (\ref{App_EnergyGap_OOS}) and (\ref{App_Differentiation_OOS}), it is observed that $\tilde E > 0$ with $0 < \bar A \le b$, and $\tilde E$ increases monotonically with $\bar A$ until $\bar A = b$. For $\bar A > b$, it is observed that $\mu < 0$ and thus $\tilde E$ decreases monotonically with increasing $\bar A$. Because $\mathop {\lim }\limits_{\bar A \to \infty } \int_{\bar A}^\infty  {a{g_A}\left( a \right)da}  = 0$, it can be shown that  $\tilde E > 0$ even with $\bar A > b$. Therefore, it is verified that $\tilde E > 0$ in (\ref{App_EnergyGap_OOS}) for any $\bar A > 0$. This completes the proof of Lemma \ref{Lemma_OOS_vs_TBS}.

\clearpage

   \section{Proof of Lemma \ref{Lemma_Optimality}}\label{App_Proof_Lemma_Optimality}
   Denote $\Delta = G_A( {\bar A_g} ) = U_A( {\bar A_u} )$, $0 < \bar A_g, \bar A_u < \infty$. From (\ref{Eq_Lemma_TBS_Optimality}) and (\ref{App_Integration_by_Part}), we have
   \[\tilde E = \int_{{{\bar A}_g}}^\infty  {a{g_A}\left( a \right)da}  - \int_{{{\bar A}_u}}^\infty  {a{u_A}\left( a \right)da} \,\,\,\,\,\,\,\,\,\,\,\,\,\,\,\,\,\,\,\,\,\, \]
   \begin{equation}\label{App_EnergyGap_Temp}
      { = \int_0^{{{\bar A}_u}} {a{u_A}\left( a \right)da}  - \int_0^{\bar A_g} {a{g_A}\left( a \right)da} \,\,\,\,\,\,\,\,\,\,\,\,\,\, }
   \end{equation}
   \begin{equation}\label{App_EnergyGap_General}
      { \,\,\,\,\,\,\,\,\,\,\,\,\,\,\,\,\,\,\,\,\,\,\,\, = \Delta \left( {{{\bar A}_u} - {{\bar A}_g}} \right) + \int_0^{{{\bar A}_g}} {{G_{A}}\left( {a} \right)da}  - \int_0^{{{\bar A}_u}} {{U_{A}}\left( {a} \right)da}.}
   \end{equation}

   According to (\ref{Eq_Condition2}), there are three cases addressed as follows for given $0 < \Delta < 1$.

      \emph{1)} ${{\bar A}_g} = {{\bar A}_u} = \hat A$: According to (\ref{Eq_Condition2}), $G_A ( {\hat A} ) = U_A ( {\hat A}  ) = \hat \Delta$. Since it is assumed in (\ref{Eq_Condition2}) that ${G_{A}}( {a} ) > {U_{A}}( {a} )$ with $0 < a < \hat A$, $\tilde E$ is evaluated from (\ref{App_EnergyGap_General}) as
      \begin{equation}\label{App_EnergyGap_Equal}
         {\tilde E = \int_0^{\hat A} {\left( {{G_{A}}\left( {a} \right) - {U_{A}}\left( {a} \right)} \right)da}  > 0.}
      \end{equation}

      \emph{2)} $0 < {{\bar A}_g}, {{\bar A}_u} < \hat A$: It can be inferred from (\ref{Eq_Condition2}) that ${{\bar A}_g} < {{\bar A}_u} < \hat A$, which results in $0 < \Delta < \hat \Delta$. From (\ref{App_EnergyGap_General}), we have
      \begin{equation}\label{App_EnergyGap_Smaller}
         {\tilde E = \underbrace {U_A\left( \bar A_u \right) \left( {{{\bar A}_u} - {{\bar A}_g}} \right) - \int_{{{\bar A}_g}}^{{{\bar A}_u}} {{U_{A}}\left( {a} \right)da} }_{\buildrel \Delta \over = \,\, \beta}  + \underbrace {\int_0^{{{\bar A}_g}} {\left( {{G_{A}}\left( {a} \right) - {U_{A}}\left( {a} \right)} \right)da} }_{\buildrel \Delta \over = \,\, \alpha} .}
      \end{equation}
      Since ${{\bar A}_g} < {{\bar A}_u}$ and ${G_{A}}( {a} ) > {U_{A}}( {a} )$ with $0 < a < \hat A$, it can be verified that $\alpha > 0$ and $\beta > 0$, and thus $\tilde E > 0$.

      \emph{3)} $\hat A < {{\bar A}_g}, {{\bar A}_u} < \infty$: It can be inferred from (\ref{Eq_Condition2}) that $\hat A < {{\bar A}_u} < {{\bar A}_g}$, which results in $\hat \Delta < \Delta < 1$. From (\ref{App_EnergyGap_Temp}), we have
      \begin{equation}\label{App_EnergyGap_Larger}
         {\tilde E = \underbrace {\int_0^{\hat A} {a\left( {{u_{A}}\left( {a} \right) - {g_{A}}\left( {a} \right)} \right)da} }_{\buildrel \Delta \over = \,\, \delta}  - \underbrace {\left( {\int_{\hat A}^{{{\bar A}_g}} {a{g_{A}}\left( {a} \right)da}  - \int_{\hat A}^{{{\bar A}_u}} {a{u_{A}}\left( {a} \right)da} } \right)}_{\buildrel \Delta \over = \,\, \varepsilon} ,}
      \end{equation}
      with $\delta > 0$ as shown in (\ref{App_EnergyGap_Temp})-(\ref{App_EnergyGap_Equal}). In addition, it can be verified that
      \begin{equation}\label{App_Limitation1}
         {\mathop {\lim }\limits_{\Delta  \to \hat \Delta } \varepsilon  = \mathop {\lim }\limits_{{{\bar A}_g} \to \hat A} \int_{\hat A}^{{{\bar A}_g}} {a{g_{A}}\left( {a} \right)da}  - \mathop {\lim }\limits_{{{\bar A}_u} \to \hat A} \int_{\hat A}^{{{\bar A}_u}} {a{u_{A}}\left( {a} \right)da}  = 0,}
      \end{equation}
      \begin{equation}\label{App_Limitation1}
         {\mathop {\lim }\limits_{\Delta  \to 1} \varepsilon  = \mathop {\lim }\limits_{\scriptstyle {{\bar A}_g} \to \infty , \hfill \atop
         \scriptstyle {{\bar A}_u} \to \infty  \hfill} \varepsilon  = \int_0^{\hat A} {a{u_{A}}\left( {a} \right)da}  - \int_0^{\hat A} {a{g_{A}}\left( {a} \right)da}  = \delta .}
      \end{equation}
      Since $\frac{{d\Delta }}{{d{{\bar A}_g}}} = \frac{d}{{d{{\bar A}_g}}}{G_{A}}\left( {\bar A_g} \right) = {g_{A}}\left( {\bar A_g} \right)$ and $\frac{{d\Delta }}{{d{{\bar A}_u}}} = \frac{d}{{d{{\bar A}_u}}}{U_{A}}\left( {\bar A_u} \right) = {u_{A}}\left( {\bar A_u} \right)$, we have
      \[ \frac{d}{{d\Delta }}\varepsilon  = \frac{1}{{\frac{{d\Delta }}{{d{{\bar A}_g}}}}}\frac{d}{{d{{\bar A}_g}}}\int_{\hat A}^{{{\bar A}_g}} {a{g_{A}}\left( {a} \right)da}  - \frac{1}{{\frac{{d\Delta }}{{d{{\bar A}_u}}}}}\frac{d}{{d{{\bar A}_u}}}\int_{\hat A}^{{{\bar A}_u}} {a{u_{A}}\left( {a} \right)da} \]
      \begin{equation}\label{App_Differentiation}
         {\,\,\,\, \,\,\,\, = {{\bar A}_g} - {{\bar A}_u} > 0. \,\,\,\,\,\,\,\,\,\,\,\,\,\,\,\,\,\,\,\,\,\,\,\,\,\,\,\,\,\,\,\,\,\,\,\,\,\,\,\,\,\,\,\,\,\,\,\,\,\,\,\,\,\,\,\,\,\,\,\,\,\,\,\,\,\,\,\,\,\,\,\,\,\,\,\,\,\,\,\,\,\,\,\,\,\,\,\,\,\,\,\,}
      \end{equation}
      From (\ref{App_EnergyGap_Larger})-(\ref{App_Differentiation}), it can be verified that ${\tilde E}$ monotonically decreases from $\delta$ to $0$ with increasing $\Delta$ with $\hat \Delta < \Delta < 1$, i.e., ${\tilde E} > 0$ with $\hat A < {{\bar A}_g}, {{\bar A}_u} < \infty$.

   Combining the above three cases, Lemma \ref{Lemma_Optimality} is thus proved.

   \section{Proof of Proposition \ref{Proposition_EnergyDiversity_TBS}}\label{App_Proof_Proposition_EnergyDiversity_TBS}
   Given a transmission block with $H = h$, ${Q^{(\rm{T})}}\left( {h,N,0} \right) = \theta Ph$ from (\ref{Eq_Energy_TBS3}). In the i.i.d. Rayleigh fading MISO channel, given $\hat Q > 0$, $p_{Q,out}^{({\rm{T}})}$ for TS with $\bar A = 0$ can be approximated as $P \to \infty$ by $\mathop {\lim }\limits_{P \to \infty } \Pr\left( {h < \frac{\hat Q}{\theta P}} \right) = \mathop {\lim }\limits_{P \to \infty } F_{H}\left( {\frac{\hat Q}{\theta P}} \right) = {\left( {\frac{\hat Q}{{\theta P }}} \right)^{ N_t}}$, since ${\mathop {\lim }\limits_{h \to 0 } F_{H}\left( {h} \right) = {h ^{N_t}}}$ in (\ref{Eq_CDF_H}). This proves the first equality in (\ref{Eq_EnergyDiversity_TBS}) for $\bar A = 0$.

   When $\bar A > 0$, the harvested power per block for a given $h$, ${Q^{(\rm{T})}}\left( {h,N,\bar A} \right)$ in (\ref{Eq_Energy_TBS3}), is a monotonically increasing function of $h$, since ${\Gamma \left( {\alpha,x} \right)}$ is a monotonically decreasing function of $x$. For a given power requirement $\hat Q > 0$, denote $\bar h$ as the minimum value of $h$ such that ${Q^{(\rm{T})}}\left( {\bar h,N,\bar A} \right) \ge \hat Q$, i.e.,
   \begin{equation}\label{App_g_hbar}
      {\vartheta\left( {\bar h} \right) \buildrel \Delta \over = \bar h\,\frac{{\Gamma \left( {N + 1,N\bar A/\bar h} \right)}}{{\Gamma \left( {N + 1} \right)}} = \frac{\hat Q}{\theta P}.}
   \end{equation}
   Thus, the power outage probability for TS with given $N$, $\bar A$, and $\hat Q$ is obtained as $p_{Q, \,out}^{({\rm{T}})}( {N, \bar A,\hat Q} ) = F_H(\bar h)$. Since $\vartheta\left( {h} \right)$ increases with $h$, with $P \to \infty$ it then follows from (\ref{App_g_hbar}) that $\bar h \to 0$, under which we have
   \begin{equation}\label{App_q_h_temp}
      {\vartheta\left( \bar h \right) = \bar h {e^{ - \left( {N\bar A/\bar h} \right)}}{\sum\limits_{k = 0}^N {\frac{1}{{k!}}\left( {\frac{{N\bar A}}{\bar h}} \right)} ^k}}
   \end{equation}
   \begin{equation}\label{App_q_h}
      {\,\,\,\,\,\,\,\,\,\,\,\, \approx \bar h {e^{ - \left( {N\bar A/\bar h} \right)}}\frac{1}{{N!}}{\left( {\frac{{N\bar A}}{\bar h}} \right)^N},}
   \end{equation}
   where, since $N \ge 1$ is an integer, (\ref{App_q_h_temp}) is obtained from \cite{Proakis}
   \[
      {{\Gamma \left( {N,x} \right)} = (N - 1)!\,{e^{ - x}}{\sum\limits_{m = 0}^{N-1} {\frac{x^m}{{m!}}} }.}
   \]
   Since $\ln \vartheta\left( {\bar h} \right) = \ln \frac{\hat Q}{\theta P}$, from (\ref{App_q_h}), we have
   \begin{equation}\label{App_EnergyOutage1}
      {{\left( {N - 1} \right)\ln x - N\bar Ax = \ln \left( {\frac{{\hat QN!}}{{\theta P{{\left( {N\bar A} \right)}^N}}}} \right)},}
   \end{equation}
   where $x = 1/\bar h$. With $\bar h \to 0$, i.e., $x \to \infty$, the left-hand side of (\ref{App_EnergyOutage1}) can be further approximated as $- N\bar Ax$. Therefore, $\bar h$ as $P \to \infty$ can be approximated by
   \begin{equation}\label{App_EnergyOutage2}
      {\bar h = {N\bar A{{\left( {\ln \left( {\frac{{\theta P{{\left( {N\bar A} \right)}^N}}}{{\hat QN!}}} \right)} \right)}^{ - 1}}} \approx {N\bar A {{\left( {\ln \left( \theta P \right)} \right)}^{ - 1}}}.}
   \end{equation}
   From (\ref{App_EnergyOutage2}) and the fact that ${F_H}\left( h \right) \approx {h^{ N_t}}$ as $h \to 0$, we obtain the second equality in (\ref{Eq_EnergyDiversity_TBS}) for $\bar A > 0$.

   From the proofs for both the first and second equalities in (\ref{Eq_EnergyDiversity_TBS}), Proposition \ref{Proposition_EnergyDiversity_TBS} is thus proved.

   \section{Derivation of (\ref{Eq_BRB_EnergyOutage})}\label{App_Derivation_P_E_out_B}
   From (\ref{Eq_AS_Energy}), the energy outage probability of TS-B is obtained as
   \begin{equation}\label{App_Outage_Total}
      {p_{Q,out}^{({\rm{B}})} = \Pr \left( {v < \bar A} \right) + \Pr \left( {w \le \bar A \le v, \, \frac{{\theta Pv}}{2} < \hat Q} \right) + \Pr \left( {w > \bar A, \, \frac{{\theta P\left( {v + w} \right)}}{2} < \hat Q} \right).}
   \end{equation}

   First, $\Pr \left( {v < \bar A} \right)$ in (\ref{App_Outage_Total}) is given by
   \begin{equation}\label{App_Outage_Case1}
      {\Pr \left( {v < \bar A} \right) = {F_V}\left( \bar A \right) = {\left( {1 - {e^{ - \bar A}}} \right)^2},}
   \end{equation}
   where ${F_V}\left( v \right)$ denotes the CDF of $V = \max ( \, {{{\left| {{h_1}} \right|}^2},{{\left| {{h_2}} \right|}^2}} )$, given by ${F_V}\left( v \right) = {\left( {1 - {e^{ - v}}} \right)^2}$, since both ${{\left| {{h_1}} \right|}^2}$ and ${{\left| {{h_2}} \right|}^2}$ are independent exponential random variables.

   The second term in (\ref{App_Outage_Total}) can be obtained as
   \[\Pr \left( {w \le \bar A \le v, \, \frac{{\theta Pv}}{2} < \hat Q} \right) = \Pr \left( {w \le \bar A, \, \bar A \le v \le 2D, \, w \le v} \right) \,\,\,\,\,\,\,\,\,\,\,\,\,\,\,\,\,\,\,\,\,\,\,\,\,\,\,\,\,\,\,\,\,\,\,\,\,\,\,\,\,\,\,\,\,\,\,\,\, \]
   \[ \,\,\,\,\,\,\,\,\,\,\,\,\,\,\,\,\,\,\,\,\,\,\,\,\,\,\,\,\,\,\,\,\,\,\,\,\,\,\,\, = {\bf{1}}\left( {\bar A < 2D} \right) \int_{\bar A}^{2D} {\int_0^{\bar A} {{f_{V,W}}\left( {v,w} \right)dwdv} } \]
   \begin{equation}\label{App_Outage_Case2}
      { \,\,\,\,\,\,\,\,\,\,\,\,\,\,\,\,\,\,\,\,\,\,\,\,\,\,\,\,\,\,\,\,\,\,\,\,\,\,\,\,\,\,\,\,\,\,\,\,\,\,\,\,\,\,\,\,\,\,\,\,\,\,\,\,\,\,\,\, = {\bf{1}}\left( {\bar A < 2D} \right) 2{e^{ - 2\left( {\bar A + D} \right)}}\left( { - 1 + {e^{\bar A}}} \right)\left( { - {e^{\bar A}} + {e^{2B}}} \right),}
   \end{equation}
   where $D = \frac{\hat Q}{\theta P}$ and ${f_{V,W}}\left( {v,w} \right)$ denotes the joint PDF for $V$ and $W$ given by ${f_{V,W}}\left( {v,w} \right) = 2{e^{ - v}}{e^{ - w}}$, $v > w$.

   Finally, the last term in (\ref{App_Outage_Total}) can be obtained as
   \[ \Pr \left( {w > \bar A, \, \frac{{\theta P\left( {v + w} \right)}}{2} < \hat E} \right)  \,\,\,\,\,\,\,\,\,\,\,\,\,\,\,\,\,\,\,\,\,\,\,\,\,\,\,\,\,\,\,\,\,\,\,\,\,\,\,\,\,\,\,\,\,\,\,\,\,\,\,\,\,\,\,\,\,\,\, \]
   \[ = \Pr \left( {w > \bar A, \, v + w < 2D } \right) \,\,\,\,\,\,\,\,\,\,\,\,\,\,\,\,\,\,\,\,\,\,\,\,\,\,\,\,\,\,\,\,\,\,\,\,\,\,\,\,\,\,\,\,\,\,\,\,\,\,\,\,\,\,\,\,\,\,\,\,\,\,\,\,\,\,\,\,\,\,\,\,\,\,\,\,\,\,\, \]
   \[ \,\,\,\,\,\,\,\,\,\,\,\,\,\,\,\,\,\,\,\,\,\,\,\,\,\,\,\,\,\,\,\,\,\,\,\,\,\,\,\,\,\,\,\,\,\,\,\,\,\, = {\bf{1}}\left( {\bar A < D} \right) \left( \int_{\bar A}^B {\int_{\bar A}^v {{f_{V,W}}\left( {v,w} \right)dwdv} }  + \int_B^{2B - \bar A} {\int_{\bar A}^{2B - v} {{f_{V,W}}\left( {v,w} \right)dwdv} }  \right)  \]
   \begin{equation}\label{App_Outage_Case3}
      { \,\,\,\,\,\,\,\,\,\,\,\,\,\,\,\,\,\,\,\,\,\,\,\,\,\,\,\,\,\,\,\,\,\,\,\,\,\,\,\,\,\,\,\,\,\,\,\,\,\,\,\,\,\,\,\,\,\,\,\, = {\bf{1}}\left( {\bar A < D} \right)\left( {{e^{ - 2\left( {\bar A + D} \right)}}{{\left( {{e^{\bar A}} - {e^D}} \right)}^2} + {e^{ - \bar A - 2D}}\left( {\left( { - 1 + \bar A - D} \right){e^{\bar A}} + {e^D}} \right)} \right).}
   \end{equation}

   From (\ref{App_Outage_Total})-(\ref{App_Outage_Case3}), we can obtain (\ref{Eq_BRB_EnergyOutage}).


\begin{thebibliography}{1}
\bibliographystyle{IEEEbib}

   \bibitem{Zhou}
      X. Zhou, R. Zhang, and C. K. Ho, ``Wireless information and power transfer: architecture design and rate-energy tradeoff,'' \emph{IEEE Trans. Commun.}, vol. 61, no. 11, pp. 4757-4767, Nov. 2013.

   \bibitem{Liu}
      L. Liu, R. Zhang, and K. C. Chua, ``Wireless information transfer with opportunistic energy harvesting,''  \emph{IEEE Trans. Wireless Commun.}, vol. 12, no. 1, pp. 288-300, Jan. 2013.

   \bibitem{Liang}
      L. Liu, R. Zhang, and K. C. Chua, ``Wireless information and power transfer: a dynamic power splitting approach,'' \emph{IEEE Trans. Commun.}, vol. 61, no. 9, pp. 3990-4001, Sep. 2013.

   \bibitem{Caspers}
      E. P. Caspers, S. H. Yeung, T. K. Sarkar, A. Garcia-Lamperez, M. S. Palma, M. A. Lagunas, and A. Perez-Neira, ``Analysis of information and power transfer in wireless communications,'' \emph{IEEE Antennas and Propagation Magazine}, vol. 55, no. 3, pp.82-95, June 2013.

   \bibitem{Zhang}
      R. Zhang and C. K. Ho, ``MIMO broadcasting for simultaneous wireless information and power transfer,'' \emph{IEEE Trans. Wireless Commun.}, vol. 12, no. 5, pp. 1989-2001, May, 2013.

   \bibitem{Xiang}
      Z. Xiang and M. Tao, ``Robust beamforming for wireless information and power transmission,'' \emph{IEEE Wireless Commun. Letters}, vol. 1, no. 4, pp. 372-375, Aug. 2012.

   \bibitem{Chalise}
      B. K. Chalise, Y. D. Zhang, and M. G. Amin, ``Energy harvesting in an OSTBC based amlify-and-forward MIMO relay system,'' in \emph{Proc. IEEE Int. Conf. on Acousitcs, Speech, and Signal Process. (ICASSP)}, pp. 3201-3204, Mar. 2012.

   \bibitem{Xu}
      J. Xu, L. Liu, and R. Zhang, ``Multiuser MISO beamforming for simultaneous wireless information and power transfer,'' in \emph{Proc. IEEE Int. Conf. on Acousitcs, Speech, and Signal Process. (ICASSP)}, May 2013.

   \bibitem{Timotheou}
      S. Timotheou, I. Krikidis, and B. Ottersten, ``MISO interference channel with QoS and RF energy harvesting constraints,'' in \emph{Proc. IEEE Int. Conf. Commun. (ICC)}, June 2013.

   \bibitem{Park}
      J. Park and B. Clerckx, ``Joint wireless information and energy transfer in a two-user MIMO interference channel,'' \emph{IEEE Trans. Wireless Commun.}, vol. 12, no. 8, pp. 4210-4221, Aug. 2013.

   \bibitem{Gurakan}
      B. Gurakan, O. Ozel, J. Yang, and S. Ulukus, ``Energy cooperation in energy harvesting wireless systems,'' in \emph{Proc. IEEE Int. Symp. Inf. Theory (ISIT)}, pp. 965-969, July 2012.

   \bibitem{Narir}
      A. A. Nasir, X. Zhou, S. Durrani, and R. A. Kennedy, ``Relaying protocols for wireless energy harvesting and information processing,'' \emph{IEEE Trans. Wireless Commun.}, vol. 12, no. 7, pp. 3622-3636, July 2013.

   \bibitem{Ng}
      D. W. K. Ng, E. S. Lo, and R. Schober, ``Energy-efficient resource allocation in multiuser OFDM systems with wireless information and power transfer,'' in \emph{Proc. IEEE Wireless Commun. and Networking Conf. (WCNC)}, Apr. 2013.

   \bibitem{Huang}
      K. Huang and E. G. Larsson, ``Simultaneous information and power transfer for braodband wireless systems,'' in \emph{Proc. IEEE Int. Conf. on Acousitcs, Speech, and Signal Process. (ICASSP)}, May 2013.

   \bibitem{Zhou2}
      X. Zhou, R. Zhang, and C. K. Ho, ``Wireless information and power transfer in multiuser OFDM systems,'' submitted for publication. (available on-line at arXiv:1308.2462)


   \bibitem{Viswanath}
      P. Viswanath, D. N. C. Tse, and R. Laroia, ``Opportunistic beamforming using dumb antennas,'' \emph{IEEE Trans. Inf. Theory}, vol. 48, no. 6, pp. 1277-1294, June 2002

   \bibitem{Sharif}
      M. Sharif and B. Hassibi, ``On the capacity of MIMO broadcast channels with partial side information,'' \emph{IEEE Trans. Inf. Theory}, vol. 51, no. 2, pp. 506-522, Feb. 2005.

   \bibitem{Alon}
      N. Alon, J. Edmonds, and M. Luby, ``Linear time erasure codes with nearly optimal recovery,'' in \emph{Proc. the 36th IEEE Annual Symp. Foundations of Computer Science}, pp. 512-519, Oct. 1995.

   \bibitem{Alamouti}
      S. M. Alamouti, ``A simple transmit diversity technique for wireless communications,'' \emph{IEEE J. Sel. Areas Commun.}, vol. 16, no. 8, pp. 1277-1294, Oct. 1998.

   \bibitem{Proakis}
      J. G. Proakis, Digital Communication, 4th edition, McGraw-Hill, 2000.

   \bibitem{TableOfIntegral}
      I. S. Gradshteyn and I. I. Ryshik, \emph{Table of Integrals, Series, and Products}, Academia Press, sixth edition, 2000.

   \bibitem{Liberti}
      J. C. Liberti and T. S. Rappaport, ``Statistics of shadowing in indoor radio channels at 900 and 1900 MHz,'' in \emph{Proc. IEEE Military Communications Conference}, vol. 3, pp. 1066-1070, Oct. 1992.

   \bibitem{Hassibi}
      B. Hassibi and T. L. Marzetta, ``Multiple-antennas and isotropically random unitary inputs: the received signal density in closed form,'' \emph{IEEE Trans. Inf. Theory}, vol. 48, no. 6, pp. 1473-1484, June 2002.

   \bibitem{Geddes}
      K. O. Geddes, M. L. Glasser, R. A. Moore, and T. C. Scott, ``Evaluation of classes of definite integrals involving elementary functions via differentiation of special functions'', Applicable Algebra in Engineering, Communication and Computing (AAECC), vol. 1, pp. 149-165, 1990.

\end{thebibliography}
\end{document}